\pgfplotsset{compat=newest}
\newcommand{\ket}[1]{\lvert #1 \rangle}
\newcommand{\bra}[1]{\langle #1 \rvert}
\newcommand{\ketbra}[2]{\lvert #1 \rangle\!\langle #2 \rvert}
\newcommand{\mixed}[1]{\,\overline{\mathrm{id}}_{#1}}
\theoremstyle{plain}
\newtheorem{thm}{Theorem}[section]
\newtheorem{lem}[thm]{Lemma}
\newtheorem{cor}[thm]{Corollary}
\newenvironment{customthm}[1]
{\innercustomthm}
{\endinnercustomthm}
\theoremstyle{definition}
\newtheorem{defn}[thm]{Definition}
\newtheorem{rem}[thm]{Remark}
\newtheorem{notation}[thm]{Notation}
\definecolor{LightGray}{RGB}{220,220,220}
\definecolor{myred}{RGB}{255, 19, 0}
\definecolor{myblue}{RGB}{14, 81, 167}
\definecolor{myorange}{RGB}{255, 129, 0}
\definecolor{mygreen}{RGB}{0, 146, 44}
\crefname{defn}{Definition}{Definitions}
\Crefname{defn}{Definition}{Definitions}
\crefname{notation}{Notation}{Notations}
\Crefname{notation}{Notation}{Notations}
\crefname{thm}{Theorem}{Theorems}
\Crefname{thm}{Theorem}{Theorems}
\crefname{lem}{Lemma}{Lemmas}
\Crefname{lem}{Lemma}{Lemmas}
\crefname{rem}{Remark}{Remarks}
\Crefname{rem}{Remark}{Remarks}
\crefname{prop}{Proposition}{Propositions}
\Crefname{prop}{Proposition}{Propositions}
\crefname{cor}{Corollary}{Corollaries}
\Crefname{cor}{Corollary}{Corollaries}
\crefname{section}{Section}{Sections}
\Crefname{section}{Section}{Sections}
\crefname{equation}{}{}
\Crefname{equation}{}{}
\crefname{figure}{Figure}{Figures}
\Crefname{figure}{Figure}{Figures}
\crefname{appendix}{Appendix}{Appendices}
\Crefname{appendix}{Appendix}{Appendices}
\crefname{claim}{Claim}{Claims}
\Crefname{claim}{Claim}{Claims}
\crefname{innercustomthm}{Claim}{Claims}
\Crefname{innercustomthm}{Claim}{Claims}
\title{Commuting operations factorise}
\author{Renato Renner}
\email{renner@ethz.ch}
\author{Ramona Wolf}
\email{rawolf@phys.ethz.ch}
\address{Institute for Theoretical Physics, ETH Zurich, Zurich, Switzerland}
\address{Quantum Center, ETH Zurich, Zurich, Switzerland}
\begin{document}

\begin{abstract}
	Consider two agents, Alice and Bob, each of whom takes a quantum input, operates on a shared quantum system~$K$, and produces a quantum output. Alice and Bob's operations may commute, in the sense that the joint input-output behaviour is independent of the order in which they access~$K$. Here we ask whether this commutation property implies that $K$ can be split into two factors on which Alice and Bob act separately. The question can be regarded as a ``fully quantum'' generalisation of a problem posed by Tsirelson, who considered the case where Alice and Bob's inputs and outputs are classical. In this case, the answer is negative in general, but it is known that a factorisation exists in finite dimensions. Here we show the same holds in the fully quantum case, i.e., commuting operations factorise, provided that all input systems are finite-dimensional.
\end{abstract}

\maketitle


\section{Introduction}


Let $K$ be a quantum system accessible to two agents, Alice and Bob, each operating on it once. Alice's operation $\mathcal{X}_{I, A}$ depends on an input~$I$ and generates an output~$A$. Similarly, Bob's operation $\mathcal{Y}_{J, B}$ depends on~$J$ and generates~$B$. We say that Alice and Bob's operations \emph{commute} if the order in which $\mathcal{X}_{I, A}$ and $\mathcal{Y}_{J, B}$ are applied has no influence on the outputs $A$ and $B$, and on how they depend on the inputs $I$ and $J$. Physical considerations often imply commutation.  For example, it holds if Alice and Bob's operations are executed at spacelike separation in a relativistic spacetime. Here we ask the following question:

\begin{quote}
\emph{Does commutation between the maps $\mathcal{X}_{I, A}$ and $\mathcal{Y}_{J, B}$ imply the existence of a factorisation of $K$ such that the maps act nontrivially solely on separate factors?}
\end{quote}

This paper aims to provide a precise formulation of this question for the generic case where the inputs and outputs $I$, $J$, $A$, and~$B$ are arbitrary quantum systems (cf.~\cref{fig:simplifiedscenario}). Our question can thus be understood as a ``fully quantum'' version of \emph{Tsirelson's problem}, which corresponds to the special case where these systems are all classical random variables (and these take values from finite sets). In the latter, $\mathcal{X}_{I, A}$ represents a measurement on~$K$ that depends on a classical choice, encoded in $I$, and outputs a classical result, encoded in~$A$, and likewise for $\mathcal{Y}_{J, B}$. Tsirelson's problem was initially posed in \cite{Tsirelson1993}. While it was prematurely claimed that it always has a positive answer, the proof, also by Tsirelson, assumes that $K$ is finite-dimensional~\cite{Tsirelson2006}. While this assumption can be relaxed~\cite{Scholz2008}, the answer to Tsirelson's problem is negative if it is dropped completely \cite{Ji2021} (see~\cite{Cabello2023} for an overview).

The main technical contribution of this work is a proof that answers the general question above affirmatively under the assumptions that the map $\mathcal{X}_{I, A}$ or the map $\mathcal{Y}_{J, B}$ is unital on $K$, which is automatically satisfied in Tsirelson's problem, and that the systems $I$, $J$, and $K$ are finite-dimensional.

\begin{figure}[t]
	\centering
	\begin{tikzpicture}[baseline=(current bounding box.center),scale=1.15,xscale=1.1]
		\draw[fill=gray!30] (0,0) rectangle (1,0.75);
		\draw[fill=gray!30] (0.7,1.25) rectangle (1.7,2);
		\node at (0.5,0.375) {\small $\mathcal{X}_{I, A}$};
		\node at (1.2,1.625) {\small $\mathcal{Y}_{J, B}$};
		\draw[thick,->,>=stealth] (0.15,-0.5) -- (0.15,0);
		\draw[thick,->,>=stealth] (0.85,-0.5) -- (0.85,0);
		\draw[thick,->,>=stealth] (0.15,0.75) -- (0.15,2.5);
		\draw[thick,->,>=stealth] (1.55,-0.5) -- (1.55,1.25);
		\draw[thick,->,>=stealth] (0.85,0.75) to node[right] {\small $K$} (0.85,1.25);
		\draw[thick,->,>=stealth] (0.85,2) to node[right] {\small $K$} (0.85,2.35);
		\draw[thick,->,>=stealth] (1.55,2) -- (1.55,2.5);
		\draw[very thick] (0.7,2.35) -- (1,2.35);
		\node at (0.15,2.65) {\small $A$};
		\node at (1.55,2.65) {\small $B$};
		\node at (0.15,-0.65) {\small $I$};
		\node at (0.85,-0.65) {\small $K$};
		\node at (1.55,-0.65) {\small $J$};
	\end{tikzpicture}
	\ \ =\ \ 
	\begin{tikzpicture}[xscale=-1,baseline=(current bounding box.center),scale=1.15,xscale=1.1]
		\draw[fill=gray!30] (0,0) rectangle (1,0.75);
		\draw[fill=gray!30] (0.7,1.25) rectangle (1.7,2);
		\node at (0.5,0.375) {\small $\mathcal{Y}_{J, B}$};
		\node at (1.2,1.625) {\small $\mathcal{X}_{I, A}$};
		\draw[thick,->,>=stealth] (0.15,-0.5) -- (0.15,0);
		\draw[thick,->,>=stealth] (0.85,-0.5) -- (0.85,0);
		\draw[thick,->,>=stealth] (0.15,0.75) -- (0.15,2.5);
		\draw[thick,->,>=stealth] (1.55,-0.5) -- (1.55,1.25);
		\draw[thick,->,>=stealth] (0.85,0.75) to node[right] {\small $K$} (0.85,1.25);
		\draw[thick,->,>=stealth] (0.85,2) to node[right] {\small $K$} (0.85,2.35);
		\draw[thick,->,>=stealth] (1.55,2) -- (1.55,2.5);
		\draw[very thick] (0.7,2.35) -- (1,2.35);
		\node at (0.15,2.65) {\small $B$};
		\node at (1.55,2.65) {\small $A$};
		\node at (0.15,-0.65) {\small $J$};
		\node at (0.85,-0.65) {\small $K$};
		\node at (1.55,-0.65) {\small $I$};
	\end{tikzpicture}
	\ \ =\ \ 
	\begin{tikzpicture}[baseline=(current bounding box.center),scale=1.15,xscale=1.1]
		\draw[fill=gray!30] (0.1,0) rectangle (1.4,0.75);
		\node at (0.75,0.375) {$\mathcal{D}$};
		\draw[thick,->,>=stealth] (0.75,-0.5) -- (0.75,0);
		\draw[thick,->,>=stealth] (-0.3,-0.5) -- (-0.3,1.25);
		\draw[thick,->,>=stealth] (1.8,-0.5) -- (1.8,1.25);
		\draw[thick,->,>=stealth] (0.35,0.75) to node[right=-0.05cm,pos=0.5] {\small $K$} (0.35,1.25);
		\draw[thick,->,>=stealth] (1.15,0.75) to node[right=-0.05cm,pos=0.5] {\small $K$} (1.15,1.25);
		\draw[fill=gray!30] (-0.45,1.25) rectangle (0.65,2);
		\draw[fill=gray!30] (0.85,1.25) rectangle (1.95,2);
		\node at (0.1,1.625) {\small $\mathcal{X}_{I, A}$};
		\node at (1.4,1.625) {\small $\mathcal{Y}_{J, B}$};
		\draw[thick,->,>=stealth] (-0.3,2) -- (-0.3,2.5);
		\draw[thick,->,>=stealth] (1.8,2) -- (1.8,2.5);
		\draw[thick,->,>=stealth] (0.3,2) to node[right=-0.05] {\small $K$} (0.3,2.35);
		\draw[thick,->,>=stealth] (1.2,2) to node[right=-0.05] {\small $K$} (1.2,2.35);
		\draw[very thick] (0.15,2.35) -- (0.45,2.35);
		\draw[very thick] (1.05,2.35) -- (1.35,2.35);
		\node at (-0.3,2.65) {\small $A$};
		\node at (1.8,2.65) {\small $B$};
		\node at (-0.3,-0.65) {\small $I$};
		\node at (0.75,-0.65) {\small $K$};
		\node at (1.8,-0.65) {\small $J$};
	\end{tikzpicture}
	\caption{\label{fig:simplifiedscenario} \textbf{Illustration of the scenario described in the introduction.} The circuit diagram corresponds to a special case of the diagram shown in \cref{fig:corollary} (see also \cref{cor:commute}). To see the correspondence, set $\mathcal{X}=\mixed{I} \circ \mathcal{X}_{I, A} \otimes \mathcal{I}_J$ and $\mathcal{Y}=\mathcal{I}_I \otimes \mathcal{Y}_{J, B} \circ \mixed{J}$, where $\mixed{I}$ and $\mixed{J}$ are completely positive maps that output a mixed state on $I$ and $J$, respectively. The first two diagrams then match the first two diagrams of \cref{fig:corollary}, where  $H = I \otimes K \otimes J$. Similarly, the diagram to the right matches the diagram to the right of \cref{fig:corollary} for $\overline{\mathcal{X} }= \tr_{K} \circ \mathcal{X}_{I, A}$, and $\overline{\mathcal{Y}} = \tr_{K} \circ \mathcal{Y}_{J, B}$.}
\end{figure}

The paper is structured as follows: Section \ref{sec:prelims} outlines mathematical preliminaries.
In Section~\ref{sec:theorem}, we present the main theorem of this work, \cref{thm:tensorP}, which provides sufficient conditions for two maps such that their action on a system~$K$ can be factorised. We then prove that these conditions are also necessary (\cref{thm:converse}).  In Section~\ref{sec:corollaries}, we show how our main theorem answers the question posed above (\cref{cor:commute}). In the same section, we demonstrate that Tsirelson's answer to his question (phrased as \cref{cor:vNalgebras}) can be retrieved from \cref{cor:commute}. Finally, we present a generalisation of our main theorem to more than two parties operating on $K$ (\cref{cor:multimap}).

\section{Preliminaries}
\label{sec:prelims}

\newcommand{\Minf}[3]{I(#1\! :\!#2)_{#3}}

This section collects definitions, notation and theorems that are used in the proofs of the paper.

\begin{notation}
  We label Hilbert spaces with capital letters $H$, $K$, and so on. We also associate the same labels to the corresponding spaces of operators on these Hilbert spaces. We sometimes use the term systems to refer to these spaces. For example, we write $\rho_{H K}$ for a state (density operator) on the product of two systems~$H$ and~$K$. Furthermore, we use the notation $\rho_H \coloneqq \tr_{K}(\rho_{H K})$, where $\tr_K$ is the partial trace over~$K$. 
\end{notation}


\begin{notation}
	We denote by $\mathrm{id}_H$ the identity operator on~$H$. For $H$ finite-dimensional, $\mixed{H}$ is the normalised maximally mixed state on~$H$.
\end{notation}

\begin{notation}
	We write $\mathcal{M}:H\to K$ or $\mathcal{M}_{H \to K}$ to indicate that a completely positive (CP) map $\mathcal{M}$ goes from a system $H$ to a system $K$. That is, the map takes as input a trace-class operator on $H$ and outputs a trace-class operator on $K$. For a CP map $\mathcal{M}_{H\to K R}$ we use the notation $\mathcal{M}_{H\to K} \coloneqq\mathrm{tr}_R\circ \mathcal{M}_{H\to K R}$.	We usually omit identity maps, i.e., $\mathcal{M}_{H\to K}(\rho_{HR})\coloneqq \left(\mathcal{M}_{H\to K}\otimes\mathcal{I}_{R}\right)(\rho_{H R})$. 
\end{notation}

\begin{rem} \label{rem:TP}
  A CP map $\mathcal{M}: H \to K$ is trace-preserving (TP) if and only if $\tr(\mathcal{M}(W_H)) = \mathrm{tr}(W_H)$ holds for any trace-class operator $W_H$. This may also be written as 
  \begin{align} \label{eq:tracepreservingdef}
    \tr_K \circ \mathcal{M}_{H \to K} = \tr_H.
  \end{align}
  Note also that, if $\mathcal{M}$ has the Kraus representation $\mathcal{M}: \, W_H \mapsto \sum_z E_z W_H E_z^*$, then the TP property is equivalent to $\sum_z E_z^* E_z = \mathrm{id}_H$. 
  
  Similarly, $\mathcal{M}$ is trace non-increasing if and only if $\tr(\mathcal{M}(W_H)) \leq \mathrm{tr}(W_H)$ for any  trace-class operator $W_H \geq 0$ or, equivalently, if the Kraus operators satisfy $\sum_z E_z^* E_z \leq \mathrm{id}_H$. Note that this also implies the operator inequality 
  \begin{align}
    \tr_K \circ \mathcal{M}_{H \to K}(W_{H R})) \leq \tr_H(W_{H R}) \qquad \forall \, W_{H R} \geq 0.
  \end{align}
\end{rem}

\begin{notation} \label{not:stategeneratingmap}
   For any state $\rho_H$, we can define a CPTP map from $\mathbb{C}$ to $H$, which takes a trivial ($1$-dimenstional) system as input and outputs $\rho_H$, i.e., 
   \begin{equation}
     W \mapsto W \rho_H.
   \end{equation}
   We denote this map also by~$\rho_H$. Note that the concatenation $\tr_H \circ \rho_H$ is equal to the identity map. 
\end{notation}

\begin{defn}
	A CP map $\mathcal{M}:H \to H$ is unital if $\mathcal{M}(\mathrm{id}_H) = \mathrm{id}_H$. 
\end{defn}

\begin{defn}
	A CP map $\mathcal{M}: H \otimes I \to K$ is independent of $I$ if there exists a CP map $\overline{\mathcal{M}}: H\to K$ such that 
	\begin{equation} \label{eq:independence}
		\mathcal{M}_{HI\to K}=\overline{\mathcal{M}}_{H\to K}\circ \mathrm{tr}_I.
	\end{equation}
\end{defn}

\begin{rem} \label{rem:mapunique}
  If $\mathcal{M}: H \otimes I \to K$ is independent of $I$ then the map $\overline{\mathcal{M}}_{H\to K}$ in~\cref{eq:independence} is unique and equal to the map $\mathcal{M}_{H I \to K} \circ \zeta_I$, i.e., 
  \begin{equation}
     \overline{\mathcal{M}}_{H\to K}:W_H \mapsto \mathcal{M}_{HI \to K}(W_H \otimes \zeta_I),
  \end{equation}
  where $\zeta_I$ is an arbitrary state on~$I$.
\end{rem}

\begin{lem}
	\label{lem:fact1}
	For positive operators $\rho_{GH}$ and $\sigma_{KH}$ where $\rho_{GH}$ is pure and $\rho_H=\sigma_H$, there exists a CPTP map $\mathcal{R}: G \to K$ such that
		\begin{equation}
			\sigma_{KH}=\mathcal{R}_{G\to K}(\rho_{GH}).
		\end{equation}
\end{lem}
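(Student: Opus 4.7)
The plan is to construct $\mathcal{R}$ via the well-known uniqueness of purifications up to isometries on the purifying system. First, using that $\rho_{GH}$ is pure, I would write $\rho_{GH} = \ketbra{\psi}{\psi}_{GH}$, and I would pick any purification $\ketbra{\phi}{\phi}_{KK'H}$ of $\sigma_{KH}$ on an auxiliary Hilbert space $K'$ taken large enough that $\dim(K\otimes K') \geq \dim(G)$. The hypothesis $\rho_H = \sigma_H$ then says that $\ket{\psi}_{GH}$ and $\ket{\phi}_{KK'H}$ are two purifications of the same state on $H$, so the standard uniqueness-of-purification theorem supplies an isometry $V : G \to K \otimes K'$ satisfying
\begin{equation}
(V \otimes \mathrm{id}_H)\ket{\psi}_{GH} = \ket{\phi}_{KK'H}.
\end{equation}

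I would then set $\mathcal{R}_{G \to K}(X) \coloneqq \mathrm{tr}_{K'}(V X V^*)$. This is manifestly CP, and it is TP because $V$ is an isometry ($V^*V = \mathrm{id}_G$), which by \cref{rem:TP} is exactly the condition on the Kraus operators. Plugging in $\rho_{GH}$ and using the identity above, the $K'$ system traces out to yield
\begin{equation}
\mathcal{R}_{G \to K}(\rho_{GH}) = \mathrm{tr}_{K'}(\ketbra{\phi}{\phi}_{KK'H}) = \sigma_{KH},
\end{equation}
which is the required equality. Conceptually, $\mathcal{R}$ simply implements on $G$ the isomorphism that relates the two purifying sides of $\ket{\psi}$ and $\ket{\phi}$.

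The only genuinely delicate step is the existence of the isometry $V$: one needs the Schmidt rank of $\ket{\psi}$ across $G:H$ to be compatible with that of $\ket{\phi}$ across $KK':H$. Both equal $\mathrm{rank}(\rho_H) = \mathrm{rank}(\sigma_H)$, so one can select Schmidt bases on $G$ and on $K\otimes K'$ sharing the same orthonormal family on the $H$-side, define $V$ by matching one basis to the other on the range of the Schmidt vectors, and extend arbitrarily on the orthogonal complement in $G$. The extension is always available because we took $K'$ large enough, so $V$ is a genuine isometry rather than merely a partial isometry. This dimensional bookkeeping is the only real obstacle; the lemma is essentially a reformulation of the uniqueness of purifications.
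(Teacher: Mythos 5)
Your proposal is correct and follows essentially the same route as the paper: the paper also purifies $\sigma_{KH}$ on an auxiliary system, matches the two Schmidt decompositions over the common marginal $\rho_H=\sigma_H$, extends the resulting partial isometry to a genuine isometry $V$ into the (sufficiently enlarged) purifying space, and defines $\mathcal{R}$ as conjugation by $V$ followed by the partial trace over the auxiliary system. Your explicit attention to the dimension/Schmidt-rank bookkeeping needed to extend $V$ off the support of $\rho_G$ is exactly the point the paper also handles, so there is no gap.
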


\begin{proof}
	Let $\tilde{\sigma}_{KEH}$ be a purification of $\sigma_{KH}$. The vector representations of the pure states $\rho_{GH}$ and $\tilde{\sigma}_{KEH}$ then have Schmidt decompositions $\sum_{i \in \mathfrak{I}} \lambda_i \ket{g_i}_{G} \otimes \ket{h_i}_H$ and $\sum_{i \in \mathfrak{I}} \lambda_i  \ket{e_i}_{KE} \otimes \ket{h_i}_H$, where $\{\ket{g_i}_{G}\}_{i \in \mathfrak{I}}$,  $\{\ket{e_i}_{KE}\}_{i \in \mathfrak{I}}$, and $\{\ket{h_i}_H\}_{i \in \mathfrak{I}}$ are orthonormal families of eigenvectors of  $\rho_G$, $\tilde{\sigma}_{KE}$, and $\rho_H = \tilde{\sigma}_H$, respectively. By adding additional orthonormal vectors, we can extend $\{\ket{g_i}_{G}\}_{i \in \mathfrak{I}}$ to an orthonormal basis $\{\ket{g_i}_{G}\}_{i \in \mathfrak{I'}}$ of $G$. And because we can without loss of generality choose $K$ such that the space $K E$ is larger than $G$, we can also add orthonormal vectors to $\{\ket{e_i}_{KE}\}_{i \in \mathfrak{I}}$ to obtain a larger family $\{\ket{e_i}_{K E}\}_{i \in \mathfrak{I'}}$. We may now define an isometry $V$ from $G$ to $K E$ by $\ket{g_i}_G \mapsto \ket{e_i}_{K E}$ for any $i \in \mathfrak{I'}$.  Then, $\mathcal{R}_{G \to K E} : W_G \mapsto V W_G V^{*}$ is a CPTP map with the property $\tilde{\sigma}_{K E H}=\mathcal{R}_{G \to K E} (\rho_{G H})$. We thus have $\sigma_{KH}=\mathrm{tr}_E \circ \mathcal{R}_{G\to K E}(\rho_{GH})$ as desired.
\end{proof}

\begin{rem}
	\label{rem:CJ}
	We will make heavy use of the Choi-Jamio\l kowski (C.-J.) isomorphism \cite{Jamiolkowski1972,Choi1975}, according to which a CP map $\mathcal{M}:H \to K$, where $H$ has finite dimension, can be represented as a bipartite positive operator $\rho_{K \tilde{H}} := \mathcal{M}(\psi_{ H \tilde{H}})$, where $\psi_{H\tilde{H}}\coloneqq\ketbra{\psi}{\psi}_{H\tilde{H}}$ is a maximally entangled state between $H$ and an isomorphic system~$\tilde{H}$. The C.-J.~isomorphism depends on the choice of $\smash{\psi_{H \tilde{H}}}$, which we will thus assume to be fixed. Note that, if $H$ is composed of subsystems, then $\psi_{H \tilde{H}}$ induces an analogous subsystem structure on $\tilde{H}$. 
	
	We will, in particular, consider spaces that decompose as $H = \bigoplus_z H_A^z \otimes H_B^z$. To reflect this decomposition on $\tilde{H}$, we equip $H$ with an orthonormal basis of the form $\{ \ket{a}_{H^z_A} \otimes \ket{b}_{H^z_B}\}_{z, a, b}$, where, for any $z$, $\{\ket{a}_{H^z_A}\}_{a \in \mathfrak{A}^z}$ and $\{\ket{b}_{H^z_B}\}_{b \in \mathfrak{B}^z}$ are orthonormal bases of $H^z_A$ and $H^z_B$, respectively, and write the Schmidt decomposition of $\ket{\psi}_{H \tilde{H}}$ as
	\begin{align}
	   \ket{\psi}_{H \tilde{H}} = \sqrt{{\textstyle \frac{1}{\mathrm{dim}(H)}}} \sum_{z} \sum_{\substack{a \in \mathfrak{A}^z \\ b \in \mathfrak{B}^z}} \bigl( \ket{a}_{H^z_A} \otimes \ket{b}_{H^z_B} \bigr) \otimes \ket{\varphi_{z, a, b}} _{\tilde{H}}, 
	\end{align}
        where $\ket{\varphi_{z, a, b}}_{\tilde{H}}$ are appropriately chosen normalised vectors on $\tilde{H}$. We may now, for any fixed $z$, define the subspace $\tilde{H}^z \coloneqq \mathrm{span} \{\ket{\varphi_{z, a, b}}\}_{a \in \mathfrak{A}^z, b \in \mathfrak{B}^z}$. Furthermore, we may introduce new spaces $\tilde{H}_A^z$ and $\tilde{H}_B^z$ with orthonormal bases $\{\ket{a}_{\tilde{H}^z_A}\}_{a \in \mathfrak{A}^z}$ and $\{\ket{b}_{\tilde{H}^z_B}\}_{b \in \mathfrak{B}^z}$, respectively, and define their tensor product by the bilinear map $\otimes: \, \tilde{H}_A^z \times \tilde{H}_B^z \to \tilde{H}^z$, which maps  $(\ket{a}_{\tilde{H}^z_A}, \ket{b}_{\tilde{H}^z_B})$ to $\ket{\varphi_{z, a, b}}$, for any $a \in \mathfrak{A}^z, b \in \mathfrak{B}^z$. This definition ensures that $\smash{\tilde{H}^z = \tilde{H}_A^z \otimes \tilde{H}_B^z}$. The maximally entangled state $\ket{\psi}_{H \tilde{H}}$ can then be expressed as 
        	\begin{align} \label{eq:psidecomposed}
	   \ket{\psi}_{H \tilde{H}} = \sqrt{{\textstyle \frac{1}{\mathrm{dim}(H)}}} \sum_{z} \sum_{\substack{a \in \mathfrak{A}^z \\ b \in \mathfrak{B}^z}} \bigl( \ket{a}_{H^z_A} \otimes \ket{b}_{H^z_B} \bigr) \otimes  \bigl( \ket{a}_{\tilde{H}^z_A} \otimes \ket{b}_{\tilde{H}^z_B} \bigr)
	\end{align}
	A special case of this is if $H$ factorises into $H_A \otimes H_B$. Then  there exists a factorisation of $\tilde{H}$ into $\tilde{H}_A \otimes \tilde{H}_B$ such that $\ket{\psi}_{H \tilde{H}} = \ket{\psi}_{H_A \tilde{H}_A} \otimes \ket{\psi}_{H_B \tilde{H}_B}$, where $\ket{\psi}_{H_A \tilde{H}_A}$ and $\ket{\psi}_{H_B \tilde{H}_B}$ are maximally entangled states on $H_A \otimes \tilde{H}_A$ and $H_B \otimes \tilde{H}_B$, respectively. 
		
	We summarise some further basic properties of the C.-J.~isomorphism (see \cref{app:CJ} for proofs): The map $\mathcal{M}$ is TP if and only if $\rho_{\tilde{H}}=\mixed{\tilde{H}}$. In this case $\rho_{K \tilde{H}}$ is normalised and thus a state. Furthermore, $\mathcal{M}$ is trace non-increasing if and only if $\rho_{\tilde{H}}\le\mixed{\tilde{H}}$. $\mathcal{M}$ can be retrieved from $\rho_{K\tilde{H}}$ via
	\begin{equation}
		\mathcal{M}(W_H)=\dim(H)^2\,\mathrm{tr}_{\tilde{H}}\Big(\mathrm{tr}_H\big(W_H\psi_{H\tilde{H}}\big)\rho_{K\tilde{H}}\Big).
	\end{equation}
\end{rem}

\begin{lem} \label{lem:unitaltrace}
	Let $\mathcal{M}:H\to H$ be a unital, trace non-increasing CP map on a finite-dimensional space~$H$. Then $\mathcal{M}$ is TP.
\end{lem}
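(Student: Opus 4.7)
The plan is to use the Kraus representation of $\mathcal{M}$, as recalled in \cref{rem:TP}, together with the observation that the trace of the identity operator on $H$ is finite (this is where finite-dimensionality enters).

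First I would write $\mathcal{M}$ in Kraus form, $\mathcal{M}: W_H \mapsto \sum_z E_z W_H E_z^*$. The unitality assumption $\mathcal{M}(\mathrm{id}_H) = \mathrm{id}_H$ then reads $\sum_z E_z E_z^* = \mathrm{id}_H$. Taking the trace on both sides and using $\mathrm{tr}(\mathrm{id}_H) = \dim(H) < \infty$ gives
\begin{equation}
  \sum_z \mathrm{tr}(E_z E_z^*) = \dim(H).
\end{equation}
By cyclicity of the trace, $\mathrm{tr}(E_z E_z^*) = \mathrm{tr}(E_z^* E_z)$ for every $z$, so $\mathrm{tr}\bigl(\sum_z E_z^* E_z\bigr) = \dim(H) = \mathrm{tr}(\mathrm{id}_H)$.

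On the other hand, \cref{rem:TP} tells us that the trace non-increasing assumption is equivalent to the operator inequality $\sum_z E_z^* E_z \leq \mathrm{id}_H$. The difference $\mathrm{id}_H - \sum_z E_z^* E_z$ is therefore a positive operator whose trace vanishes by the computation above; in finite dimension this forces it to be zero. Hence $\sum_z E_z^* E_z = \mathrm{id}_H$, which by \cref{rem:TP} is exactly the TP property of $\mathcal{M}$.

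There is no real obstacle: the argument is essentially the elementary fact that a positive operator of trace zero must vanish, combined with the symmetry between the "unital" and "trace-preserving" conditions at the level of Kraus operators (which exchange the roles of $\sum_z E_z E_z^*$ and $\sum_z E_z^* E_z$). The only place where finite-dimensionality is used is to ensure $\mathrm{tr}(\mathrm{id}_H) < \infty$; without it, one could not equate the traces of the two sums.
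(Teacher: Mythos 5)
Your proof is correct. It reaches the same punchline as the paper --- a positive operator dominated by another operator and having the same trace must equal it --- but through a different representation: you work directly with the Kraus operators, using cyclicity of the trace to convert the unitality condition $\sum_z E_z E_z^* = \mathrm{id}_H$ into the statement $\mathrm{tr}\bigl(\sum_z E_z^* E_z\bigr) = \mathrm{tr}(\mathrm{id}_H)$, and then combine this with the operator inequality $\sum_z E_z^* E_z \leq \mathrm{id}_H$ from \cref{rem:TP}. The paper instead phrases everything in terms of the Choi--Jamio\l kowski operator $\rho_{H\tilde H} = \mathcal{M}(\psi_{H\tilde H})$: trace non-increasing becomes $\rho_{\tilde H} \leq \mixed{\tilde H}$, unitality gives $\rho_H = \mixed{H}$ and hence $\mathrm{tr}(\rho_{\tilde H}) = \mathrm{tr}(\mixed{\tilde H})$, and the same ``positive operator of trace zero vanishes'' step closes the argument. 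The two routes are essentially dual descriptions of one argument --- indeed, the paper's own proof of the C.-J.\ characterisation of trace non-increasing maps (\cref{lem:CJtracepreserving}) passes through $\sum_z E_z^* E_z$ --- but yours is more elementary and self-contained, while the paper's version reuses the C.-J.\ machinery it has already set up and will rely on throughout. Your closing remark correctly identifies both where finite-dimensionality enters and the pleasant symmetry between unitality and trace preservation at the Kraus level.
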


\begin{proof}
	According to \cref{rem:CJ}, a map $\mathcal{M}:H\to H$ is trace non-increasing if and only if its C.-J.~operator $\rho_{H\tilde{H}}=\mathcal{M}(\psi_{H\tilde{H}})$ fulfils
		\begin{equation}
			\label{eq:tracenonincr}
			\rho_{\tilde{H}}=\mathrm{tr}_H\rho_{H\tilde{H}}\le\mixed{\tilde{H}}.
		\end{equation}
	Because $\mathcal{M}$ is unital, we also know that 
		\begin{equation}
			\rho_H=\mathrm{tr}_{\tilde{H}}\circ\mathcal{M}(\psi_{H\tilde{H}})=\mathcal{M}\left(\mixed{H}\right)=\mixed{H}.
		\end{equation}
	The latter implies $\mathrm{tr}(\rho_{\tilde{H}})=\mathrm{tr}(\rho_{H\tilde{H}})=\mathrm{tr}(\rho_H)=\mathrm{tr}(\mixed{H}) = \mathrm{tr}(\mixed{\tilde{H}})$. But this can only be true if the operator inequality \cref{eq:tracenonincr} is an equality, i.e., $\rho_{\tilde{H}}=\mixed{\tilde{H}}$. Hence, from the C.-J.~isomorphism (see again \cref{rem:CJ}) it follows that $\mathcal{M}$ is TP.
\end{proof}

\begin{lem}
	\label{lem:mut0}
	Let $\mathcal{M}:H \otimes I \to K$ be a CPTP map for finite-dimensional spaces $H, I, K$ such that $\mathcal{M}$ is independent of $I$, and let $\rho_{K\tilde{H}\tilde{I}}$ be the C.-J.~state of $\mathcal{M}$. Then
		\begin{equation}
			\Minf{K}{\tilde{I}|\tilde{H}}{\rho}=0.
		\end{equation}
\end{lem}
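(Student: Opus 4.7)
The plan is to show that, under the independence hypothesis, the C.-J.\ state $\rho_{K\tilde{H}\tilde{I}}$ is in fact a product state across the cut $K\tilde{H} : \tilde{I}$; once this is established, the vanishing of $\Minf{K}{\tilde{I}|\tilde{H}}{\rho}$ is immediate from the chain rule for quantum mutual information together with the nonnegativity of its components.

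First, I would invoke the subsystem-respecting form of the C.-J.\ construction from \cref{rem:CJ}: because the input space factorises as $H\otimes I$, the relevant maximally entangled state on $HI \otimes \widetilde{HI}$ can be chosen to split as $\ket{\psi}_{HI, \widetilde{HI}} = \ket{\psi}_{H\tilde{H}}\otimes\ket{\psi}_{I\tilde{I}}$, so that $\rho_{K\tilde{H}\tilde{I}} = \mathcal{M}\bigl(\psi_{H\tilde{H}}\otimes\psi_{I\tilde{I}}\bigr)$. Substituting the independence hypothesis $\mathcal{M}_{HI\to K} = \overline{\mathcal{M}}_{H\to K}\circ \tr_I$, pushing the partial trace onto the $\psi_{I\tilde{I}}$ factor, and using $\tr_I \psi_{I\tilde{I}} = \mixed{\tilde{I}}$ (since $\psi_{I\tilde{I}}$ is maximally entangled), one arrives at
\begin{equation*}
  \rho_{K\tilde{H}\tilde{I}} \;=\; \overline{\mathcal{M}}_{H\to K}(\psi_{H\tilde{H}})\otimes\mixed{\tilde{I}} \;=\; \rho_{K\tilde{H}}\otimes\mixed{\tilde{I}},
\end{equation*}
where the last equality uses that $\rho_{K\tilde{H}}$ is the marginal of $\rho_{K\tilde{H}\tilde{I}}$ on $K\tilde{H}$, which follows because $\mixed{\tilde{I}}$ is normalised.

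From this product form, $\Minf{K\tilde{H}}{\tilde{I}}{\rho} = 0$, and the chain rule $\Minf{K\tilde{H}}{\tilde{I}}{\rho} = \Minf{\tilde{H}}{\tilde{I}}{\rho} + \Minf{K}{\tilde{I}|\tilde{H}}{\rho}$ together with nonnegativity of both summands forces $\Minf{K}{\tilde{I}|\tilde{H}}{\rho} = 0$, as claimed. I do not anticipate any real obstacle: the only subtle point is to insist on the subsystem-respecting C.-J.\ pair $\psi_{H\tilde{H}}\otimes\psi_{I\tilde{I}}$ rather than some unspecified maximally entangled state on $HI\otimes\widetilde{HI}$, which is already licensed by \cref{rem:CJ}; everything else reduces to a one-line manipulation of the C.-J.\ construction.
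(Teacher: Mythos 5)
Your proposal is correct and follows essentially the same route as the paper: both derive the product structure $\rho_{K\tilde{H}\tilde{I}} = \overline{\mathcal{M}}(\psi_{H\tilde{H}})\otimes\mixed{\tilde{I}}$ from the independence hypothesis and the factorised maximally entangled state of \cref{rem:CJ}. The only (immaterial) difference is the final step, where you use the chain rule for mutual information plus nonnegativity of both summands, while the paper computes $H(K|\tilde{H}\tilde{I})_\rho = H(K|\tilde{H})_\rho$ directly from additivity of entropy on product states.
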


\begin{proof}
	Because $\mathcal{M}$ is independent of $I$, there exists a CPTP map $\overline{\mathcal{M}}:H\to K$ such that $\overline{\mathcal{M}}\circ\mathrm{tr}_I=\mathcal{M}$. Then, for a maximally entangled state $\psi_{HI\tilde{H}\tilde{I}}=\psi_{H\tilde{H}} \otimes \psi_{I\tilde{I}}$ (see \cref{rem:CJ}),
		\begin{align}
			\rho_{K\tilde{H}\tilde{I}}&=\left(\mathcal{M}\otimes\mathcal{I}_{\tilde{H}\tilde{I}}\right)(\psi_{HI\tilde{H}\tilde{I}})\\
			&=\left(\overline{\mathcal{M}}\circ\mathrm{tr}_I\otimes\mathcal{I}_{\tilde{H}\tilde{I}}\right)(\psi_{H\tilde{H}} \otimes \psi_{I\tilde{I}})\\
			&=\left(\overline{\mathcal{M}}\otimes\mathcal{I}_{\tilde{H}} \otimes \mathcal{I}_{\tilde{I}}\right)\left(\psi_{H\tilde{H}} \otimes \mixed{\tilde{I}}\right)\\
			&=\left(\overline{\mathcal{M}}\otimes\mathcal{I}_{\tilde{H}}\right)(\psi_{H\tilde{H}}) \otimes \mixed{\tilde{I}}.
		\end{align}
	From this tensor product structure of $\rho_{H\tilde{I}\tilde{R}}$, it follows that
		\begin{align}
			H(K|\tilde{H}\tilde{I})_\rho&=H(K\tilde{H}\tilde{I})_\rho-H(\tilde{H}\tilde{I})_\rho\\
			&=H(K\tilde{H})_\rho+H(\tilde{I})_\rho-H(\tilde{H})_\rho-H(\tilde{I})_\rho\\
			&=H(K|\tilde{H})_\rho,
		\end{align}
	hence $\Minf{K}{\tilde{I}|\tilde{H}}{\rho}=0$.
\end{proof}

\begin{lem}
	\label{lem:1dmaps}
	For any CP map $\mathcal{M}:H\to\mathbb{C}$ there exists a Hermitian operator $M_H$ such that
		\begin{equation}
			\mathcal{M}(W_H)=\mathrm{tr}(M_H W_H).
		\end{equation}
\end{lem}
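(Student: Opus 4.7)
The plan is to exploit the Kraus representation of a completely positive map, as recalled in \cref{rem:TP}. Since $\mathcal{M}:H\to\mathbb{C}$ takes values in the one-dimensional space $\mathbb{C}$, each Kraus operator $E_z$ is a continuous linear functional on~$H$. By the Riesz representation theorem, every such functional may be written as $E_z = \bra{\eta_z}$ for some vector $\ket{\eta_z}\in H$, so that $E_z^* = \ket{\eta_z}$.

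Substituting into $\mathcal{M}(W_H)=\sum_z E_z W_H E_z^*$ then yields
\begin{align}
	\mathcal{M}(W_H) \;=\; \sum_z \bra{\eta_z} W_H \ket{\eta_z} \;=\; \mathrm{tr}\!\left(\sum_z \ketbra{\eta_z}{\eta_z}\, W_H\right) \;=\; \mathrm{tr}(M_H W_H),
\end{align}
where $M_H \coloneqq \sum_z \ketbra{\eta_z}{\eta_z}$. Being a sum of rank-one positive operators, $M_H$ is positive, and in particular Hermitian, which proves the claim.

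The only point requiring care is the existence of a Kraus representation of $\mathcal{M}$ and the convergence of the sum defining $M_H$ when $H$ is infinite-dimensional; this is the main technical obstacle. It can be bypassed by invoking trace duality directly: $\mathcal{M}$ is a positive (hence continuous) linear functional on the trace-class operators of $H$, and the predual identification $\mathcal{T}(H)^* \cong B(H)$ produces a bounded operator $M_H$ with $\mathcal{M}(W_H) = \mathrm{tr}(M_HW_H)$, while positivity of $\mathcal{M}$ forces $M_H\ge 0$.
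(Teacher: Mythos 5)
Your proof is correct and essentially the same as the paper's: both start from the Kraus representation and arrive at the same operator, since your $M_H=\sum_z\ketbra{\eta_z}{\eta_z}=\sum_z E_z^*E_z$ coincides with the paper's $M$ (the paper gets there via $\mathcal{M}(W_H)=\tr(\mathcal{M}(W_H))$ and cyclicity of the trace rather than the Riesz identification $E_z=\bra{\eta_z}$). Your additional observation that $M_H\ge 0$ and your remarks on the infinite-dimensional case go slightly beyond what the paper records, but the core argument is identical.
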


\begin{proof}
	Let $\{E_z\}_z$ be the Kraus operators of $\mathcal{M}$, i.e., $\mathcal{M}: \, W_H \mapsto \sum_zE_z W_H E_z^*$. Because the image of $\mathcal{M}$ is one-dimensional, we have that $\mathcal{M}(W_H)=\mathrm{tr}(\mathcal{M}(W_H))$. Hence, using cyclicity and linearity of the trace,
		\begin{align}
			\mathcal{M}(W_H)&=\mathrm{tr}\left(\sum_zE_z W_H E_z^*\right)=\mathrm{tr}\left(\sum_zE_z^*E_z W_H\right)=\mathrm{tr}(M W_H),
		\end{align}
	where $M=\sum_zE_z^*E_z$.
\end{proof}

\begin{rem} \label{rem:cpmrescaling}
  Let $\mathcal{M}:H\to K$ be a CP map. If~$H$ is finite-dimensional then 
  \begin{align}
    \lambda \coloneqq \sup_{\rho_H} \tr(\mathcal{M}(\rho_H)),
  \end{align}
  where the supremum ranges over all states on~$H$, is finite. Hence, the rescaled map $\frac{1}{\lambda} \mathcal{M}$ is trace non-increasing. 
\end{rem}

\begin{lem} \label{lem:cpmtracepreserving}
  Let $\mathcal{M}:H\to K$ be a trace non-increasing CP map and let $K' \coloneqq K\oplus\mathrm{span}\{\ket{\perp}\}$, with $\ket{\perp}$ a unit vector. Then the map 
 from $H$ to $K'$ defined by\footnote{Here, $\perp_{K'}$ denotes the CPTP map that generates the state $\perp_{K'} = \ketbra{\perp}{\perp}_{K'}$; see~\cref{not:stategeneratingmap}.}
  \begin{align} \label{eq:tracepreservingmap}
     \mathcal{M}' \coloneqq \mathcal{M} + \perp_{K'} \circ \bigl(\mathrm{tr}_H  - \mathrm{tr}_{K} \circ \mathcal{M} \big),
  \end{align}
  is CP and TP. Furthermore, if $H= I \otimes J$ and $\mathcal{M}$ is independent of~$I$, then $\mathcal{M}'$ is also independent of~$I$. 
\end{lem}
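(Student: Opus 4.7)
The plan is to decompose $\mathcal{M}'$ as the sum of two CP maps and verify TP and $I$-independence by direct manipulation. Write $\mathcal{M}' = \mathcal{M} + \perp_{K'} \circ \mathcal{N}$, where $\mathcal{N} \coloneqq \mathrm{tr}_H - \mathrm{tr}_K \circ \mathcal{M}$ is a map from $H$ to $\mathbb{C}$. Since sums and compositions of CP maps are CP, and $\mathcal{M}$, $\perp_{K'}$ are CP by hypothesis, the only nontrivial point for the CP assertion is establishing that $\mathcal{N}$ itself is CP.

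For that step I would invoke \cref{lem:1dmaps}: any CP map $H\to \mathbb{C}$ is of the form $W_H\mapsto \mathrm{tr}(M\,W_H)$ for some Hermitian $M\ge 0$. The map $\mathrm{tr}_H$ corresponds to $M = \mathrm{id}_H$, and if $\{E_z\}_z$ is a Kraus decomposition of $\mathcal{M}$, then by the computation in the proof of \cref{lem:1dmaps} the map $\mathrm{tr}_K\circ\mathcal{M}$ corresponds to $M_{\mathcal{M}} = \sum_z E_z^* E_z$. By \cref{rem:TP}, the trace non-increasing assumption is exactly $M_{\mathcal{M}} \le \mathrm{id}_H$. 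Hence $\mathcal{N}(W_H) = \mathrm{tr}\bigl((\mathrm{id}_H - M_{\mathcal{M}})\,W_H\bigr)$ with $\mathrm{id}_H - M_{\mathcal{M}} \ge 0$, which is manifestly CP (e.g.\ with single Kraus operator $\sqrt{\mathrm{id}_H - M_{\mathcal{M}}}$). This settles complete positivity of $\mathcal{M}'$.

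For the TP property I would apply $\mathrm{tr}_{K'}$ to the defining expression~\cref{eq:tracepreservingmap}. Because $\perp_{K'}$ produces a normalised state, $\mathrm{tr}_{K'}\circ \perp_{K'}$ is the identity on $\mathbb{C}$, and because the image of $\mathcal{M}$ lies in the $K$-summand of $K'$, $\mathrm{tr}_{K'}\circ \mathcal{M} = \mathrm{tr}_K\circ\mathcal{M}$. Thus
\begin{equation*}
\mathrm{tr}_{K'}\circ\mathcal{M}' \;=\; \mathrm{tr}_K\circ\mathcal{M} \;+\; \bigl(\mathrm{tr}_H - \mathrm{tr}_K\circ\mathcal{M}\bigr) \;=\; \mathrm{tr}_H,
\end{equation*}
which is the TP condition in the form~\cref{eq:tracepreservingdef}.

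For the final claim, if $H = I\otimes J$ and $\mathcal{M}$ is independent of $I$, there exists $\overline{\mathcal{M}}\colon J\to K$ with $\mathcal{M} = \overline{\mathcal{M}}\circ \mathrm{tr}_I$. Substituting this and using $\mathrm{tr}_H = \mathrm{tr}_J\circ\mathrm{tr}_I$ lets me factor $\mathrm{tr}_I$ out on the right of every summand of~\cref{eq:tracepreservingmap}, yielding $\mathcal{M}' = \overline{\mathcal{M}'}\circ\mathrm{tr}_I$ where $\overline{\mathcal{M}'}$ is the map constructed from $\overline{\mathcal{M}}$ by the same formula~\cref{eq:tracepreservingmap}. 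Hence $\mathcal{M}'$ is independent of $I$. The only step requiring any thought is the CP argument, and even there the work has already been done by \cref{lem:1dmaps} together with \cref{rem:TP}; the rest is bookkeeping.
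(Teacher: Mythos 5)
Your proposal is correct and follows essentially the same route as the paper: both reduce the problem to showing that $\mathrm{tr}_H - \mathrm{tr}_K\circ\mathcal{M}$ is CP (you via the explicit representation $W_H\mapsto\mathrm{tr}\bigl((\mathrm{id}_H-\sum_z E_z^*E_z)W_H\bigr)$ from \cref{lem:1dmaps}, the paper via the equivalent operator inequality in \cref{rem:TP} --- the same underlying fact $\sum_z E_z^*E_z\le\mathrm{id}_H$), and then verify TP and $I$-independence by the same direct inspection of \cref{eq:tracepreservingmap}. The only cosmetic quibble is that $\sqrt{\mathrm{id}_H-M_{\mathcal{M}}}$ is not literally a Kraus operator of a map into $\mathbb{C}$ (one should compose $W\mapsto\sqrt{P}W\sqrt{P}$ with the trace, or take Kraus operators $\bra{i}\sqrt{P}$), but the CP conclusion is unaffected.
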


\begin{proof}
  We start by showing the complete positivity of $\mathcal{M}'$. It suffices to verify that the map $\tr_H  - \tr_K \circ \mathcal{M}$ is CP, which is equivalent to 
  \begin{align}
    \tr_H(\rho_{H \tilde{H}}) \ge \tr_K(\mathcal{M}(\rho_{H \tilde{H}})) \quad \forall \rho_{H \tilde{H}} \ge 0.
  \end{align}
  This operator inequality follows from the assumption that $\mathcal{M}$ is trace non-increasing (see \cref{rem:TP}).
  The map $\mathcal{M}'$ is also TP. This can be verified by taking the trace on the right-hand side of~\cref{eq:tracepreservingmap} and noting that $\tr(\perp_{K'}) = 1$. 
  Finally, the independence of $\mathcal{M}'$ from $I$ can be verified by inspecting the right-hand side of~\cref{eq:tracepreservingmap}, where the maps $\mathcal{M}$ and $\mathrm{tr}_H$ are independent of~$I$.
\end{proof}

For the convenience of the reader we also state here Theorem 6 of~\cite{Hayden2004}, since this will be used in later proofs.

\begin{thm}[Theorem 6 of~\cite{Hayden2004}]
	\label{thm:Hayden}
	Let $A,B,C$ be finite-dimensional Hilbert spaces. A state $\rho_{ABC}$ on $A\otimes B\otimes C$ satisfies $\Minf{A}{B|C}{\rho}=0$ if and only if there is a decomposition of~$C$ as
	\begin{equation}
		C=\bigoplus_z C_A^z\otimes C_B^z
	\end{equation}
	into a direct sum of tensor products, such that
	\begin{equation}
		\rho_{ABC}=\sum_z p_z\,\rho^{z}_{AC_A^z}\otimes\rho^{z}_{C_B^z B}
	\end{equation}
	with states $\rho^{z}_{AC_A^z}$ on $A\otimes C_A^z$ and $\rho^{z}_{C_B^z B}$ on $C_B^z\otimes B$, and a probability distribution $\{p_z\}$. 
\end{thm}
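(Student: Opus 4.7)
The plan is to prove the two directions separately. The ``if'' direction is a direct entropy calculation: starting from $\rho_{ABC}=\sum_z p_z\,\rho^z_{AC_A^z}\otimes\rho^z_{C_B^z B}$, partial-tracing shows that $\rho_{AC}$, $\rho_{BC}$ and $\rho_C$ inherit the same $z$-block structure, and that within each block they factorise across $C_A^z\otimes C_B^z$. Hence each of the four entropies in
\begin{equation*}
\Minf{A}{B|C}{\rho}=H(AC)_\rho+H(BC)_\rho-H(ABC)_\rho-H(C)_\rho
\end{equation*}
takes the form $H(\{p_z\})+\sum_z p_z\,[\,\cdot\,]_z$, with the within-block contributions splitting additively across $AC_A^z$ and $C_B^z B$. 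The classical and within-block terms then cancel pairwise, giving $\Minf{A}{B|C}{\rho}=0$.

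For the nontrivial ``only if'' direction, I would invoke Petz's characterisation of equality in the data-processing inequality, applied to the partial trace $\mathrm{tr}_B:BC\to C$. This yields $\Minf{A}{B|C}{\rho}=0$ if and only if there exists a CPTP map $\mathcal{R}:C\to BC$ with $(\mathcal{I}_A\otimes\mathcal{R})(\rho_{AC})=\rho_{ABC}$, and the Petz transpose map
\begin{equation*}
\mathcal{R}(X_C)=\rho_{BC}^{1/2}\bigl(\mathrm{id}_B\otimes\rho_C^{-1/2}X_C\rho_C^{-1/2}\bigr)\rho_{BC}^{1/2}
\end{equation*}
does the job (on the support of $\rho_C$, to which we may harmlessly restrict).

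The block decomposition of $C$ is then extracted from the algebraic content of this recovery. Introduce the unital $*$-subalgebra $\mathcal{N}\subset\mathcal{B}(C)$ generated by the ``conditional $A$-marginals'' $\rho_C^{-1/2}\,\mathrm{tr}_A[(X_A\otimes\mathrm{id}_C)\,\rho_{AC}]\,\rho_C^{-1/2}$ for $X_A\in\mathcal{B}(A)$. By the Wedderburn structure theorem for finite-dimensional $*$-algebras, there exists a decomposition $C=\bigoplus_z C_A^z\otimes C_B^z$ such that $\mathcal{N}\cong\bigoplus_z \mathcal{B}(C_A^z)\otimes\mathrm{id}_{C_B^z}$, with commutant $\mathcal{N}'\cong\bigoplus_z\mathrm{id}_{C_A^z}\otimes\mathcal{B}(C_B^z)$. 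The crucial step is to show, using the recovery identity, that $\rho_{BC}\in\mathcal{N}'\otimes\mathcal{B}(B)$; cutting $\rho_{ABC}$ against the central projectors $\Pi^z$ of $\mathcal{N}$ then produces block states $\rho^z_{AC_A^z}\otimes\rho^z_{C_B^z B}$ with weights $p_z=\mathrm{tr}(\Pi^z\rho_C)$.

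The main obstacle is this commutation step -- establishing $\rho_{BC}\in\mathcal{N}'\otimes\mathcal{B}(B)$. The idea is to exploit that the Petz recovery map is a $*$-homomorphism on the subalgebra of its multiplicative domain, and that the $A$-marginals generating $\mathcal{N}$ must lie in this domain because $\mathcal{R}$ recovers them exactly for every $X_A$. This forces $\rho_{BC}^{1/2}$ to commute with $\mathrm{id}_B\otimes\mathcal{N}$, which is precisely $\rho_{BC}\in\mathcal{N}'\otimes\mathcal{B}(B)$. The case where $\rho_C$ or $\rho_{BC}$ fails to be full rank can be handled by the $\varepsilon$-regularisation $\rho\mapsto\rho+\varepsilon\,\mixed{C}$ and a continuity argument, which is unproblematic in finite dimensions. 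Once the commutation is in place, assembling the final form of $\rho_{ABC}$ and checking that the $\rho^z$ are genuine normalised states and the $p_z$ a probability distribution is routine.
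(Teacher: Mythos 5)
First, a point of comparison: the paper does not prove this statement at all --- it is quoted verbatim as Theorem~6 of the cited reference and used as a black box --- so your attempt has to be judged on its own terms. Your ``if'' direction is correct: the blocks of $C$ are mutually orthogonal, so each of the four entropies splits as $H(\{p_z\})$ plus a $p_z$-average of block entropies that factorise across $A C_A^z$ and $C_B^z B$, and everything cancels. Your ``only if'' direction also starts on the same road as the original Hayden--Jozsa--Petz--Winter argument (Petz's equality condition for $\mathrm{tr}_B$, then the Artin--Wedderburn decomposition of a finite-dimensional $*$-algebra acting on $C$), which is the right strategy.

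The gap is in the step you yourself flag as crucial. The claim $\rho_{BC}\in\mathcal{N}'\otimes\mathcal{B}(B)$ is false in general, so no argument can establish it. Take a single block, $C=C_A\otimes C_B$ and $\rho_{ABC}=\rho_{AC_A}\otimes\rho_{C_B B}$, with $\rho_{AC_A}$ correlated enough that the conditional marginals $\rho_C^{-1/2}\,\mathrm{tr}_A[(X_A\otimes\mathrm{id}_C)\rho_{AC}]\,\rho_C^{-1/2}$ generate all of $\mathcal{B}(C_A)\otimes\mathrm{id}_{C_B}$, but with $\rho_{C_A}$ not proportional to $\mathrm{id}_{C_A}$. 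Then $\mathcal{N}'=\mathrm{id}_{C_A}\otimes\mathcal{B}(C_B)$, while $\rho_{BC}=\rho_{C_A}\otimes\rho_{C_B B}$ does not lie in $\mathrm{id}_{C_A}\otimes\mathcal{B}(C_B\otimes B)$; your claim would force every $\rho^z_{C_A^z}$ in the target decomposition to be maximally mixed, which the theorem does not assert. Consistently with this, the multiplicative-domain justification does not go through: the Schr\"odinger-picture Petz map is trace-preserving but not unital, and membership of the generators in the multiplicative domain of its unital adjoint requires equality in the Schwarz inequality, which does not follow from the recovery identity
\begin{equation}
	\rho_{BC}^{1/2}\bigl(\mathrm{id}_B\otimes N_{X_A}\bigr)\rho_{BC}^{1/2}=\mathrm{tr}_A\bigl[(X_A\otimes\mathrm{id}_{BC})\,\rho_{ABC}\bigr]
\end{equation}
alone --- that identity only records where certain operators are sent, not that they are sent multiplicatively. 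The statement that is actually true, and that the original proof establishes, is the weaker invariance that $\mathrm{id}_B\otimes\mathcal{N}$ is globally invariant under the modular flows $\mathrm{Ad}(\rho_{BC}^{it})$ and $\mathrm{Ad}(\mathrm{id}_B\otimes\rho_C^{it})$; equivalently, the cocycles $\rho_{BC}^{it}(\mathrm{id}_B\otimes\rho_C^{-it})$ lie in $\mathrm{id}_B\otimes\mathcal{B}(C)$ and commute with $\mathrm{id}_B\otimes\mathcal{N}$. In the block picture these cocycles equal $\sum_z \mathrm{id}_{C_A^z}\otimes(\cdots)_{C_B^z B}$, which is compatible with arbitrary $\rho^z_{C_A^z}$. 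To repair the proof you would replace your commutation claim by this cocycle statement (and likewise use it, rather than $\rho_{AC}\in\mathcal{B}(A)\otimes\mathcal{N}$, to factorise the $A$-side within each block), after which the Wedderburn decomposition and the final assembly proceed as you describe.
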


\section{Main result}
\label{sec:theorem}

The claim that commuting operations factorise, as described informally in the introduction, is a corollary from a more general statement, \cref{thm:tensorP}, which we present and prove in the following. The setting considered by the theorem is illustrated by~\cref{fig:theorem}. At the end of the section we also give a converse statement, \cref{thm:converse}, which implies that the assumptions we make in \cref{thm:tensorP} are necessary for factorisation.
\begin{figure}[t]
	\centering
	\begin{tikzpicture}[baseline=(current bounding box.center),scale=1.25]
		\draw[fill=gray!30] (0,0) rectangle (1,0.75);
		\draw[fill=gray!30] (0.5,1.25) rectangle (1.5,2);
		\node at (0.5,0.375) {\small $\mathcal{M}$};
		\node at (1,1.625) {\small $\mathcal{N}$};
		\draw[thick,->,>=stealth] (0.5,-0.5) -- (0.5,0);
		\draw[thick,->,>=stealth] (0.25,0.75) to node[left] {\small $A$} (0.25,2.35);
		\draw[thick,->,>=stealth] (0.75,0.75) to node[right] {\small $H$} (0.75,1.25);
		\draw[thick,->,>=stealth] (1,2) -- (1,2.5);
		\draw[very thick] (0.1,2.35) -- (0.4,2.35);
		\node at (0.5,-0.65) {\small $H$};
		\node at (1,2.65) {\small $B$};
	\end{tikzpicture}\ \ =\ \ 
	\begin{tikzpicture}[baseline=(current bounding box.center),scale=1.1]
		\draw[fill=gray!30] (0,0) rectangle (1,0.75);
		\node at (0.5,0.375) {\small $\mathcal{N}$};
		\draw[thick,->,>=stealth] (0.5,-0.5) -- (0.5,0);
		\node at (0.5,-0.65) {\small $H$};
		\draw[thick,->,>=stealth] (0.5,0.75) -- (0.5,1.25);
		\node at (0.5,1.4) {\small $B$};
	\end{tikzpicture}\hspace{15pt}$\Rightarrow$\hspace{15pt}
	\begin{tikzpicture}[baseline=(current bounding box.center),scale=1.25]
		\draw[fill=gray!30] (0,0) rectangle (1,0.75);
		\draw[fill=gray!30] (0.5,1.25) rectangle (1.5,2);
		\node at (0.5,0.375) {\small $\mathcal{M}$};
		\node at (1,1.625) {\small $\mathcal{N}$};
		\draw[thick,->,>=stealth] (0.5,-0.5) -- (0.5,0);
		\draw[thick,->,>=stealth] (0.25,0.75) -- (0.25,2.5);
		\draw[thick,->,>=stealth] (0.75,0.75) to node[right] {\small $H$} (0.75,1.25);
		\draw[thick,->,>=stealth] (1,2) -- (1,2.5);
		\node at (0.5,-0.65) {\small $H$};
		\node at (1,2.65) {\small $B$};
		\node at (0.25,2.65) {\small $A$};
	\end{tikzpicture}\ \ =\ \ 
	\begin{tikzpicture}[baseline=(current bounding box.center),scale=1.25]
		\draw[fill=gray!30] (0.1,0) rectangle (1.4,0.75);
		\node at (0.75,0.375) {$\mathcal{D}$};
		\draw[thick,->,>=stealth] (0.75,-0.5) -- (0.75,0);
		\draw[thick,->,>=stealth] (-0.3,-0.5) -- (-0.3,1.25);
		\draw[thick,->,>=stealth] (1.8,-0.5) -- (1.8,1.25);
		\draw[thick,->,>=stealth] (0.35,0.75) to node[right=-0.05cm,pos=0.5] {\small $K$} (0.35,1.25);
		\draw[thick,->,>=stealth] (1.15,0.75) to node[right=-0.05cm,pos=0.5] {\small $K$} (1.15,1.25);
		\draw[fill=gray!30] (-0.45,1.25) rectangle (0.65,2);
		\draw[fill=gray!30] (0.85,1.25) rectangle (1.95,2);
		\node at (0.1,1.625) {$\overline{\mathcal{M}}$};
		\node at (1.4,1.625) {$\overline{\mathcal{N}}$};
		\draw[thick,->,>=stealth] (0.1,2) -- (0.1,2.5);
		\draw[thick,->,>=stealth] (1.4,2) -- (1.4,2.5);
		\node at (0.1,2.65) {\small $A$};
		\node at (1.4,2.65) {\small $B$};
		\node at (-0.3,-0.65) {\small $I$};
		\node at (0.75,-0.65) {\small $K$};
		\node at (1.8,-0.65) {\small $J$};
		\draw[very thick, decorate,decoration={calligraphic brace,mirror}] (-0.35,-0.85) -- (1.85,-0.85);
		\node at (0.75,-1.15) {\small $=H$};
	\end{tikzpicture}
	\caption{\label{fig:theorem}\textbf{Visualisation of \cref{thm:tensorP}.} The equality on the left-hand side illustrates Condition~\ref{cond:1}. Given that Conditions~\ref{cond:2} and~\ref{cond:3} are also satisfied, the theorem implies the equality between the circuit diagrams on the right-hand side.}
\end{figure}
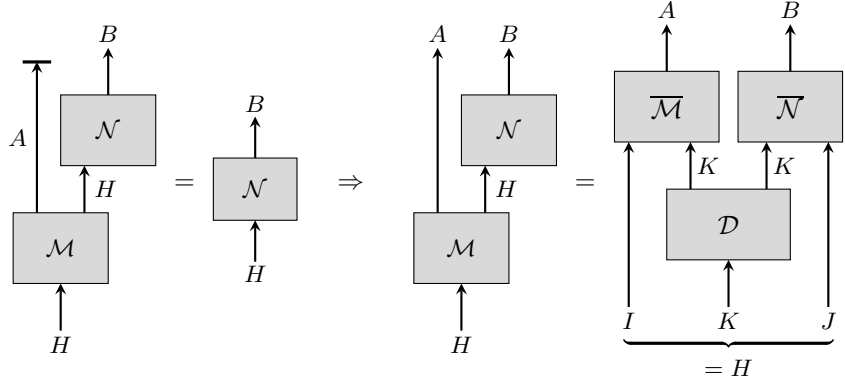

\begin{thm}
	\label{thm:tensorP}
	Let $\mathcal{M}:H\to A\otimes H$ and $\mathcal{N}:H\to B$ be CP maps, where $H= I \otimes K\otimes J$ is  finite-dimensional, such that
		\begin{enumerate}[label=(\roman*)]
			\item \label{cond:1} $\mathrm{tr}_A\circ\mathcal{N}\circ\mathcal{M}=\mathcal{N}$
			\item \label{cond:2} $\mathrm{tr}_A\circ\mathcal{M}$ is unital and trace non-increasing
			\item \label{cond:3} $\mathrm{tr}_H\circ\mathcal{M}$ is independent of $J$ and $\mathcal{N}$ is independent of $I$. 
		\end{enumerate}
	Then there exists a completely positive, trace-preserving map $\mathcal{D}:K\to K\otimes K$ (``doubling map'') such that
		\begin{align} \label{eq:mainresult}
			\mathcal{N}\circ\mathcal{M}=\big(\overline{\mathcal{M}}\otimes\overline{\mathcal{N}}\big)\circ\mathcal{D},
		\end{align}
	where $\overline{\mathcal{M}}\circ\tr_J=\tr_H\circ\mathcal{M}$, $\overline{\mathcal{N}}\circ\tr_I=\mathcal{N}$.\footnote{The maps $\overline{\mathcal{M}}$ and $\overline{\mathcal{N}}$ are well-defined and unique; see \cref{rem:mapunique}.}
\end{thm}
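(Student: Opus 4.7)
The plan is to recast the theorem as a statement about the Choi--Jamio\l kowski state $\pi_{AB\tilde I\tilde K\tilde J}:=(\mathcal{N}\circ\mathcal{M})(\psi_{H\tilde H})$ of $\mathcal{N}\circ\mathcal{M}$, and then to invoke \cref{thm:Hayden}. First I reduce to the case where $\mathcal{M}$ and $\mathcal{N}$ are CPTP: condition~\ref{cond:2} together with \cref{lem:unitaltrace} promotes $\tr_A\circ\mathcal{M}$ to a TP map, which (together with the output structure of $\mathcal{M}$) forces $\mathcal{M}$ and $\tr_H\circ\mathcal{M}$ to be TP. For $\mathcal{N}$, \cref{rem:cpmrescaling} followed by \cref{lem:cpmtracepreserving} provides a CPTP extension $\mathcal{N}'$ on an enlarged output space that remains independent of $I$ and still satisfies condition~\ref{cond:1}; any factorisation for $\mathcal{N}'\circ\mathcal{M}$ restricts back to one for $\mathcal{N}\circ\mathcal{M}$, so this reduction is harmless.

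\textbf{Reduction to a single term.} Under these assumptions, condition~\ref{cond:1} gives $\pi_{B\tilde H}=\tau_{B\tilde H}$, the Choi state of $\mathcal{N}$; the TP-ness of $\mathcal{N}$ forces $\pi_{A\tilde H}=\sigma'_{A\tilde H}$, the Choi state of $\tr_H\circ\mathcal{M}$; and $\pi_{\tilde H}=\mixed{\tilde H}$. Applying \cref{lem:mut0} to each independence statement in~\ref{cond:3} yields $\Minf{\tilde I}{B|\tilde K\tilde J}{\pi}=0$ and $\Minf{A}{\tilde J|\tilde I\tilde K}{\pi}=0$, while $\Minf{\tilde I}{\tilde J|\tilde K}{\pi}=0$ is immediate from the product form of $\mixed{\tilde H}$. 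Expanding with the chain rule,
\begin{equation*}
\Minf{A\tilde I}{B\tilde J|\tilde K}{\pi}=\Minf{\tilde I}{\tilde J|\tilde K}{\pi}+\Minf{\tilde I}{B|\tilde K\tilde J}{\pi}+\Minf{A}{\tilde J|\tilde I\tilde K}{\pi}+\Minf{A}{B|\tilde H}{\pi}=\Minf{A}{B|\tilde H}{\pi}.
\end{equation*}
So the task reduces to proving $\Minf{A}{B|\tilde H}{\pi}=0$.

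\textbf{The key step: $\Minf{A}{B|\tilde H}{\pi}=0$.} This is the principal technical obstacle and combines condition~\ref{cond:1} with the unital part of condition~\ref{cond:2}. My approach is to introduce Stinespring dilations $V_\mathcal{M}:H\to A\otimes H\otimes E_M$ and $V_\mathcal{N}:H\to B\otimes E_N$, producing purifications $\ket{\Psi}=V_\mathcal{N} V_\mathcal{M}\ket{\psi}_{H\tilde H}$ of $\pi$ and $\ket{\tau}=V_\mathcal{N}\ket{\psi}_{H\tilde H}$ of $\tau_{B\tilde H}$. Their $B\tilde H$ marginals agree by condition~\ref{cond:1}, so \cref{lem:fact1} yields an isometry $V:E_N\to A\otimes E_M\otimes E_N$ with $\ket{\Psi}=(\mathrm{id}_B\otimes V\otimes\mathrm{id}_{\tilde H})\ket{\tau}$, manufacturing $A$ purely from the $\mathcal{N}$-environment without touching $B$ or $\tilde H$. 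In parallel, unitality and TP-ness of $\tilde{\mathcal{M}}:=\tr_A\circ\mathcal{M}$ imply that the fixed-point set $\mathcal{F}$ of $\tilde{\mathcal{M}}^*$ is a $*$-subalgebra of operators on $H$; condition~\ref{cond:1} in Heisenberg form reads $\tilde{\mathcal{M}}^*\circ\mathcal{N}^*=\mathcal{N}^*$, placing the image of $\mathcal{N}^*$ inside $\mathcal{F}$ and inducing a block decomposition $H=\bigoplus_z H_A^z\otimes H_B^z$ on which $\mathcal{N}$ ignores the $H_B^z$-factors. Combining the purification-level factorisation via $V$ with this algebraic restriction on $\mathcal{N}$ should pin down the joint $A$--$B$--$\tilde H$ correlations enough to give $\Minf{A}{B|\tilde H}{\pi}=0$; I expect this matching of two structural constraints to be the main difficulty, as a direct data-processing bound through $V$ alone is too loose.

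\textbf{Building the doubling map.} Once $\Minf{A\tilde I}{B\tilde J|\tilde K}{\pi}=0$ is established, \cref{thm:Hayden} applied to $(A\tilde I,B\tilde J,\tilde K)$ gives $\tilde K=\bigoplus_z\tilde K_A^z\otimes\tilde K_B^z$ and a corresponding direct-sum-of-tensor-products decomposition of $\pi$. By \cref{rem:CJ} this lifts to a compatible decomposition $K=\bigoplus_z K_A^z\otimes K_B^z$ on the primal side. I then define $\mathcal{D}:K\to K\otimes K$ blockwise via its Choi state, engineered so that $(\overline{\mathcal{M}}\otimes\overline{\mathcal{N}})\circ\mathcal{D}$ has Choi state $\pi$, giving $\mathcal{N}\circ\mathcal{M}=(\overline{\mathcal{M}}\otimes\overline{\mathcal{N}})\circ\mathcal{D}$. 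Trace-preservation and complete positivity of $\mathcal{D}$ follow from the probability weights $p_z$ and the tensor structure within each block.
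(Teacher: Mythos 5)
Your reductions and chain-rule bookkeeping are correct: after passing to CPTP maps, the three terms $I(\tilde I\!:\!\tilde J|\tilde K)_\pi$, $I(\tilde I\!:\!B|\tilde K\tilde J)_\pi$ and $I(A\!:\!\tilde J|\tilde I\tilde K)_\pi$ do vanish for the reasons you give, so proving $I(A\!:\!B|\tilde H)_\pi=0$ is indeed equivalent to proving $I(A\tilde I\!:\!B\tilde J|\tilde K)_\pi=0$. But the proof of that remaining statement---which is the entire content of the theorem, the one place where Conditions~\ref{cond:1} and~\ref{cond:2} must be combined---is not carried out. You name two ingredients (an Uhlmann isometry $V:E_N\to A\otimes E_M\otimes E_N$ relating the two purifications, and the fixed-point algebra of $(\tr_A\circ\mathcal{M})^*$) and then state that combining them ``should'' give the result and that you ``expect this matching \dots to be the main difficulty''. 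That is an acknowledged gap, not a proof. As you yourself observe, data processing through $V$ alone cannot work: for a purification $\ket{\tau}_{B\tilde H E_N}$ one has $I(E_N\!:\!B|\tilde H)_\tau=H(B)_\tau+H(E_N)_\tau-H(\tilde H)_\tau$, which is generically strictly positive, and the fixed-point algebra of $(\tr_A\circ\mathcal{M})^*$ yields a decomposition of $H$, not of $K$, with no argument for how it meshes with the $I\otimes K\otimes J$ structure, with the independence conditions, or with the isometry $V$.

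The paper closes exactly this gap by a different mechanism, and it is instructive to see where unitality enters. Unitality of $\tr_A\circ\mathcal{M}$ gives $\tr_{A\tilde H}\circ\mathcal{M}(\psi_{H\tilde H})=\mixed{H}=\psi_H$, i.e.\ $\sigma_{AH\tilde H}\coloneqq\mathcal{M}(\psi_{H\tilde H})$ and the pure state $\psi_{H\tilde H}$ have the same marginal on $H$. \cref{lem:fact1} then produces a CPTP map $\mathcal{R}_{\tilde H\to A\tilde H}$ acting on the reference system only, with $\sigma_{AH\tilde H}=\mathcal{R}(\psi_{H\tilde H})$; since $\mathcal{R}$ commutes with $\mathcal{N}$ and Condition~\ref{cond:1} gives $\rho_{B\tilde H}=\mathcal{N}(\psi_{H\tilde H})$, one obtains $\rho_{AB\tilde H}=\mathcal{R}_{\tilde H\to A\tilde H}(\rho_{B\tilde H})$: the system $A$ together with the output copy of $\tilde H$ is manufactured from the input copy of $\tilde H$ alone, with $B$ untouched. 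Feeding this single identity into a chain of conditional-entropy inequalities (data processing for the conditioning system, plus \cref{lem:mut0} for the two independence conditions) yields $H(B\tilde J|\tilde K)_\rho\le H(B\tilde J|A\tilde I\tilde K)_\rho$ and hence the vanishing CMI. Your environment-side isometry does not have this property---it produces $A$ from $E_N$, a system that is in general correlated with $B$ given $\tilde H$---so it cannot substitute for $\mathcal{R}$. Two further points would remain even after repairing the key step: the theorem does not assume $A$ and $B$ finite-dimensional, so an approximation argument (as in \cref{rem:infiniteAB}) is needed before entropies and \cref{thm:Hayden} may be invoked; and the construction of $\mathcal{D}$ requires an actual verification that the block operators in the Hayden et al.\ decomposition coincide (up to the normalisations $\tau^z_{\tilde A^z}=\mathrm{id}_{\tilde A^z}$ and $\tau^z_{\tilde B^z}=\mathrm{id}_{\tilde B^z}$ forced by $\rho_{\tilde H}=\mixed{\tilde H}$) with the Choi operators of $\overline{\mathcal{M}}$ and $\overline{\mathcal{N}}$, not merely the assertion that $\mathcal{D}$ can be ``engineered''.
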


\noindent
\textbf{Proof outline.} The proof of \cref{thm:tensorP} consists of the following steps:
	\begin{enumerate}[label=(\roman*)]
		\item Let $\rho_{AB\tilde{H}}$ be the C.-J.~operator of the map $\mathcal{N}\circ\mathcal{M}$, where $\tilde{H}$ is a Hilbert space isomorphic to $H$. Show that $\Minf{A\tilde{I}}{B \tilde{J}|\tilde{K}}{\rho}=0$, i.e., $H(B\tilde{J}|\tilde{K})_{\rho}=H(B\tilde{J}|A\tilde{I}\tilde{K})_{\rho}$, via the data-processing inequality (\textrightarrow\,\cref{claim:mutualinfomodified}).
		\item Apply \cref{thm:Hayden}, which yields that  $\rho_{AB\tilde{H}}$ is of the form
			\begin{equation}
				\label{eq:statedecomp}
				\rho_{AB\tilde{H}}=\sum_z p_z\,\rho^{z}_{A\tilde{I}\tilde{K}_A^z}\otimes\rho^{z}_{\tilde{K}_B^z\tilde{J} B}
			\end{equation}
			for a probability distribution $\{p_z\}$.
		\item Show that $\rho_{AB\tilde{H}}$ above is equal to the C.-J.~operator of the map $\big(\overline{\mathcal{M}}\otimes\overline{\mathcal{N}}\big)\circ\mathcal{D}$ (\textrightarrow\,\cref{claim:tensormap}). 	\end{enumerate}

\begin{proof}[Proof of \cref{thm:tensorP}]

	We give the proof here under the assumption that the CP maps $\mathcal{M}$ and $\mathcal{N}$ are TP and that $A$ and $B$ are finite-dimensional. As we will explain in \cref{rem:prooftracenoninc} and \cref{rem:infiniteAB} below, these assumptions can be made without loss of generality.

	First, we define a couple of quantum states that will be essential throughout the proof. Consider a Hilbert space $\tilde{H}$ that is isomorphic to $H$. The C.-J.~operator of $\mathcal{N}\circ \mathcal{M}$ is given by
		\begin{equation}
				\rho_{AB\tilde{H}}=\mathcal{N}\circ\mathcal{M}(\psi_{H \tilde{H}})\label{eq:defrho},
		\end{equation}
	where $\psi_{H\tilde{H}}\coloneqq\ketbra{\psi}{\psi}_{H\tilde{H}}$ is a maximally entangled state (see \cref{rem:CJ}). Thus, $\psi_{H}=\mixed{H}$ and $\psi_{\tilde{H}}=\mixed{\tilde{H}}$.
	Note that since we assume that $\mathcal{M}$ and $\mathcal{N}$ are TP, $\rho_{AB\tilde{H}}$ is normalised and hence a state.

	We will furthermore need the following quantum states (see \cref{fig:circuitdiagrams} for an illustration):
	\begin{align}
		\sigma_{AH\tilde{H}}&=\mathcal{M}(\psi_{H \tilde{H}})\label{eq:defsigma}\\
		\rho'_{B\tilde{H}}&=\mathcal{N}(\psi_{H\tilde{H}})\label{eq:defrhop}.
	\end{align}

		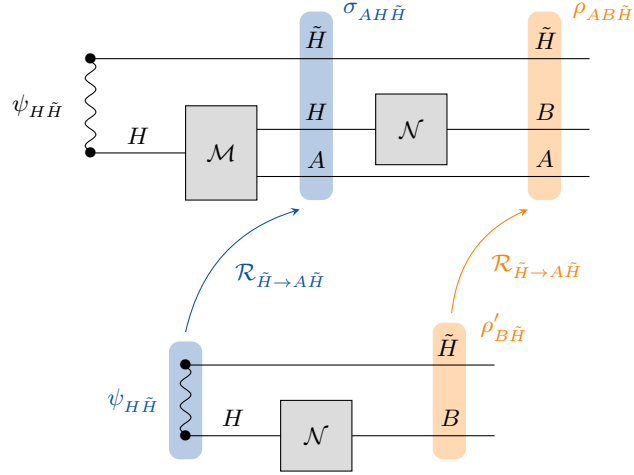
\begin{figure}[t]
		\centering
		\begin{tikzpicture}[scale=1.25]
			\draw[fill=myblue,draw=none,opacity=0.3,rounded corners] (2.2,-0.5) rectangle (2.55,1.5);
			\node[color=myblue] at (3,1.5) {\small $\sigma_{AH\tilde{H}}$};
			\draw[fill=myorange,draw=none,opacity=0.3,rounded corners] (4.6,-0.5) rectangle (4.95,1.5);
			\node[color=myorange] at (5.4,1.5) {\small $\rho_{AB\tilde{H}}$};
			\node at (0,0) {$\bullet$};
			\node at (0,1) {$\bullet$};
			\draw[decoration={snake,segment length=3mm, amplitude=.7mm},decorate] (0,0) -- (0,1);
			\node at (-0.55,0.5) {\small $\psi_{H\tilde{H}}$};
			\draw[fill=LightGray] (1,-0.5) rectangle (1.75,0.5);
			\node at (1.375,0) {\small $\mathcal{M}$};
			\draw[fill=LightGray] (3,-0.125) rectangle (3.75,0.625);
			\node at (3.375,0.25) {\small $\mathcal{N}$};
			\draw (0,0) to node[above] {\small $H$} (1,0);
			\draw (1.75,0.25) to node[above] {\small $H$} (3,0.25);
			\draw (3.75,0.25) to node[above,pos=0.69] {\small $B$} (5.25,0.25);
			\draw (1.75,-0.25) to node[above,pos=0.865] {\small $A$} (5.25,-0.25);
			\draw (0,1) to node[above,pos=0.912] {\small $\tilde{H}$} (5.25,1);
			\node at (2.375,1.225) {\small $\tilde{H}$};
			\node at (2.375,-0.075) {\small $A$};
			\begin{scope}[yshift=-3cm,xshift=1cm]
				\draw[fill=myblue,draw=none,opacity=0.3,rounded corners] (-0.175,-0.25) rectangle (0.175,1);
				\draw[fill=myorange,draw=none,opacity=0.3,rounded corners] (2.6,-0.25) rectangle (2.95,1.2);
				\node[color=myorange] at (3.35,1.15) {\small $\rho'_{B\tilde{H}}$};
				\node at (0,0) {$\bullet$};
				\node at (0,0.75) {$\bullet$};
				\draw[decoration={snake,segment length=3mm, amplitude=.7mm},decorate] (0,0) -- (0,0.75);
				\node[color=myblue] at (-0.55,0.375) {\small $\psi_{H\tilde{H}}$};
				\draw[fill=LightGray] (1,-0.375) rectangle (1.75,0.375);
				\node at (1.375,0) {\small $\mathcal{N}$};
				\draw (0,0) to node[above] {\small $H$} (1,0);
				\draw (0,0.75) to node[above,pos=0.85] {\small $\tilde{H}$} (3.25,0.75);
				\draw (1.75,0) to node[above,pos=0.69] {\small $B$} (3.25,0);
			\end{scope}
			\draw[color=myblue,->,>=stealth] (1,-1.9) to [bend left] (2.2,-0.6);
			\draw[color=myorange,->,>=stealth] (3.8,-1.7) to [bend left] (4.6,-0.6);
			\node[color=myblue] at (2,-1.3) {\small $\mathcal{R}_{\tilde{H}\to A\tilde{H}}$};
			\node[color=myorange] at (4.7,-1.2) {\small $\mathcal{R}_{\tilde{H}\to A\tilde{H}}$};
		\end{tikzpicture}
		\caption{\label{fig:circuitdiagrams} \textbf{Relations between states used in the proof of  \cref{thm:tensorP}.} The diagram shows the states defined in \cref{eq:defrho,eq:defsigma,eq:defrhop} and the CP maps that connect them.}
	\end{figure}

	\begin{customthm}{1}
		\label{claim:mutualinfomodified}
		For the C.-J. operator $\rho_{AB\tilde{H}}$ of $\mathcal{N}\circ\mathcal{M}$ defined in \cref{eq:defrho}, it holds that 
		\begin{equation}
			\Minf{A\tilde{I}}{B \tilde{J} |\tilde{K}}{\rho}=0.
		\end{equation}
	\end{customthm}
	\begin{proof}[Proof of \cref{claim:mutualinfomodified}]
		To prove the statement, we need to show that $H(B\tilde{J}|\tilde{K})_{\rho}=H(B\tilde{J}|A\tilde{I}\tilde{K})_{\rho}$.
		From strong subadditivity, it follows that 
		\begin{equation}
			H(B\tilde{J}|A \tilde{I} \tilde{K})_{\rho}\le H(B\tilde{J}|\tilde{K})_{\rho}.
		\end{equation}
	
		To show the other direction, note that from the unitality of $\tr_A\circ\mathcal{M}$ stated in Condition~\ref{cond:2}, we have that
		\begin{align} 
			\begin{split}
				\sigma_H 
				&=\mathrm{tr}_{A\tilde{H}}\circ\mathcal{M}(\psi_{H\tilde{H}})\\
				&=\mathrm{tr}_{A}\circ\mathcal{M}\left(\mixed{H}\right)\\
				&=\mixed{H}\\
				&=\psi_H,
			\end{split}
		\end{align}
		i.e., $\psi_{H\tilde{H}}$ is a purification of $\sigma_H$. From \cref{lem:fact1} with the assignment $G \to \tilde{H}$, $K \to A \otimes \tilde{H}$, $H \to H$, $\rho_{G H} \to \psi_{\tilde{H} H}$, and $\sigma_{K H} \to \sigma_{A \tilde{H} H}$, we know there exists a CPTP map $\mathcal{R}_{\tilde{H}\to A\tilde{H}}$ such that the state in \cref{eq:defsigma} can be written as
		\begin{equation} 
			\label{eq:uselem1}
			\sigma_{AH\tilde{H}}=\mathcal{R}_{\tilde{H}\to A\tilde{H}}(\psi_{H\tilde{H}}).
		\end{equation} 
		 From \cref{eq:uselem1}, it follows that
			\begin{equation}
				\label{eq:useclaim1}
				\rho_{AB\tilde{H}}=\mathcal{N}(\sigma_{AH\tilde{H}})=\mathcal{N}\circ\mathcal{R}_{\tilde{H}\to A\tilde{H}}(\psi_{H\tilde{H}})=\mathcal{R}_{\tilde{H}\to A\tilde{H}}\circ\mathcal{N}(\psi_{H\tilde{H}})=\mathcal{R}_{\tilde{H}\to A\tilde{H}}(\rho'_{B\tilde{H}}), 
			\end{equation} 
			where we have used that $\mathcal{N}$ and $\mathcal{R}$ act on different systems and thus commute  (see \cref{fig:circuitdiagrams}).
		With the chain rule for conditional entropy it follows that
		\begin{align}
			H(B\tilde{J}|\tilde{K})_{\rho'}&=H(B|\tilde{K}\tilde{J})_{\rho'}+H(\tilde{J}|\tilde{K})_{\rho'}\\
			&=H(B|\tilde{K}\tilde{J}\tilde{I})_{\rho'}+H(\tilde{J}|\tilde{K}\tilde{I})_{\rho'}\label{eq:intermediateH}\\
			&\le H(B|\tilde{K}\tilde{J}A\tilde{I})_{\rho}+H(\tilde{J}|\tilde{K}\tilde{I})_{\rho'}\\
			&=H(B|\tilde{K}\tilde{J}A\tilde{I})_{\rho}+H(\tilde{J}|\tilde{K}\tilde{I})_{\rho}\\
			&=H(B|\tilde{K}\tilde{J}A\tilde{I})_{\rho}+H(\tilde{J}|\tilde{K}A\tilde{I})_{\rho}+\Minf{A}{\tilde{J}|\tilde{K}\tilde{I}}{\rho}\\
			&=H(B\tilde{J}|\tilde{K}A\tilde{I})_{\rho}.
		\end{align}
		In the second line, because the map $\mathcal{N}$ is such that $B$ is independent of $I$, we can apply \cref{lem:mut0}, which yields $H(B|\tilde{K}\tilde{J})_{\rho'}=H(B|\tilde{K}\tilde{J}\tilde{I})_{\rho'}$. Also, we have used that $\rho'_{\tilde{H}}=\mixed{\tilde{H}}$ implies $H(\tilde{J}|\tilde{K})_{\rho'}=H(\tilde{J}|\tilde{K}\tilde{I})_{\rho'}$.
		In the third line we have used \cref{eq:useclaim1} together with the data processing inequality, and in the fourth line we have used that $\rho_{\tilde{H}}=\mixed{\tilde{H}}=\rho'_{\tilde{H}}$. The expression in the fifth line directly follows from the definition of the mutual information. The mutual information then vanishes because the map $\mathrm{tr}_H \circ \mathcal{M}$ is such that $A$ is independent of $J$, which allows us to apply \cref{lem:mut0}.
		Finally, because of Condition~\ref{cond:1},  $\rho'_{B\tilde{H}}=\mathcal{N}(\psi_{H\tilde{H}})=\mathrm{tr}_A\circ\mathcal{N}\circ\mathcal{M}(\psi_{H\tilde{H}})=\rho_{B\tilde{H}}$, hence it follows that 
			\begin{equation}
				H(B\tilde{J}|\tilde{K})_{\rho'}=H(B\tilde{J}|\tilde{K})_{\rho}.
			\end{equation}

		Summarising, we have shown that 
		\begin{equation}
			H(B\tilde{J}|\tilde{K}A \tilde{I})_{\rho}\le H(B\tilde{J}|\tilde{K})_{\rho}\le H(B\tilde{J}|\tilde{K}A \tilde{I} )_{\rho},
		\end{equation}
		and therefore 
		\begin{equation}
			H(B\tilde{J}|\tilde{K}A \tilde{I})_{\rho}= H(B\tilde{J}|\tilde{K})_{\rho}.
		\end{equation}
		Hence, $\Minf{B\tilde{J}}{A \tilde{I}|\tilde{K}}{\rho}=0$, which is the statement of the claim.
	\end{proof}

	Using \cref{thm:Hayden}, \cref{claim:mutualinfomodified} implies that there exists a decomposition of $\tilde{K}$ of the form
		\begin{equation}
		\label{eq:Ptildecomp}
			\tilde{K}=\bigoplus_z \tilde{K}_A^z\otimes\tilde{K}_B^z
		\end{equation}
	such that 
		\begin{equation}
			\label{eq:defrho2}
			\rho_{AB\tilde{H}}=\sum_z p_z\,\rho_{A\tilde{I}\tilde{K}_A^z}^{z}\otimes\rho_{\tilde{K}_B^z\tilde{J} B}^{z}.
		\end{equation} 
	Using the decomposition of $\tilde{K}$, we may decompose $\tilde{H}$ as 
		\begin{align}
			\tilde{H}&=\tilde{I}\otimes\tilde{K}\otimes\tilde{J}
			=\bigoplus_z\tilde{I}\otimes\tilde{K}_A^z\otimes\tilde{K}_B^z\otimes\tilde{J}
			=\bigoplus_z \tilde{A}^z\otimes \tilde{B}^z
			=\bigoplus_z \tilde{H}^z,
		\end{align}
	where we have introduced the notation $\tilde{A}^z\coloneqq\tilde{I}\otimes\tilde{K}_A^z$, $\tilde{B}^z\coloneqq\tilde{K}_B^z\otimes\tilde{J}$, and  $\tilde{H}^z\coloneqq\tilde{A}^z\otimes\tilde{B}^z$. We can thus rewrite \cref{eq:defrho2} as 
	\begin{align} \label{eq:defrho4}
	  \rho_{AB\tilde{H}} = \sum_z p_z\,\rho_{A\tilde{A}^z}^{z}\otimes\rho_{\tilde{B}^z B}^{z} \ .
	\end{align}
	Taking the trace over $A$ and $B$, and applying a projection $\Pi_{\tilde{H}^z}$ onto $\tilde{H}^z$, for any $z$, we have
	\begin{align} \label{eq:rhoOnHz}
	   p_z\,\rho_{\tilde{A}^z}^{z}\otimes\rho_{\tilde{B}^z}^{z}  = \Pi_{\tilde{H}^z}(\rho_{\tilde{H}}) 
	   = \Pi_{\tilde{H}^z}(\psi_{\tilde{H}})  
	   = \Pi_{\tilde{H}^z}(\mixed{\tilde{H}}) 
	   \sim \mathrm{id}_{\tilde{H}^z}
	   = \mathrm{id}_{\tilde{A}^z} \otimes \mathrm{id}_{\tilde{B}^z},
	\end{align}
        where the second equality follows from \cref{eq:defrho} and the TP property of $\mathcal{M}$ and $\mathcal{N}$, and the third from the fact that $\psi_{H \tilde{H}}$ is maximally entangled. To proceed, it will be convenient to introduce rescaled operators  
        	\begin{align}
		\label{eq:tauproperties}
		\tau_{A\tilde{A}^z}^{z} \sim \rho_{A\tilde{A}^z}^{z}  \qquad \text{and} \qquad
		\tau_{\tilde{B}^z B}^{z} \sim \rho_{\tilde{B}^z B}^{z},
	\end{align}	
        which are normalised such that  
        \begin{align} \label{eq:taunormalisation}
          \mathrm{tr}\bigl( \tau_{A\tilde{A}^z}^{z}  \bigr) = \dim(\tilde{A}^z) \qquad \text{and} \qquad  \mathrm{tr} \bigl( \tau_{\tilde{B}^z B}^{z} \bigr) = \dim(\tilde{B}^z).
        \end{align} 
        It then follows from \cref{eq:rhoOnHz} that
	\begin{align}
		\label{eq:tauproperties}
		\tau_{\tilde{A}^z}^{z} = \mathrm{id}_{\tilde{A}^z} \qquad \text{and} \qquad  
		\tau_{\tilde{B}^z}^{z} = \mathrm{id}_{\tilde{B}^z}.
	\end{align}	
With these operators, we may rewrite \cref{eq:defrho4} as
	\begin{equation}
		\label{eq:rhowithtausq}
		\rho_{AB\tilde{H}}=\sum_z q_z \tau_{A\tilde{A}^z}^{z}\otimes\tau_{\tilde{B}^z B}^{z}
	\end{equation}
	for some appropriately chosen weights $q_z$, which we will now determine. For this we again take the trace over $A$ and $B$ on both sides and apply the projection $\Pi_{\tilde{H}^z}$, which yields
		\begin{align}
	           \Pi_{\tilde{H}^z}(\rho_{\tilde{H}}) 
	           = q_z\,\tau^{z}_{\tilde{A}^z}\otimes \tau^{z}_{\tilde{B}^z}
	           = q_z\,\mathrm{id}_{\tilde{A}^z}\otimes\mathrm{id}_{\tilde{B}^z}
	           = q_z\,\mathrm{id}_{\tilde{H}^z}.
		\end{align}
	Since, according to \cref{eq:rhoOnHz}, this must also equal $\Pi_{\tilde{H}^z}(\mixed{\tilde{H}})$, we find
	\begin{align}
	   q_z\,\mathrm{id}_{\tilde{H}^z} 
	   = \Pi_{\tilde{H}^z}(\mixed{\tilde{H}})
	   = {\textstyle \frac{1}{\dim(H)}} \Pi_{\tilde{H}^z}(\mathrm{id}_{\tilde{H}}) 
	   = {\textstyle \frac{1}{\dim(H)}} \mathrm{id}_{\tilde{H}^z},
	\end{align} 
	which implies $q_z=\frac{1}{\dim(H)}$. Inserting this into \cref{eq:rhowithtausq}, we conclude that
	\begin{align} \label{eq:rhowithtaus}
	  \rho_{AB\tilde{H}} = {\textstyle \frac{1}{\dim(H)}} \sum_z  \tau_{A\tilde{A}^z}^{z}\otimes\tau_{\tilde{B}^z B}^{z}.
	\end{align}

	\begin{customthm}{2}
		\label{claim:tensormap}
		There exists a CPTP map $\mathcal{D}:K\to K\otimes K$ such that
			\begin{equation}
				\mathcal{N}\circ\mathcal{M}=\big(\overline{\mathcal{M}}\otimes\overline{\mathcal{N}}\big)\circ\mathcal{D},
			\end{equation}
		where $\overline{\mathcal{M}}\circ\tr_J=\tr_H\circ\mathcal{M}$, $\overline{\mathcal{N}}\circ\tr_I=\mathcal{N}$.
	\end{customthm}

	\begin{proof}[Proof of \cref{claim:tensormap}]
		In the following, we use the notation $\mathcal{D}:K\to K'\otimes K''$, where $K'=K=K''$, to make clear how the involved maps are acting on the different Hilbert spaces. Let
			\begin{equation}
				\label{eq:defD}
				\mathcal{D}(W_K)\coloneqq\sum_z V^{(z)}W_K {V^{(z)}}^*\otimes\mixed{K_B'^z}\otimes\mixed{K_A''^z},
			\end{equation}
		where 
			\begin{equation}
				\label{eq:defV}
				V^{(z)}\coloneqq\sum_{a,b}\left(\ket{a}_{K_A'^z}\otimes\ket{b}_{K_B''^z}\right)\left(\bra{a}_{K_A^z}\otimes\bra{b}_{K_B^z}\right).
			\end{equation}
		The map $\mathcal{D}$ is CP because each term in its definition is CP, and we will verify at the end of the proof that it is also TP. 
		
		Next, we calculate the C.-J.~operator $\xi_{AB\tilde{H}}$ of the map $\big(\overline{\mathcal{M}}\otimes\overline{\mathcal{N}}\big)\circ\mathcal{D}$ with respect to the same state $\psi_{H \tilde{H}} = \ketbra{\psi}{\psi}_{H \tilde{H}}$ as in \cref{eq:defrho}. Because, according to \cref{rem:CJ}, this state can be written as $\ket{\psi}_{H \tilde{H}} = \ket{\psi}_{I \tilde{I}} \otimes \ket{\psi}_{K \tilde{K}} \otimes \ket{\psi}_{J \tilde{J}}$, where
		\begin{align} \label{eq:psidecomposed2}
		\ket{\psi}_{K\tilde{K}}={\textstyle \sqrt{\frac{1}{\dim(K)}}} \sum_{z,a,b} \ket{a}_{K_A^z}\otimes \ket{b}_{K_B^z} \otimes \ket{a}_{\tilde{K}_A^z} \otimes \ket{b}_{\tilde{K}_B^z},
		\end{align}
		we find
			\begin{align}
				\xi_{AB\tilde{H}}&\coloneqq \big(\overline{\mathcal{M}}\otimes\overline{\mathcal{N}}\big)\circ\mathcal{D}\left(\psi_{H\tilde{H}}\right)\\
				&=\big(\overline{\mathcal{M}}\otimes\overline{\mathcal{N}}\big)\circ\mathcal{D}\left(\ketbra{\psi}{\psi}_{I\tilde{I}}\otimes \ketbra{\psi}{\psi}_{K\tilde{K}}\otimes \ketbra{\psi}{\psi}_{J\tilde{J}}\right)\\
				&=\big(\overline{\mathcal{M}}\otimes\overline{\mathcal{N}}\big)\Bigg({\textstyle \frac{1}{\dim(K)}} \sum_z\sum_{\substack{a,b,\\\bar{a},\bar{b}}}\Bigl(\ketbra{a}{\bar{a}}_{K_A'^z} \otimes \ketbra{b}{\bar{b}}_{K_B''^z}\Bigr)\otimes \Bigl(\ketbra{a}{\bar{a}}_{\tilde{K}_A^z} \otimes \ketbra{b}{\bar{b}}_{\tilde{K}_B^z}\Bigr)\\
				&\hspace{20pt}\otimes \mixed{K_B'^z}\otimes\mixed{K_A''^z}\otimes\ketbra{\psi}{\psi}_{I\tilde{I}}\otimes\ketbra{\psi}{\psi}_{J\tilde{J}}\Bigg)\\
				&= {\textstyle \frac{1}{\dim(K)}} \sum_z\xi_{A\tilde{A}^z }^{z}\otimes\xi_{B\tilde{B}^z}^{z},\label{eq:CJsigma}
			\end{align}
		where 
			\begin{align}
				\xi_{A\tilde{A}^z}^{z} 
				= \xi_{A\tilde{K}_A^z\tilde{I}}^{z}&\coloneqq\overline{\mathcal{M}}\left(\sum_{a,\bar{a}}\ketbra{a}{\bar{a}}_{K_A'^z}\otimes\mixed{K_B'^z}\otimes\ketbra{a}{\bar{a}}_{\tilde{K}_A^z} \otimes\ketbra{\psi}{\psi}_{I\tilde{I}} \right)\\
				\xi_{B\tilde{B}^z}^{z}
				= \xi_{B\tilde{K}_B^z\tilde{J}}^{z}&\coloneqq\overline{\mathcal{N}}\left(\sum_{b,\bar{b}}\mixed{K_A''^z}\otimes\ketbra{b}{\bar{b}}_{K_B''^z}\otimes\ketbra{b}{\bar{b}}_{\tilde{K}_B^z} \otimes\ketbra{\psi}{\psi}_{J\tilde{J}} \right).
			\end{align}
		It remains to show that $\xi_{AB\tilde{H}}$ equals the C.-J.~state $\rho_{AB\tilde{H}}$  of the map $\mathcal{N} \circ \mathcal{M}$. To this aim, we note that \cref{eq:taunormalisation} implies
			\begin{equation}
				\label{eq:tracedrho}
				\tr_{B\tilde{B}^z} \circ \Pi_{\tilde{H}^z} (\rho_{AB\tilde{H}})
				= {\textstyle \frac{\dim(\tilde{B}^z)}{\dim(H)}} \, \tau_{A\tilde{A}^z}={\textstyle \frac{\dim(K_B^z)}{\dim(K)\dim(I)}} \, \tau_{A\tilde{A}^z}.
			\end{equation}
		Furthermore, because $\mathcal{N}$ is TP, we have $\tr_B\circ\mathcal{N}\circ\mathcal{M}=\tr_H\circ \mathcal{M}$ (see \cref{rem:TP}). It thus follows that the partial trace $\rho_{A\tilde{H}} = \mathrm{tr}_B(\rho_{A B \tilde{H}})$ is the C.-J.~state of $\tr_H\circ\mathcal{M}$, and thus
			\begin{align}
				\rho_{A\tilde{H}}&=\left(\tr_H\circ\mathcal{M}\right)\left(\ketbra{\psi}{\psi}_{I\tilde{I}}\otimes\ketbra{\psi}{\psi}_{K\tilde{K}}\otimes\ketbra{\psi}{\psi}_{J\tilde{J}}\right)\\
				&=\left(\overline{\mathcal{M}}\circ\tr_J\right)\left(\ketbra{\psi}{\psi}_{I\tilde{I}}\otimes\ketbra{\psi}{\psi}_{K\tilde{K}}\otimes\ketbra{\psi}{\psi}_{J\tilde{J}}\right)\\
				&=\overline{\mathcal{M}}\left(\ketbra{\psi}{\psi}_{I\tilde{I}}\otimes\ketbra{\psi}{\psi}_{K\tilde{K}}\right)\otimes\mixed{\tilde{J}}.
			\end{align}
		Using again \cref{eq:psidecomposed2}, we can thus write 
			\begin{align}
				\tr_{B\tilde{B}^z} \circ \Pi_{\tilde{H}^z}(\rho_{AB\tilde{H}})
				&=\tr_{\tilde{B}^z} \circ \Pi_{\tilde{H}^z}(\rho_{A\tilde{H}})\\
				&= \tr_{\tilde{K}_B^z}  \circ \Pi_{\tilde{H}^z} \circ \overline{\mathcal{M}}\left(\ketbra{\psi}{\psi}_{K\tilde{K}} \otimes \ketbra{\psi}{\psi}_{I\tilde{I}} \right) \\
				&=\tr_{\tilde{K}_B^z} \circ \overline{\mathcal{M}}\Bigg({\textstyle \frac{1}{\dim(K)}} \sum_{\substack{a,b,\\ \bar{a},\bar{b}}} \ketbra{a}{\bar{a}}_{K_A^{z}} \otimes \ketbra{b}{\bar{b}}_{K_B^{z}}
			\otimes \ketbra{a}{\bar{a}}_{\tilde{K}_A^{z}} \otimes \ketbra{b}{\bar{b}}_{\tilde{K}_B^{z}}\otimes\ketbra{\psi}{\psi}_{I\tilde{I}}\Bigg)\\
				&={\textstyle \frac{1}{\dim(K)}}\,\overline{\mathcal{M}}\Bigg(\sum_{a,\bar{a}}\ket{a}\bra{\bar{a}}_{K_A^{z}}\otimes\underbrace{\mathrm{id}_{K_B^z}}_{=\dim(K_B^z)\mixed{K_B^z}}\otimes\ketbra{a}{\bar{a}}_{\tilde{K}_A^z}\otimes\ketbra{\psi}{\psi}_{I\tilde{I}}\Bigg)\\
				&={\textstyle \frac{\dim(K_B^z)}{\dim(K)}}\,\xi_{A\tilde{A}^z}^{z}.
			\end{align}
		Comparing this to \cref{eq:tracedrho} yields
			\begin{align}
				\xi_{A\tilde{A}^z}^{z}&={\textstyle \frac{1}{\dim(I)}} \, \tau_{A\tilde{A}^z}^{z}.
			\end{align}
		By an analogous argument we obtain
		        \begin{align}
				\xi_{B\tilde{B}^z}^{z}&={\textstyle \frac{1}{\dim(J)}} \, \tau_{B\tilde{B}^z}^{z}.
			\end{align}
		Inserting this into \cref{eq:CJsigma} and comparing to \cref{eq:rhowithtaus} yields $\xi_{AB\tilde{H}}=\rho_{AB\tilde{H}}$. Thus, the respective C.-J.~states of $\mathcal{N}\circ\mathcal{M}$ and $(\overline{\mathcal{M}}\otimes\overline{\mathcal{N}})\circ\mathcal{D}$ are equal, hence the two maps are equal.
		
		To conclude the proof of \cref{claim:tensormap}, we need to verify that the map $\mathcal{D}$ is TP as claimed. Note first that, by the definition of $\overline{\mathcal{M}}$ and the TP property of~$\mathcal{M}$ (see \cref{rem:TP}),
		\begin{align}
		  \mathrm{tr}_A \circ \overline{\mathcal{M}} \circ \mathrm{tr}_J 
		  = \mathrm{tr}_{A H} \circ \mathcal{M}
		  = \mathrm{tr}_H 
		  = \mathrm{tr}_{K I J}.
		\end{align}
		This implies that $\mathrm{tr}_A \circ \overline{\mathcal{M}}  = \mathrm{tr}_{K I}$, which means that $\overline{\mathcal{M}}$ is TP. Similarly, one can see that $\overline{\mathcal{N}}$ is TP. Hence, $\overline{\mathcal{M}} \otimes \overline{\mathcal{N}}$, which goes from $K' \otimes K'' \otimes I \otimes J$ to $A \otimes B$, is TP, too. Using this, then \cref{eq:mainresult}, and finally that $\mathcal{N} \circ \mathcal{M}$ is TP, we find
		\begin{align}
		  \mathrm{tr}_{K' K'' I J} \circ \mathcal{D} = \mathrm{tr}_{A B} \circ (\overline{\mathcal{M}} \otimes \overline{\mathcal{N}}) \circ \mathcal{D} = \mathrm{tr}_{A B} \circ \mathcal{N} \circ \mathcal{M} 
		  = \mathrm{tr}_{K I J}.
		\end{align}
		This implies that $\mathrm{tr}_{K' K''} \circ \mathcal{D} = \mathrm{tr}_{K}$, i.e., $\mathcal{D}$ is TP.
	\end{proof}

	With the proof of \cref{claim:tensormap}, we have established \cref{eq:mainresult}. 
	\end{proof}

\begin{rem}\label{rem:prooftracenoninc}
        \cref{thm:tensorP} follows from a more specialised version of the same where the maps $\mathcal{M}$ and $\mathcal{N}$ are assumed to be TP. To see this, note first that \cref{lem:unitaltrace} immediately implies that $\tr_A \circ \mathcal{M}$ is TP. Hence, $\mathcal{M}$ must be TP anyway. It thus remains to show the following: For any map $\mathcal{N}$ from $H$ to $B$, there exists a TP map $\mathcal{N}'$ such that the correctness of \cref{thm:tensorP} for $\mathcal{N}'$ implies the correctness of the theorem for $\mathcal{N}$.
                
Let thus $\mathcal{N}$ be a CP map that satisfies the assumptions of~\cref{thm:tensorP}. From~\cref{rem:cpmrescaling} and the fact that $H$ is finite-dimensional, we know that $\mathcal{N}$ can always be rescaled such that it is trace non-increasing. The rescaling does not alter Condition~\ref{cond:1}, Condition~\ref{cond:3}, and~\eqref{eq:mainresult}. We can thus assume without loss of generality that $\mathcal{N}$ is trace non-increasing. Let now $\mathcal{N}'$ be the TP extension of $\mathcal{N}$ defined by \cref{lem:cpmtracepreserving}, which maps from $H$ to $B'$, where $B' = B \oplus \mathrm{span} \{\ket{\perp}\}$. We have
		\begin{align}
			\mathrm{tr}_{A}\circ \mathcal{N}' \circ \mathcal{M}
			 & =\mathrm{tr}_{A}\circ\,\mathcal{N}\circ\mathcal{M}+ \perp_{B'} \circ \big( \mathrm{tr}_A \circ \mathrm{tr}_H \circ \mathcal{M}-\, \mathrm{tr}_A \circ \mathrm{tr}_B \circ \mathcal{N} \circ \mathcal{M} \big) \\ \mathcal{M}-\,\mathrm{tr}_B \circ \mathcal{N} \big) \\
			& =\mathcal{N}+ \perp_{B'} \circ \big(\mathrm{tr}_H -\mathrm{tr}_B \circ \mathcal{N} \big) 
			=\mathcal{N}'.
		\end{align}
	where the second equality holds because $\mathcal{N}$ satisfies Condition~\ref{cond:1}, and because $\mathcal{M}$ is TP. This shows that $\mathcal{N}'$ also satisfies Condition~\ref{cond:1}. Furthermore, because $\mathcal{N}$ satisfies Condition~\ref{cond:3} by assumption, it is independent of~$I$, and hence  \cref{lem:cpmtracepreserving} implies the same is true for  $\mathcal{N}'$. We have thus established that $\mathcal{N}'$ is a CPTP map that meets all conditions of \cref{thm:tensorP}. The specialised version of \cref{thm:tensorP} for TP maps now implies that there exists a CPTP map $\mathcal{D}$ such that
		\begin{equation}
		\mathcal{N}' \circ \mathcal{M} = \left(\overline{\mathcal{M}} \otimes \overline{\mathcal{N}'}\right)\circ\mathcal{D}.
		\end{equation}
	We can concatenate both sides with a projection map $\Pi_B$ onto the subspace $B$ of $B'$. Since $\mathcal{N} = \Pi_B\circ \mathcal{N}'$ and $\overline{\mathcal{N}} = \Pi_B\circ\overline{\mathcal{N}'}$ we find that~\eqref{eq:mainresult} and, hence, \cref{thm:tensorP}, holds for the general map $\mathcal{N}$.  
\end{rem}

\begin{rem} \label{rem:unitalrequirement}
  The unitality requirement in Condition~\ref{cond:2} can be replaced by the weaker condition that $\mathrm{tr}_A \circ \mathcal{M}$ is unital when restricted to the subsystem~$K \otimes J$ of $H$, i.e., 
 \begin{equation} \label{eq:weakunitality}
    \mathrm{tr}_{A I} \circ \mathcal{M}(\rho_{I} \otimes \id_{K J}) = \id_{K J}.
  \end{equation}
  To see this, let $\mathcal{M}$ be such that it satisfies Conditions~\ref{cond:1} and~\ref{cond:3},  as well as \cref{eq:weakunitality}. Furthermore, define a modified map $\mathcal{M'} \coloneqq \mixed{I} \circ \tr_I \circ \mathcal{M}$. Because $\mathcal{N}$ is independent of $I$, $\mathcal{M'}$ also satisfies Condition~\ref{cond:1}. Clearly, it also satisfies Condition~\ref{cond:3}. And because of \cref{eq:weakunitality}, $\mathcal{M'}$ also fulfils Condition~\ref{cond:2}. We can thus apply the theorem to the modified map~$\mathcal{M'}$, which implies that~\cref{eq:mainresult} holds for $\mathcal{M}'$ and $\overline{\mathcal{M'}}$. But $\overline{\mathcal{M'}} = \overline{\mathcal{M}}$, which can be verified by using \cref{rem:mapunique}:
  	\begin{align}
  		\overline{\mathcal{M}'}&=\tr_H\circ\mathcal{M}'\circ\zeta_J\\
  		&=\tr_H\circ\mixed{I}\circ\tr_I\circ\mathcal{M}\circ\zeta_J\\
  		&=\tr_H\circ\mathcal{M}\circ\zeta_J\\
  		&=\overline{\mathcal{M}},
  	\end{align}
  where $\zeta_J$ is the map that creates a state $\zeta_J$ on $J$. Furthermore,  again because $\mathcal{N}$ is independent of $I$ and can thus be written as $\overline{\mathcal{N}}\circ\tr_I$, 
  	\begin{align}
  		\mathcal{N}\circ\mathcal{M}'
  		&=\overline{\mathcal{N}}\circ\tr_I\circ\mixed{I}\circ\tr_I\circ\mathcal{M}\\
  		&=\overline{\mathcal{N}}\circ\tr_I\circ\mathcal{M}\\
  		&=\mathcal{N}\circ\mathcal{M}.
  	\end{align}
  Hence, \cref{eq:mainresult} also holds for~$\mathcal{M}$.
\end{rem}

\begin{rem} \label{rem:infiniteAB}
  It is sufficient to prove \cref{thm:tensorP} for finite-dimensional Hilbert spaces $A$ and $B$, as this implies the general case where these systems have unbounded dimensions. 

  To see this, let $\mathcal{M}$ and $\mathcal{N}$ be CP maps for infinite-dimensional $A$ and $B$ that satisfy Conditions~\ref{cond:1}, \ref{cond:2}, and~\ref{cond:3} of \cref{thm:tensorP}. Furthermore, let $(\Pi_A^d)_{d \in \mathbb{N}}$ be a sequence of CP maps that project on $d$-dimensional nested subspaces of system~$A$, i.e., $\Pi_A^d \circ \Pi_A^{d'} = \Pi_A^{d}$ for any $d \leq d'$ such that, for all states $\rho_A$, 
  \begin{align} \label{eq:projectorsconvergence}
    \lim_{d \to \infty} \Pi^d_A(\rho_A) = \rho_A
  \end{align}
  Similarly, we denote by $(\Pi_B^d)_{d \in \mathbb{N}}$ a sequence of projection maps for the system~$B$. We can then define sequences of CP maps $(\mathcal{M}^d)_{d \in \mathbb{N}}$ and $(\mathcal{N}^d)_{d \in \mathbb{N}}$ by 
 \begin{align}
   \mathcal{M}^d & \coloneqq \bigl( \Pi_A^d + \zeta_A \circ \mathrm{tr}_A \circ (\mathcal{I}_A - \Pi_A^d)\bigr) \circ \mathcal{M} \\
   \mathcal{N}^d & \coloneqq \Pi_B^d \circ \mathcal{N},
  \end{align} 
  where $\zeta_A$ is an arbitrary state on $A$. 
  
  When considering the convergence of sequences of CP maps, we use the topology of their C.-J.~representation as states, which in turn is induced by the trace distance.\footnote{The C.-J.~operators are well-defined because we will only consider maps that take a finite-dimensional input.} Thus, using~\eqref{eq:projectorsconvergence},
  \begin{align} 
    \lim_{d \to \infty} \mathcal{M}^d & = \mathcal{M} \label{eq:Mconvergence} \\
    \lim_{d \to \infty} \mathcal{N}^d & = \mathcal{N} \label{eq:Nconvergence}.
  \end{align}
  Furthermore, for any $d \in \mathbb{N}$ we have
  \begin{align}
    \mathrm{tr}_A \circ \mathcal{M}^d = \mathrm{tr}_A \circ \Pi_A^d \circ \mathcal{M} + \underbrace{\mathrm{tr}_A(\zeta_A)}_{=1} (\mathrm{tr}_A \circ \mathcal{M} - \mathrm{tr}_A \circ \Pi_A^d \circ \mathcal{M})
    = \mathrm{tr}_A \circ \mathcal{M}.
  \end{align}  
  Using this one can readily verify that each of the pairs of maps $\mathcal{M}^d$ and $\mathcal{N}^d$ satisfies Conditions~\ref{cond:1}, \ref{cond:2}, and~\ref{cond:3}. \cref{thm:tensorP} thus implies that there exists a CPTP map $\mathcal{D}^d$ such that 
  \begin{equation} \label{eq:sequenceproduct}
    \mathcal{N}^d \circ \mathcal{M}^d = \bigl( \overline{\mathcal{M}^d} \otimes \overline{\mathcal{N}^d} \bigr) \circ \mathcal{D}^d.
  \end{equation}
  Note that $\mathcal{D}^d$ is a CPTP map from $K$ to $K \otimes K$, which is finite-dimensional. By the C.-J.~isomorphism, the set of such maps is isomorphic to a closed subset of (normalised) states on a finite-dimensional space, which one may also purify. Furthermore, the set of pure states can be continuously embedded into a (real) Euclidean space, where it corresponds to a sphere of radius~$1$. Hence, the set of possible maps $\mathcal{D}^d$ is bounded and closed. We can thus employ the Bolzano-Weierstrass theorem, which tells us that there exists a subsequence of $(\mathcal{D}^d)_{d \in \mathbb{N}}$ that converges to a CPTP map $\mathcal{D}$, i.e., there exists $(d_i)_{i \in \mathbb{N}}$ such that
\begin{equation}
  \mathcal{D} = \lim_{i \to \infty} \mathcal{D}^{d_i} \ .
\end{equation} 

Note that the convergence of the sequences \cref{eq:Mconvergence} and \cref{eq:Nconvergence} also implies the convergence of the subsequences, i.e., $\lim_{i \to \infty} \mathcal{M}^{d_i} = \mathcal{M}$ and $\lim_{i\to \infty} \mathcal{N}^{d_i} = \mathcal{N}$. Using this and \cref{eq:sequenceproduct} we find 
\begin{align}
 \mathcal{N} \circ \mathcal{M} 
 & =  \lim_{i \to \infty} \mathcal{N}^{d_i} \circ \mathcal{M}^{d_i} \\
 & = \lim_{i \to \infty} \ \bigl( \overline{\mathcal{M}^{d_i}} \otimes \overline{\mathcal{N}^{d_i}} \bigr) \circ \mathcal{D}^{d_i} \\
 & = \bigl( \lim_{i \to \infty}  \overline{\mathcal{M}^{d_i}} \otimes \overline{\mathcal{N}^{d_i}} \bigr) \circ \bigl(\lim_{i \to \infty}  \mathcal{D}^{d_i} \bigr) \\
  & = \bigl( \lim_{i \to \infty}  \overline{\mathcal{M}^{d_i}} \otimes \overline{\mathcal{N}^{d_i}} \bigr) \circ \mathcal{D}.
\end{align}
Finally, \cref{rem:mapunique} implies that, for arbitrary states $\zeta_I$ and $\zeta_J$, $\overline{\mathcal{M}^{d_i}} 
  = \mathrm{tr}_H \circ \mathcal{M}^{d_i} \circ \zeta_J$ and $\overline{\mathcal{N}^{d_i}}  = \mathcal{N}^{d_i} \circ \zeta_I$. 
  Hence, 
\begin{align}
\lim_{i \to \infty}  \overline{\mathcal{M}^{d_i}} \otimes \overline{\mathcal{N}^{d_i}} 
& =  \lim_{i \to \infty}  \mathrm{tr}_{H} \circ \mathcal{M}^{d_i} \circ \zeta_J \otimes \mathcal{N}^{d_i} \circ \zeta_I \\
   &=  \lim_{i \to \infty} \bigl( \Pi_A^{d_i} + \zeta_A \circ \mathrm{tr}_A \circ (\mathcal{I}_A - \Pi_A^{d_i})\bigr) \circ \mathrm{tr}_H \circ \mathcal{M} \circ \zeta_J \otimes  \Pi_B^{d_i} \circ \mathcal{N} \circ \zeta_I \\
   & =  \lim_{i \to \infty} \Bigl( \Pi_A^{d_i}  \otimes \Pi_B^{d_i} + \zeta_A \circ \mathrm{tr}_A \circ (\mathcal{I}_A - \Pi_A^{d_i}) \otimes \Pi_B^{d_i} \Bigr) \circ \bigl( \mathrm{tr}_H \circ \mathcal{M} \circ \zeta_J \otimes \mathcal{N} \circ \zeta_I \bigr) \\
   & =   \underbrace{ \lim_{i \to \infty}  \bigl(\Pi_A^{d_i}  \otimes \Pi_B^{d_i} \bigr) \circ  \bigl( \mathrm{tr}_H \circ \mathcal{M} \circ \zeta_J \otimes \mathcal{N} \circ \zeta_I \bigr)}_{=\mathrm{tr}_H \circ \mathcal{M} \circ \zeta_J \otimes \mathcal{N} \circ \zeta_I} \\
   & \qquad + \zeta_A \circ \mathrm{tr}_A \circ \underbrace{ \lim_{i \to \infty} \bigl( (\mathcal{I}_A - \Pi_A^{d_i}) \otimes \Pi_B^{d_i}  \bigr)  \circ \bigl(\mathrm{tr}_H \circ \mathcal{M} \circ \zeta_J \otimes \mathcal{N} \circ \zeta_I \bigr)}_{= 0}\\
   & = \mathrm{tr}_{H} \circ \mathcal{M} \circ \zeta_J \otimes \mathcal{N} \circ \zeta_I \\
   & = \overline{\mathcal{M}} \otimes \overline{\mathcal{N}}.
   \end{align}
 Combining this with the equality above, we obtain~\cref{eq:mainresult}.
\end{rem}

As a preparation for our converse statement, \cref{thm:converse}, we first establish some general properties of the doubling map $\mathcal{D}$ that comes out of  \cref{thm:tensorP}. 

\begin{rem}
	\label{rem:Dprops}
	The CPTP map $\mathcal{D}:K\to K'\otimes K''$ in \cref{thm:tensorP} fulfils
		\begin{enumerate}[label=(\roman*)]
			\item \label{item:lemmacond1}$\tr_{\tilde{K}'}\circ\mathcal{D}_{K''\to \tilde{K}'\tilde{K}''}\circ\mathcal{D}_{K''\to K' K''}=\mathcal{D}_{K\to K' \tilde{K}''}$ and
			\item \label{item:lemmacond2}$\tr_{K'}\circ\mathcal{D}_{K\to K' K''}$ is unital.
		\end{enumerate}
	(Note that all $K$'s are isomorphic spaces, but we use the notation above to distinguish them to keep track of where the different maps go.)

	To prove Property~\ref{item:lemmacond1}, we show how the maps on each side of the equality act on a general basis element $(\ket{a}_{K_A^z}\otimes \ket{b}_{K_B^z})(\bra{\bar{a}}_{K_A^{\bar{z}}} \otimes \bra{\bar{b}}_{K_B^{\bar{z}}})$ of the space of operators on $K$. For the right-hand side, applying $\mathcal{D}_{K\to K' \tilde{K}''}$ as defined in \cref{eq:defD,eq:defV} yields
		\begin{align}
			&\mathcal{D}_{K\to K' \tilde{K}''}\big((\ket{a}_{K_A^z} \otimes \ket{b}_{K_B^z})(\bra{\bar{a}}_{K_A^{\bar{z}}} \otimes \bra{\bar{b}}_{K_B^{\bar{z}}})\big)\\
			&\hspace{20pt}=\sum_{\tilde{z},\tilde{a},\tilde{b},\tilde{\tilde{a}},\tilde{\tilde{b}}}\big((\ket{\tilde{a}}_{K_A'^{\tilde{z}}} \otimes \ket{\tilde{b}}_{\tilde{K}_B''^{\tilde{z}}})({\color{myred}\bra{\tilde{a}}_{K_A^{\tilde{z}}}} \otimes {\color{myblue}\bra{\tilde{b}}_{K_B^{\tilde{z}}}})\big)\big(({\color{myred}\ket{a}_{K_A^{z}}} \otimes {\color{myblue}\ket{b}_{K_B^z}})({\color{myorange}\bra{\bar{a}}_{K_A^{\bar{z}}}} \otimes {\color{mygreen}\bra{\bar{b}}_{K_B^{\bar{z}}}})\big)\\
			&\hspace{80pt}\big(({\color{myorange}\ket{\tilde{\tilde{a}}}_{K_A^{\tilde{z}}}} \otimes {\color{mygreen}\ket{\tilde{\tilde{b}}}_{K_B^{\tilde{z}}}})(\bra{\tilde{\tilde{a}}}_{K_A'^{\tilde{z}}} \otimes \bra{\tilde{\tilde{b}}}_{\tilde{K}_B''^{\tilde{z}}})\big)\otimes\mixed{K_B'^{\tilde{z}}}\otimes\mixed{\tilde{K}_A''^{\tilde{z}}}\\
			&\hspace{20pt}=\label{eq:lemmarhs}\ketbra{a}{\bar{a}}_{K_A'^z}\otimes\mixed{K_B'^z}\otimes\mixed{\tilde{K}_A''^z}\otimes\ketbra{b}{\bar{b}}_{\tilde{K}_B''^z}\cdot\delta_{z\bar{z}},
		\end{align}
	where the colours indicate which parts combined yield a $\delta$-function on the respective labels. The left-hand side of~\ref{item:lemmacond1} applied to the same bases element yields 
		\begin{align}
			&\tr_{\tilde{K}'}\circ\mathcal{D}_{K''\to \tilde{K}'\tilde{K}''}\circ\mathcal{D}_{K''\to K' K''}\big((\ket{a}_{K_A^z}\otimes \ket{b}_{K_B^z})(\bra{\bar{a}}_{K_A^{\bar{z}}} \otimes \bra{\bar{b}}_{K_B^{\bar{z}}})\big)\\
			&\hspace{20pt}=\tr_{\tilde{K}'}\circ\mathcal{D}_{K''\to \tilde{K}'\tilde{K}''}\Big(\underbrace{\ketbra{a}{\bar{a}}_{K_A'^z}\otimes\mixed{K_B'^z}}_{\eqqcolon\rho_{K'}}\otimes\underbrace{\mixed{K_A''^z}}_{=1/\dim(K_A^z)\sum_{\hat{a}}\ketbra{\hat{a}}{\hat{a}}_{K_A''^z}}\otimes\ketbra{b}{\bar{b}}_{K_B''^z}\Big)\cdot\delta_{z\bar{z}}\\
			&\hspace{20pt}=\tr_{\tilde{K}'}\Big({\textstyle \frac{1}{\dim(K_A^z)}} \sum_{\hat{a}}\mathcal{D}_{K''\to \tilde{K}'\tilde{K}''}\big(\ketbra{\hat{a}}{\hat{a}}_{K_A''^z}\otimes\ketbra{b}{\bar{b}}_{K_B''^z}\big)\Big)\otimes\rho_{K'}\cdot\delta_{z\bar{z}}\\
			&\hspace{20pt}=\tr_{\tilde{K}'}\Big(\underbrace{{\textstyle \frac{1}{\dim(K_A^z)}}\sum_{\hat{a}}\ketbra{\hat{a}}{\hat{a}}_{\tilde{K}_A'^z}}_{=\mixed{\tilde{K}_A'^z}}\otimes\ketbra{b}{\bar{b}}_{\tilde{K}_B''^z}\otimes\mixed{\tilde{K}_B'^z}\otimes\mixed{\tilde{K}_A''^z}\Big)\otimes\rho_{K'}\cdot\delta_{z\bar{z}}\\
			&\hspace{20pt}=\ketbra{a}{\bar{a}}_{K_A'^z}\otimes\mixed{K_B'^z}\otimes\mixed{\tilde{K}_A''^z}\otimes\ketbra{b}{\bar{b}}_{\tilde{K}_B''^z}\cdot\delta_{z\bar{z}},
		\end{align}
	which is equal to \cref{eq:lemmarhs}. Since this is true for any basis element of the space of operators on $K$, the maps are equal, proving~\ref{item:lemmacond1}. Property~\ref{item:lemmacond2} can be shown by a direct calculation, using that $\mathrm{id}_K=\sum_{z,a,b}\ketbra{a}{a}_{K_A^z}\otimes\ketbra{b}{b}_{K_B^z}$:
			\begin{align}
			\tr_{K'}\circ\mathcal{D}_{K\to K' K''}(\mathrm{id}_K)&=\tr_{K'}\Big(\sum_{a,b,z}\mathcal{D}_{K\to K' K''}\big(\ketbra{a}{a}_{K_A^z}\otimes\ketbra{b}{b}_{K_B^z}\big)\Big)\\
			&=\tr_{K'}\Big(\sum_{a,b,z}\ketbra{a}{a}_{K_A'^z}\otimes\mixed{K_B'^z}\otimes\mixed{K_A''^z}\otimes\ketbra{b}{b}_{K_B''^z}\Big)\\
			&=\sum_z\underbrace{\tr_{K'}(\mathrm{id}_{K_A'^z}\otimes\mixed{K_B'^z}}_{=\dim(K_A'^z)})\mixed{K_A''^z}\otimes\mathrm{id}_{K_B''^z}\\
			&=\sum_z\mathrm{id}_{K_A''^z}\otimes\mathrm{id}_{K_B''^z}\\
			&=\mathrm{id}_{K''}.
		\end{align}
\end{rem}

We are now ready to state and prove the converse to \cref{thm:tensorP}. Identifying $\mathcal{A}$ and $\mathcal{B}$ with $\overline{\mathcal{M}}$ and $\overline{\mathcal{N}}$, respectively, it implies that the Conditions~\ref{cond:1}, \ref{cond:2}, and~\ref{cond:3} are necessary. 

\begin{thm}[Converse statement to \cref{thm:tensorP}] \label{thm:converse}
	Let $\mathcal{E}:I\otimes K\otimes J\to A\otimes B$ be a CP map such that $\mathcal{E}=(\mathcal{A}\otimes\mathcal{B})\circ\mathcal{D}$ for a CPTP map $\mathcal{A}:I\otimes K\to A$ and CP maps $\mathcal{B}:K\otimes J\to B$ and $\mathcal{D}:K\to K\otimes K$, where the latter fulfils Properties~\ref{item:lemmacond1} and~\ref{item:lemmacond2} in \cref{rem:Dprops}.\footnote{The requirement that $\mathcal{A}$ is TP does not limit the validity of this theorem as a converse statement to \cref{thm:tensorP}. Indeed, according to \cref{rem:prooftracenoninc}, the map $\mathcal{M}$ that enters \cref{thm:tensorP} is anyway TP, and hence also $\overline{\mathcal{M}}$.} 
	
	Then there exist CP maps $\mathcal{M}:I\otimes K\otimes J\to A\otimes I\otimes K\otimes J$ and $\mathcal{N}:I\otimes K\otimes J\to B$ such that $\mathcal{E}=\mathcal{N}\circ\mathcal{M}$ and $\mathcal{M}$ and $\mathcal{N}$ fulfil Conditions~\ref{cond:1}, \ref{cond:2}, and \ref{cond:3} in \cref{thm:tensorP}. 
\end{thm}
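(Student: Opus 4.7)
I would construct $\mathcal{M}$ and $\mathcal{N}$ explicitly in terms of $\mathcal{A}$, $\mathcal{B}$, and the doubling map~$\mathcal{D}$, and then verify $\mathcal{N}\circ\mathcal{M}=\mathcal{E}$ together with the three conditions. The natural definitions are
\begin{align}
\mathcal{M} &\coloneqq \mixed{I'}\otimes \mathcal{A}_{IK'\to A}\circ \mathcal{D}_{K\to K' K''},\\
\mathcal{N} &\coloneqq \mathcal{B}_{K'' J\to B}\circ \tr_{K'}\circ\mathcal{D}_{K\to K' K''}\circ \tr_I,
\end{align}
where the output of $\mathcal{M}$ is $A\otimes I'\otimes K''\otimes J\cong A\otimes H$ (so $I'$ is a fresh copy of $I$, and $K''$ plays the role of the output $K$-subsystem of $H$). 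The form of $\mathcal{N}$ is essentially dictated by Condition~\ref{cond:1}, which forces $\mathcal{N}=\tr_A\circ\mathcal{E}$; evaluating the right-hand side using that $\mathcal{A}$ is TP produces precisely the expression above.

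I would first dispose of Conditions~\ref{cond:2} and~\ref{cond:3}, which reduce to direct checks. $\mathcal{N}$ starts with $\tr_I$, hence is independent of~$I$, and a short computation gives $\tr_H\circ\mathcal{M}=\mathcal{A}_{IK'\to A}\circ \tr_{K''}\circ\mathcal{D}_{K\to K' K''}\circ\tr_J$, which manifestly depends on the input only through $\tr_J$. For Condition~\ref{cond:2}, the TP property of $\mathcal{A}$ reduces $\tr_A\circ\mathcal{M}$ to $\mixed{I'}\otimes \tr_{K'}\circ\mathcal{D}\circ\tr_I$; applying this to $\mathrm{id}_H$ and invoking Property~\ref{item:lemmacond2} of \cref{rem:Dprops} (the unitality of $\tr_{K'}\circ\mathcal{D}$) returns $\mathrm{id}_H$, and the same expression shows that $\tr_A\circ\mathcal{M}$ is in fact TP, in particular trace non-increasing.

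The substantive step is verifying $\mathcal{N}\circ\mathcal{M}=\mathcal{E}$, which then immediately yields Condition~\ref{cond:1} via $\tr_A\circ\mathcal{N}\circ\mathcal{M}=\tr_A\circ\mathcal{E}=\mathcal{N}$. Composing the two definitions, the fresh $\mixed{I'}$ produced by $\mathcal{M}$ is annihilated by the $\tr_I$ in $\mathcal{N}$, so only the nested doubling structure survives: an inner $\mathcal{D}_{K\to K' K''}$ from $\mathcal{M}$, an outer $\mathcal{D}_{K''\to \tilde{K}' \tilde{K}''}$ applied inside $\mathcal{N}$, and a trace over~$\tilde{K}'$. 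Property~\ref{item:lemmacond1} of \cref{rem:Dprops} collapses exactly this configuration into a single $\mathcal{D}_{K\to K' \tilde{K}''}$, after which $\mathcal{A}$ acts on $I\otimes K'$ and $\mathcal{B}$ on $\tilde{K}''\otimes J$, producing $(\mathcal{A}\otimes\mathcal{B})\circ\mathcal{D}=\mathcal{E}$.

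The main obstacle I anticipate is not algebraic but notational: keeping track of the several isomorphic copies of~$K$ in play (the input $K$, Alice's copy $K'$, Bob's intermediate copy $K''$, and the further copy $\tilde{K}''$ produced inside $\mathcal{N}$), along with the fresh $I'$ introduced by $\mathcal{M}$, so that the nested $\mathcal{D}$'s in $\mathcal{N}\circ\mathcal{M}$ land on the precise subsystems needed for Property~\ref{item:lemmacond1} of \cref{rem:Dprops} to apply. Once this bookkeeping is fixed, the two properties recorded in that remark suffice to close the argument.
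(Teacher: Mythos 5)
Your construction is exactly the one in the paper: $\mathcal{M}=\mixed{I}\circ\mathcal{A}\circ\mathcal{D}$ and $\mathcal{N}=\tr_K\circ\mathcal{B}\circ\mathcal{D}\circ\tr_I$ (modulo your more explicit labelling of the copies of $K$), with Property~\ref{item:lemmacond1} of \cref{rem:Dprops} collapsing the nested doublings to give $\mathcal{N}\circ\mathcal{M}=\mathcal{E}$ and Property~\ref{item:lemmacond2} yielding unitality. The verification of the three conditions proceeds the same way, so the proposal is correct and follows essentially the same route as the paper's proof.
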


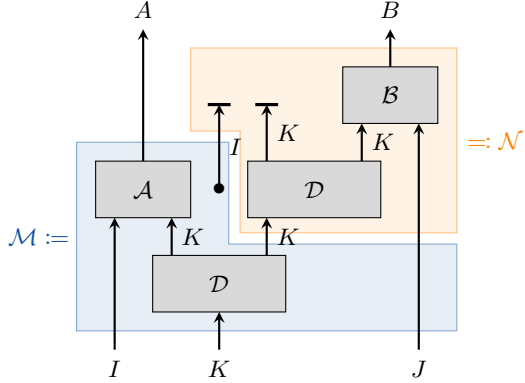
\begin{figure}[t]
	\centering
	\begin{tikzpicture}
		\draw[fill=myblue!10,draw=myblue!50] (-0.25,-1.5) -- (4.75,-1.5) -- (4.75,-0.35) -- (1.75,-0.35) -- (1.75,1) -- (-0.25,1) -- cycle;
		\draw[fill=myorange!10,draw=myorange!50] (4.75,2.25) -- (4.75,-0.2) -- (1.9,-0.2) -- (1.9,1.15) -- (1.25,1.15) -- (1.25,2.25) -- cycle;
		\draw[fill=gray!30] (0,0) rectangle (1.25,0.75);
		\node at (0.625,0.375) {\small $\mathcal{A}$};
		\begin{scope}[xshift=3.25cm,yshift=1.25cm]
		\draw[fill=gray!30] (0,0) rectangle (1.25,0.75);
		\node at (0.625,0.375) {\small $\mathcal{B}$};
		\end{scope}
		\draw[fill=gray!30] (0.75,-1.25) rectangle (2.5,-0.5);
		\node at (1.625,-0.875) {\small $\mathcal{D}$};
		\begin{scope}[xshift=0.375cm]
		\draw[fill=gray!30] (1.625,0) rectangle (3.375,0.75);
		\node at (2.5,0.375) {\small $\mathcal{D}$};
		\end{scope}
		\draw[->,>=stealth, thick] (0.25,-1.75) -- (0.25,0);
		\draw[->,>=stealth, thick] (4.25,-1.75) -- (4.25,1.25);
		\draw[->,>=stealth, thick] (1.625,-1.75) -- (1.625,-1.25);
		\draw[->,>=stealth, thick] (1,-0.5) to node[right] {\small $K$} (1,0);
		\draw[->,>=stealth, thick] (2.25,-0.5) to node[right] {\small $K$} (2.25,0);
		\begin{scope}[xshift=1.25cm,yshift=1.25cm]
		\draw[->,>=stealth, thick] (1,-0.5) to node[right] {\small $K$} (1,0.25);
		\draw[->,>=stealth, thick] (2.25,-0.5) to node[right] {\small $K$} (2.25,0);
		\end{scope}
		\draw[->,>=stealth, thick] (0.625,0.75) -- (0.625,2.5);
		\begin{scope}[xshift=3.25cm,yshift=1.25cm]
		\draw[->,>=stealth, thick] (0.625,0.75) -- (0.625,1.25);
		\end{scope}
		\node at (0.25,-2) {\small $I$};
		\node at (1.625,-2) {\small $K$};
		\node at (4.25,-2) {\small $J$};
		\node at (0.625,2.75) {\small $A$};
		\node at (3.875,2.75) {\small $B$};
		\draw[very thick] (2.1,1.5) -- (2.4,1.5);
		\node at (1.625,0.375) {\textbullet};
		\draw[->,>=stealth, thick] (1.625,0.375) to node[right] {\small $I$} (1.625,1.5);
		\draw[very thick] (1.475,1.5) -- (1.775,1.5);
		\node[color=myblue] at (-0.75,-0.25) {\small $\mathcal{M}\coloneqq$};
		\node[color=myorange] at (5.25,1.025) {\small $\eqqcolon\mathcal{N}$};
	\end{tikzpicture}
	\caption{\label{fig:conversethm}\textbf{Visualisation of the map $\mathbf{\mathcal{E}}$ occurring in \cref{thm:converse}.} The diagram shows the components of the map $\mathcal{E}$ as given in \cref{eq:MNforconverse}. The blue and orange boxes define the maps $\mathcal{M}$ and $\mathcal{N}$, respectively.}
\end{figure}

\begin{proof}
	First, note that we can insert a map of the form $\tr_I\circ\mixed{I}$ into $(\mathcal{A}\otimes\mathcal{B})\circ\mathcal{D}$ without changing the map. Together with Property~\ref{item:lemmacond1} in \cref{rem:Dprops} this allows us to write (see \cref{fig:conversethm})
		\begin{align}
			\mathcal{E}=(\mathcal{A}\otimes\mathcal{B})\circ\mathcal{D}&=(\mathcal{A}\otimes\mathcal{B})\circ\tr_K\circ\mathcal{D}\circ\mathcal{D}\\
			&=\mathcal{A}\circ\mathcal{B}\circ\tr_K\circ\mathcal{D}\circ\tr_I\circ\mixed{I}\circ\mathcal{D}\\
			&=\underbrace{\big(\tr_K\circ\mathcal{B}\circ\mathcal{D}\circ\tr_I\big)}_{\eqqcolon \mathcal{N}}\circ\underbrace{\big(\mixed{I}\circ\mathcal{A}\circ\mathcal{D}\big)}_{\eqqcolon \mathcal{M}}.\label{eq:MNforconverse}
		\end{align}
	Note that $\mathcal{M}$ acts on $J$ as the identity. We can then show that $\mathcal{M}$ and $\mathcal{N}$ fulfil Conditions~\ref{cond:1}, \ref{cond:2}, and \ref{cond:3} in \cref{thm:tensorP}: 
		\begin{enumerate}[label=(\roman*)]
			\item 
					$\tr_A\circ\mathcal{N}\circ\mathcal{M}
					=\tr_A\circ\big(\mathcal{A}\otimes\mathcal{B}\big)\circ\mathcal{D}
					=\tr_K\circ\mathcal{B}\circ\mathcal{D}\circ\tr_I
					=\mathcal{N}$.
			\item Property~\ref{item:lemmacond2} in \cref{rem:Dprops} says that $\tr_K\circ\mathcal{D}$ is unital. Hence,
				\begin{align}
					\tr_A\circ\mathcal{M}(\mathrm{id}_{IKJ})&=\tr_A\circ\mixed{I}\circ\mathcal{A}\circ\mathcal{D}(\mathrm{id}_{IKJ})\\
					&=\mixed{I}\circ\underbrace{\tr_I(\mathrm{id}_I)}_{=\dim(I)}\otimes\underbrace{\tr_K\circ\mathcal{D}(\mathrm{id}_K)}_{\mathrm{id}_K}\otimes\mathrm{id}_J\\
					&=\mathrm{id}_{IKJ},
				\end{align}
				thus $\tr_A\circ\mathcal{M}$ is unital.
			\item From the definition of $\mathcal{M}$ and $\mathcal{N}$ it directly follows that $\mathrm{tr}_H \circ \mathcal{M}$ and $\mathcal{N}$ are independent of $J$ and $I$, respectively.
		\end{enumerate}
\end{proof}

\begin{rem}
  It has been shown in~\cite{Lorenz2021} that, if a map is unitary and satisfies a non-signalling condition analogous to Condition~\ref{cond:3} of \cref{thm:tensorP}, then this map has a structure similar to the right-hand side of \cref{eq:mainresult}. Note that the unitarity assumption is crucial for this result: the PR box~\cite{PopescuRohrlich94} does not admit such a structure, although it satisfies the non-signalling condition. In contrast, \cref{thm:tensorP} is valid for general (not necessarily unitary) CP maps, but instead requires the additional Conditions~\ref{cond:1} and~\ref{cond:2} (which are also necessary; see \cref{thm:converse}). 
\end{rem}

\section{Implications}
\label{sec:corollaries}

Having established \cref{thm:tensorP}, we can give an answer to the question posed in the introduction, generalising  Tsirelson's result~\cite{Tsirelson2006} to the ``fully quantum'' case. We state this answer as \cref{cor:commute}. \Cref{fig:corollary} illustrates the main assumption of the corollary---a commutation relation between the maps ${\mathcal{X}}$ and ${\mathcal{Y}}$ described as Condition~\ref{cond:c1}---as well as the conclusion, which is that the concatenation of these two maps factorises; see \cref{eq:corollaryresult}.

The special case of Tsirelson's result, which we discuss later as \cref{cor:vNalgebras}, refers to families of measurement operators $\{X_{i,\alpha}\}$ and $\{Y_{j,\beta}\}$ instead of CP maps ${\mathcal{X}}$ and ${\mathcal{Y}}$.  Hence, Condition~\ref{cond:c1} of \cref{cor:commute} can be understood as a quantum generalisation of Tsirelson's assumption that the families of measurement operators $\{X_{i,\alpha}\}$ and $\{Y_{j,\beta}\}$ commute; see \cref{eq:commutingoperators}. Note that the measurement operators satisfy the property $\smash{\sum_\alpha} X_{i,\alpha}=\mathrm{id}_{K}$ and $\smash{\sum_\beta} Y_{j,\beta}=\mathrm{id}_{K}$. In \cref{cor:commute}, this property generalises to a unitality assumption, phrased as Condition~\ref{cond:c2}.\footnote{In fact, Condition~\ref{cond:c2} corresponds to a slightly weaker assumption, for only one of the two maps needs to satisfy the unitality requirement.} This assumption is necessary; if we drop it without replacement, the statement is false, even when one restricts it to the purely classical case. This can be seen by choosing the maps~${\mathcal{X}}$ and ${\mathcal{Y}}$ such that $\mathrm{tr}_H\circ{\mathcal{Y}}\circ{\mathcal{X}}$ implements the PR~box~\cite{PopescuRohrlich94}. It is known that the PR~box does not factorise, for this would violate the Tsirelson bound~\cite{Tsirelson1980}.\footnote{Tsirelson's bound should not be confused with Tsirelson's result~\cite{Tsirelson2006}, which we describe as \cref{cor:vNalgebras}.} We refer to \cref{app:unitality} for more details. 

Furthermore, the statement of \cref{thm:tensorP} can be generalised to a family consisting of more than two maps which fulfil assumptions similar to Conditions~ \ref{cond:1}--\ref{cond:3} in \cref{thm:tensorP}. We present this statement as \cref{cor:multimap}. 

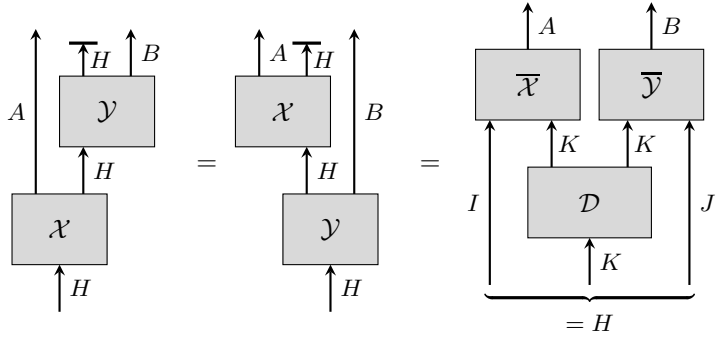
\begin{figure}[t]
	\centering
	\begin{tikzpicture}[baseline=(current bounding box.center),scale=1.25]
		\draw[fill=gray!30] (0,0) rectangle (1,0.75);
		\draw[fill=gray!30] (0.5,1.25) rectangle (1.5,2);
		\node at (0.5,0.375) {$\mathcal{X}$};
		\node at (1,1.625) {$\mathcal{Y}$};
		\draw[thick,->,>=stealth] (0.5,-0.5) to node[right] {\small $H$} (0.5,0);
		\draw[thick,->,>=stealth] (0.25,0.75) to node[left] {\small $A$} (0.25,2.5);
		\draw[thick,->,>=stealth] (0.75,0.75) to node[right] {\small $H$} (0.75,1.25);
		\draw[thick,->,>=stealth] (1.25,2) to node[right] {\small $B$} (1.25,2.5);
		\draw[thick,->,>=stealth] (0.75,2) to node[right=-0.05] {\small $H$} (0.75,2.35);
		\draw[very thick] (0.6,2.35) -- (0.9,2.35);
	\end{tikzpicture}
	\ \ =\ \ 
	\begin{tikzpicture}[xscale=-1,baseline=(current bounding box.center),scale=1.25]
		\draw[fill=gray!30] (0,0) rectangle (1,0.75);
		\draw[fill=gray!30] (0.5,1.25) rectangle (1.5,2);
		\node at (0.5,0.375) {${\mathcal{Y}}$};
		\node at (1,1.625) {${\mathcal{X}}$};
		\draw[thick,->,>=stealth] (0.5,-0.5) to node[right,pos=0.5] {\small $H$} (0.5,0);
		\draw[thick,->,>=stealth] (0.25,0.75) to node[right,pos=0.5] {\small $B$} (0.25,2.5);
		\draw[thick,->,>=stealth] (0.75,0.75) to node[right,pos=0.5] {\small $H$} (0.75,1.25);
		\draw[thick,->,>=stealth] (1.25,2) to node[right] {\small $A$} (1.25,2.5);
		\draw[thick,->,>=stealth] (0.75,2) to node[right=-0.05] {\small $H$} (0.75,2.35);
		\draw[very thick] (0.6,2.35) -- (0.9,2.35);
	\end{tikzpicture}
	\ \ =\ \ 
	\begin{tikzpicture}[baseline=(current bounding box.center),scale=1.25]
		\draw[fill=gray!30] (0.1,0) rectangle (1.4,0.75);
		\node at (0.75,0.375) {$\mathcal{D}$};
		\draw[thick,->,>=stealth] (0.75,-0.5) to node[right,pos=0.5] {\small $K$} (0.75,0);
		\draw[thick,->,>=stealth] (-0.3,-0.5) to node[left,pos=0.5] {\small $I$} (-0.3,1.25);
		\draw[thick,->,>=stealth] (1.8,-0.5) to node[right,pos=0.5] {\small $J$} (1.8,1.25);
		\draw[thick,->,>=stealth] (0.35,0.75) to node[right=-0.05cm,pos=0.5] {\small $K$} (0.35,1.25);
		\draw[thick,->,>=stealth] (1.15,0.75) to node[right=-0.05cm,pos=0.5] {\small $K$} (1.15,1.25);
		\draw[fill=gray!30] (-0.45,1.25) rectangle (0.65,2);
		\draw[fill=gray!30] (0.85,1.25) rectangle (1.95,2);
		\node at (0.1,1.625) {$\overline{\mathcal{X}}$};
		\node at (1.4,1.625) {$\overline{\mathcal{Y}}$};
		\draw[thick,->,>=stealth] (0.1,2) to node[right,pos=0.5] {\small $A$} (0.1,2.5);
		\draw[thick,->,>=stealth] (1.4,2) to node[right,pos=0.5] {\small $B$} (1.4,2.5);
		\draw[very thick, decorate,decoration={calligraphic brace,mirror}] (-0.35,-0.6) -- (1.85,-0.6);
		\node at (0.75,-0.9) {\small $=H$};
	\end{tikzpicture}
	\caption{\label{fig:corollary}\textbf{Visualisation of \cref{cor:commute}.} Condition~\ref{cond:c1} holds if and only if $\mathcal{X}$ and $\mathcal{Y}$ commute, in the sense that the two circuit diagrams on the left have the same input-output behaviour. Provided that the other conditions are also satisfied, the corollary implies that the circuit diagram shown to the right, where $\overline{\mathcal{X}}$ and $\overline{\mathcal{Y}}$ act on two separate copies of $K$, also has the same input-output behaviour. The diagram thus captures the idea that commuting maps factorise.}
\end{figure}

\begin{cor}
	\label{cor:commute}
	Let ${\mathcal{X}}:H\to H\otimes A$ and ${\mathcal{Y}}:H\to H\otimes B$ be CPTP maps, where $H= I\otimes K\otimes J$ is finite-dimensional, such that
	\begin{enumerate}[label=(\roman*)]
		\item 
		$\mathrm{tr}_H\circ{\mathcal{Y}}\circ{\mathcal{X}}=\mathrm{tr}_H\circ{\mathcal{X}}\circ{\mathcal{Y}}$ \label{cond:c1}
		\item either $\mathrm{tr}_A\circ{\mathcal{X}}$ or $\mathrm{tr}_B\circ{\mathcal{Y}}$ is unital \label{cond:c2}
		\item $\mathrm{tr}_H\circ{\mathcal{X}}$ is independent of $J$ and $\mathrm{tr}_H\circ{\mathcal{Y}}$ is independent of $I$. \label{cond:c3}
	\end{enumerate}
	Then there exists a CPTP map $\mathcal{D}:K\to K\otimes K$ such that
	\begin{equation} \label{eq:corollaryresult}
		\mathrm{tr}_H\circ{\mathcal{Y}}\circ{\mathcal{X}}=\big(\overline{\mathcal{X}}\otimes\overline{\mathcal{Y}}\big)\circ\mathcal{D},
	\end{equation}
	where $\overline{\mathcal{X}}\circ\tr_J=\tr_H\circ\mathcal{X}$, $\overline{\mathcal{Y}}\circ\tr_I=\tr_H\circ\mathcal{Y}$.
\end{cor}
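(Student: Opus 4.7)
The plan is to reduce \cref{cor:commute} directly to \cref{thm:tensorP} by a careful identification of the two maps. Without loss of generality I assume $\tr_A\circ\mathcal{X}$ is unital (the other case is symmetric, interchanging $\mathcal{X}\leftrightarrow\mathcal{Y}$, $A\leftrightarrow B$, and $I\leftrightarrow J$). First I would set $\mathcal{M}\coloneqq\mathcal{X}$ (viewed as a CPTP map $H\to A\otimes H$, after harmless reordering of output factors) and $\mathcal{N}\coloneqq\tr_H\circ\mathcal{Y}:H\to B$. With these choices, $\mathcal{N}\circ\mathcal{M}=\tr_H\circ\mathcal{Y}\circ\mathcal{X}$ is precisely the map we wish to factorise.

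The next step is to verify that this pair $(\mathcal{M},\mathcal{N})$ meets the three conditions of \cref{thm:tensorP}. Condition~\ref{cond:1} is where the commutation hypothesis really enters. I would compute
\begin{align}
\tr_A\circ\mathcal{N}\circ\mathcal{M}
=\tr_A\circ\tr_H\circ\mathcal{Y}\circ\mathcal{X}
=\tr_A\circ\tr_H\circ\mathcal{X}\circ\mathcal{Y}
=\tr_H\circ\mathcal{Y}
=\mathcal{N},
\end{align}
where the second equality is the commutation relation~\ref{cond:c1}, and the third uses $\tr_{AH}\circ\mathcal{X}=\tr_H$, which holds because $\mathcal{X}$ is CPTP. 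Condition~\ref{cond:2} is immediate: $\tr_A\circ\mathcal{M}=\tr_A\circ\mathcal{X}$ is unital by assumption, and it is trace-preserving (hence trace non-increasing) because $\mathcal{X}$ is CPTP. Condition~\ref{cond:3} is a direct translation of~\ref{cond:c3}: $\tr_H\circ\mathcal{M}=\tr_H\circ\mathcal{X}$ is independent of~$J$, and $\mathcal{N}=\tr_H\circ\mathcal{Y}$ is independent of~$I$.

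Finally I would invoke \cref{thm:tensorP} to obtain a CPTP map $\mathcal{D}:K\to K\otimes K$ satisfying $\mathcal{N}\circ\mathcal{M}=(\overline{\mathcal{M}}\otimes\overline{\mathcal{N}})\circ\mathcal{D}$, where $\overline{\mathcal{M}}\circ\tr_J=\tr_H\circ\mathcal{X}$ and $\overline{\mathcal{N}}\circ\tr_I=\tr_H\circ\mathcal{Y}$. By the uniqueness asserted in \cref{rem:mapunique}, these maps coincide with the $\overline{\mathcal{X}}$ and $\overline{\mathcal{Y}}$ appearing in the statement of the corollary, yielding~\cref{eq:corollaryresult}. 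The only genuinely substantive step is the verification of Condition~\ref{cond:1}: once the commutation hypothesis has been combined with the TP property of~$\mathcal{X}$, everything else is routine bookkeeping, since all the real technical content sits inside \cref{thm:tensorP}.
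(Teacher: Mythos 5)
Your proposal is correct and follows essentially the same route as the paper: the same identification $\mathcal{M}\coloneqq\mathcal{X}$, $\mathcal{N}\coloneqq\tr_H\circ\mathcal{Y}$, the same verification of Condition~\ref{cond:1} via the commutation hypothesis and the TP property of $\mathcal{X}$, and the same invocation of \cref{thm:tensorP}. The extra remarks on trace non-increasingness and on the uniqueness of $\overline{\mathcal{M}}$ and $\overline{\mathcal{N}}$ are harmless elaborations of steps the paper leaves implicit.
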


\begin{proof}
	Without loss of generalisation, we assume that $\mathrm{tr}_A\circ{\mathcal{X}}$ is unital. If instead $\mathrm{tr}_B\circ{\mathcal{Y}}$ is unital the same proof works by exchanging the roles of ${\mathcal{X}}$ and ${\mathcal{Y}}$. To apply \cref{thm:tensorP}, we set $\mathcal{M}\coloneqq{\mathcal{X}}$ and $\mathcal{N}\coloneqq\mathrm{tr}_H\circ{\mathcal{Y}}$, thus
	$\mathcal{N}\circ\mathcal{M}=\mathrm{tr}_H\circ{\mathcal{Y}}\circ{\mathcal{X}}$. Because ${\mathcal{X}}$ and ${\mathcal{Y}}$ commute under the trace over $H$, it follows that 
		\begin{equation}
			\mathrm{tr}_A\circ\mathcal{N}\circ\mathcal{M}=\mathrm{tr}_{AH}\circ{\mathcal{Y}}\circ{\mathcal{X}}=\mathrm{tr}_{AH}\circ{\mathcal{X}}\circ{\mathcal{Y}}
			=\mathrm{tr}_H\circ{\mathcal{Y}}
			=\mathcal{N},
		\end{equation}
	hence Condition~\ref{cond:1} in \cref{thm:tensorP} is fulfilled. Conditions~\ref{cond:2} and \ref{cond:3} are directly fulfilled by the definition of the maps ${\mathcal{X}},{\mathcal{Y}}$. Hence, we can apply \cref{thm:tensorP} and the statement of the corollary directly follows.
\end{proof}

If we specialise \cref{cor:commute} to the case where the inputs $I$, $J$ and the outputs $A$, $B$ are classical, we retrieve the statement of~\cite{Tsirelson2006} as described above. In fact, one may more generally consider a classical version of \cref{thm:tensorP} instead of \cref{cor:commute}. This yields another generalisation of Tsirelson's result, stated in \cref{rem:TsirelsonGeneralised}, which may be of independent interest. 

\begin{cor}
	\label{cor:vNalgebras}
	Let $\{X_{i,\alpha}\}$ and $\{Y_{j,\beta}\}$ be finite families of positive operators on a finite-dimensional Hilbert space $K$ such that 
		\begin{equation} \label{eq:commutingoperators}
			[X_{i,\alpha},Y_{j,\beta}]=0  \quad \forall \, i,j,\alpha,\beta,
		\end{equation}
	and $\sum_\alpha X_{i,\alpha}=\mathrm{id}_{K}$ and $\sum_\beta Y_{j,\beta}=\mathrm{id}_{K}$ for all $i,j$. Then there exists another finite-dimensional Hilbert space $\overline{K}$ with decomposition  $\overline{K}=K_A\otimes K_B$ and an isometry $V: K\to\overline{K}$ such that
		\begin{equation} \label{eq:operatorfactorisation}
			X_{i,\alpha}= V^* \bigl(A_{i,\alpha}\otimes\mathrm{id}_{K_B}\bigr) V ,\hspace{10pt} Y_{j,\beta}= V^* \bigl(\mathrm{id}_{K_A}\otimes B_{j,\beta} \bigr) V,
		\end{equation} 
	where $A_{i,\alpha}$ and $B_{j,\beta}$ are operators on $K_A$ and $K_B$, respectively.
\end{cor}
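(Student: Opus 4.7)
The plan is to reduce \cref{cor:vNalgebras} to \cref{cor:commute} by encoding the two POVMs as CPTP maps, and then to extract the operator factorisation from a Stinespring dilation of the resulting doubling map. Take $I$ and $A$ to be Hilbert spaces with orthonormal bases $\{\ket{i}_I\}$ and $\{\ket{\alpha}_A\}$ indexed by the measurement settings and outcomes of the first family, and define $J, B$ analogously for the second. Set $H = I\otimes K\otimes J$, and define
\begin{align}
  \mathcal{X} : \, W_H \mapsto \sum_{i,\alpha} E_{i,\alpha} W_H E_{i,\alpha}^*, \qquad E_{i,\alpha} = \ketbra{i}{i}_I \otimes \sqrt{X_{i,\alpha}}_K \otimes \mathrm{id}_J \otimes \ket{\alpha}_A,
\end{align}
and symmetrically
\begin{align}
  \mathcal{Y} : \, W_H \mapsto \sum_{j,\beta} F_{j,\beta} W_H F_{j,\beta}^*, \qquad F_{j,\beta} = \mathrm{id}_I \otimes \sqrt{Y_{j,\beta}}_K \otimes \ketbra{j}{j}_J \otimes \ket{\beta}_B.
\end{align}
The POVM completeness relations $\sum_\alpha X_{i,\alpha} = \mathrm{id}_K$ and $\sum_\beta Y_{j,\beta} = \mathrm{id}_K$ make $\mathcal{X}$ and $\mathcal{Y}$ CPTP.

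The next step is to verify Conditions~\ref{cond:c1}--\ref{cond:c3} of \cref{cor:commute}. Conditions~\ref{cond:c2} and~\ref{cond:c3} follow directly from the tensor-product form of the Kraus operators, using $\sum_\alpha X_{i,\alpha} = \mathrm{id}_K$ for unitality of $\mathrm{tr}_A \circ \mathcal{X}$ and the fact that $\mathcal{X}$ acts as the identity on $J$ (and $\mathcal{Y}$ on $I$) for the independence statement. Condition~\ref{cond:c1} is where commutativity enters: a direct computation of $\mathrm{tr}_H \circ \mathcal{Y} \circ \mathcal{X}$ applied to a product $W_I \otimes W_K \otimes W_J$ yields a weighted sum of traces of the form $\mathrm{tr}\bigl(\sqrt{Y_{j,\beta}}\sqrt{X_{i,\alpha}} W_K \sqrt{X_{i,\alpha}}\sqrt{Y_{j,\beta}}\bigr)$, and cyclicity of the trace together with $[\sqrt{X_{i,\alpha}}, \sqrt{Y_{j,\beta}}] = 0$ (which follows from \eqref{eq:commutingoperators} via the continuous functional calculus) collapses this to $\mathrm{tr}(X_{i,\alpha} Y_{j,\beta} W_K)$, matching the analogous computation for $\mathrm{tr}_H \circ \mathcal{X} \circ \mathcal{Y}$.

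\cref{cor:commute} then produces a CPTP doubling map $\mathcal{D}: K \to K\otimes K$ with
\begin{align}
  \mathrm{tr}_H \circ \mathcal{Y}\circ\mathcal{X} = (\overline{\mathcal{X}}\otimes\overline{\mathcal{Y}})\circ\mathcal{D},
\end{align}
where $\overline{\mathcal{X}}(\ketbra{i}{i}_I \otimes \rho_K) = \sum_\alpha \mathrm{tr}(X_{i,\alpha}\rho_K)\ketbra{\alpha}{\alpha}_A$ and analogously for $\overline{\mathcal{Y}}$. Evaluating both sides on $\ketbra{i}{i}_I \otimes \rho_K \otimes \ketbra{j}{j}_J$ and extracting the coefficient of $\ketbra{\alpha}{\alpha}_A\otimes\ketbra{\beta}{\beta}_B$ shows, for every state $\rho_K$,
\begin{align}
  \mathrm{tr}(X_{i,\alpha} Y_{j,\beta} \rho_K) = \mathrm{tr}\bigl((X_{i,\alpha}\otimes Y_{j,\beta}) \mathcal{D}(\rho_K)\bigr).
\end{align}

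The final step is to turn this identity into the form \eqref{eq:operatorfactorisation}. I would use a Stinespring dilation $\mathcal{D}(\rho_K) = \mathrm{tr}_E(V\rho_K V^*)$ with $V: K \to K\otimes K\otimes E$ an isometry and $E$ finite-dimensional. This converts the trace identity above into the operator identity $X_{i,\alpha} Y_{j,\beta} = V^*(X_{i,\alpha}\otimes Y_{j,\beta}\otimes\mathrm{id}_E) V$. Summing this identity over $\beta$ and using $\sum_\beta Y_{j,\beta} = \mathrm{id}_K$ yields $X_{i,\alpha} = V^*(X_{i,\alpha}\otimes\mathrm{id}_{K\otimes E}) V$, and summing over $\alpha$ yields $Y_{j,\beta} = V^*(\mathrm{id}_K\otimes Y_{j,\beta}\otimes\mathrm{id}_E)V$. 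Setting $K_A := K$, $K_B := K\otimes E$, $A_{i,\alpha} := X_{i,\alpha}$, and $B_{j,\beta} := Y_{j,\beta}\otimes\mathrm{id}_E$ delivers exactly the claimed factorisation. The principal obstacle I expect is the verification of Condition~\ref{cond:c1} in the presence of the square-root Kraus operators, and the bookkeeping needed to argue that the functional calculus preserves the commutation relation so that both orderings of the trace agree; everything after invoking \cref{cor:commute} is essentially algebraic manipulation of the resulting identity.
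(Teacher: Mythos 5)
Your proposal is correct and follows essentially the same route as the paper: the same Kraus-operator construction of $\mathcal{X}$ and $\mathcal{Y}$ from $\sqrt{X_{i,\alpha}}$ and $\sqrt{Y_{j,\beta}}$, the same verification of the three conditions of \cref{cor:commute}, the same reduction to the trace identity $\mathrm{tr}(X_{i,\alpha}Y_{j,\beta}\rho_K)=\mathrm{tr}\bigl((\overline{X}_{i,\alpha}\otimes\overline{Y}_{j,\beta})\mathcal{D}(\rho_K)\bigr)$, and the same Stinespring-plus-marginalisation step at the end. The only (harmless) difference is that you identify the operators representing the one-dimensional CP maps $\overline{\mathcal{X}}_{i,\alpha}$, $\overline{\mathcal{Y}}_{j,\beta}$ explicitly as $X_{i,\alpha}$, $Y_{j,\beta}$ themselves, whereas the paper keeps them abstract via \cref{lem:1dmaps}.
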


\begin{proof}
	The first step in the proof is to apply \cref{cor:commute} to the setting described above, therefore we have to identify the corresponding Hilbert spaces and maps and show that they fulfil the conditions of the theorem. Let
		\begin{align}
			I&\coloneqq\mathrm{span}\{\ket{i}\}_i\\
			J&\coloneqq\mathrm{span}\{\ket{j}\}_j\\
			H&\coloneqq I\otimes K\otimes J\\
			A&\coloneqq\mathrm{span}\{\ket{\alpha}\}_{\alpha}\\
			B&\coloneqq\mathrm{span}\{\ket{\beta}\}_{\beta},
		\end{align}
	where $\{\ket{i}\}_i,\{\ket{j}\}_j,\{\ket{\alpha}\}_{\alpha}$, and $\{\ket{\beta}\}_{\beta}$ are orthonormal families of vectors.
	Define the maps ${\mathcal{X}}:H\to A\otimes H$ and ${\mathcal{Y}}:H\to B\otimes H$ via
		\begin{align}
			\label{eq:defextAB}
			\begin{split}
			{\mathcal{X}}&: W_{H}\mapsto\sum_{i,\alpha}\left(\ketbra{i}{i}_{I}\otimes\sqrt{X_{i,\alpha}}\otimes\mathrm{id}_{J}\right) W_{H}\left(\ketbra{i}{i}_{I}\otimes\sqrt{X_{i,\alpha}}\otimes\mathrm{id}_{J}\right)\otimes\ketbra{\alpha}{\alpha}_A\\
			{\mathcal{Y}}&: W_{H}\mapsto\sum_{j,\beta}\left(\mathrm{id}_{I}\otimes\sqrt{Y_{j,\beta}}\otimes\ketbra{j}{j}_{J}\right) W_{H}\left(\mathrm{id}_{I}\otimes\sqrt{Y_{j,\beta}}\otimes\ketbra{j}{j}_{J}\right)\otimes\ketbra{\beta}{\beta}_B.
			\end{split}
		\end{align}
	These maps are indeed CPTP maps, which can be shown by identifying their respective Kraus operators. We demonstrate this here for the map ${\mathcal{X}}$: Let ${\mathcal{X}}(W_H)=\sum_{i,\alpha} E_{i,\alpha}W_H E_{i,\alpha}^*$, where $E_{i,\alpha}:H\to A\otimes H$ are the Kraus operators of ${\mathcal{X}}$ given by 
		\begin{equation}
			E_{i,\alpha}\coloneqq\ketbra{i}{i}_{I}\otimes\sqrt{X_{i,\alpha}}\otimes\mathrm{id}_{J}\otimes\ket{\alpha}_A.
		\end{equation}
	The set $\{E_{i,\alpha}\}$ forms indeed a valid set of Kraus operators of a TP map:
		\begin{align} \label{eq:KrausCPTP}
			\sum_{i,\alpha}E_{i,\alpha}^*E_{i,\alpha}&=\sum_{i,\alpha}\big(\ketbra{i}{i}_I\otimes\sqrt{X_{i,\alpha}}\otimes\mathrm{id}_{J}\otimes\bra{\alpha}_A\big)\big(\ketbra{i}{i}_I\otimes\sqrt{X_{i,\alpha}}\otimes\mathrm{id}_{J}\otimes\ket{\alpha}_A\big)\\
			&=\sum_{i,\alpha}\ketbra{i}{i}_I\otimes{X_{i,\alpha}}\otimes\mathrm{id}_{J}\otimes\underbrace{\langle \alpha\ket{\alpha}_A}_{=1}\\
			&=\sum_{i}\ketbra{i}{i}_I\otimes\sum_{\alpha}{X_{i,\alpha}}\otimes\mathrm{id}_{J}\\
			&=\sum_i \ketbra{i}{i}_I\otimes\mathrm{id}_K\otimes\mathrm{id}_{J}\\
			&=\mathrm{id}_{H}.
		\end{align}
	The same calculation can be done for ${\mathcal{Y}}$. Hence, the maps are indeed CPTP maps. 

	The definition of the maps in \cref{eq:defextAB} directly allows us to show that the conditions in \cref{cor:commute} are fulfilled: 
	\begin{enumerate}
		\item The maps commute: From $\left[X_{i,\alpha},Y_{j,\beta}\right]=0$ it follows that $\left[\sqrt{X_{i,\alpha}},\sqrt{Y_{j,\beta}}\right]=0$ for all $i,j,\alpha,\beta$. Hence, 
		\begin{align}
			&\mathrm{tr}_H\circ{\mathcal{Y}}\circ{\mathcal{X}}(W_H)\\
			&=\mathrm{tr}_H\Big(\sum_{i,\alpha,j,\beta}\big(\ketbra{i}{i}_I\otimes\sqrt{Y_{j,\beta}}\sqrt{X_{i,\alpha}}\otimes\ketbra{j}{j}_J\big)W_H\big(\ketbra{i}{i}_I\otimes\sqrt{Y_{j,\beta}}\sqrt{X_{i,\alpha}}\otimes\ketbra{j}{j}_J\big)\\
			&\hspace{70pt}\otimes\ketbra{\alpha}{\alpha}_A\otimes\ketbra{\beta}{\beta}_B\Big)\\
			&=\mathrm{tr}_H\Big(\sum_{i,\alpha,j,\beta}\big(\ketbra{i}{i}_I\otimes\sqrt{X_{i,\alpha}}\sqrt{Y_{j,\beta}}\otimes\ketbra{j}{j}_J\big)W_H\big(\ketbra{i}{i}_I\otimes\sqrt{X_{i,\alpha}}\sqrt{Y_{j,\beta}}\otimes\ketbra{j}{j}_J\big)\\
			&\hspace{70pt}\otimes\ketbra{\alpha}{\alpha}_A\otimes\ketbra{\beta}{\beta}_B\Big)\\
			&=\mathrm{tr}_H\circ{\mathcal{X}}\circ{\mathcal{Y}}(W_H).
		\end{align}
		\item $\mathrm{tr}_A\circ{\mathcal{X}}$ is unital:
		\begin{align} \label{eq:Abarunital}
			&\mathrm{tr}_A\circ{\mathcal{X}}(\mathrm{id}_H)\\
			&=\mathrm{tr}_A\left(\sum_{i,\alpha}\left(\ketbra{i}{i}_I\otimes\sqrt{X_{i,\alpha}}\otimes\mathrm{id}_{J}\right) \mathrm{id}_H\left(\ketbra{i}{i}_I\otimes\sqrt{X_{i,\alpha}}\otimes\mathrm{id}_{J}\right)\otimes\ketbra{\alpha}{\alpha}_A\right)\\
			&=\sum_{i,\alpha}\ketbra{i}{i}_I\otimes X_{i,\alpha}\otimes\mathrm{id}_{J}\\
			&=\mathrm{id}_H.
		\end{align}
		\item $\mathrm{tr}_H\circ{\mathcal{X}}$ is independent of $J$, i.e., there exists a map $\overline{\mathcal{X}}:I\otimes K\to A$ such that $\mathrm{tr}_H\circ{\mathcal{X}}=\overline{\mathcal{X}}\circ\mathrm{tr}_{J}$:
			\begin{align}
				&\mathrm{tr}_H\circ{\mathcal{X}}(W_H)\\
				&=\mathrm{tr}_{H}\sum_{i,\alpha}\left(\ketbra{i}{i}_I\otimes\sqrt{X_{i,\alpha}}\otimes\mathrm{id}_{J}\right) W_H\left(\ketbra{i}{i}_I\otimes\sqrt{X_{i,\alpha}}\otimes\mathrm{id}_{J}\right)\otimes\ketbra{\alpha}{\alpha}_A\\
				&=\mathrm{tr}_{IK}\sum_{i,\alpha}\left(\ketbra{i}{i}_I\otimes\sqrt{X_{i,\alpha}}\right) \mathrm{tr}_{J}(W_H)\left(\ketbra{i}{i}_I\otimes\sqrt{X_{i,\alpha}}\right)\otimes\ketbra{\alpha}{\alpha}_A\\
				&=\overline{\mathcal{X}}\circ\mathrm{tr}_{J}(W_H)
			\end{align}
		with $\overline{\mathcal{X}}(W_{IK})\coloneqq\mathrm{tr}_{IK}\Big(\sum_{i,\alpha}\left(\ketbra{i}{i}_I\otimes\sqrt{X_{i,\alpha}}\right) W_{IK}\left(\ketbra{i}{i}_I\otimes\sqrt{X_{i,\alpha}}\right)\otimes\ketbra{\alpha}{\alpha}_A\Big)$. The statement that $\mathrm{tr}_H\circ{\mathcal{Y}}$ is independent of $I$ can be shown analogously.
	\end{enumerate}
	
	Hence, all conditions in \cref{cor:commute} are fulfilled and it follows that there exists a CPTP map $\mathcal{D}:K\to K\otimes K$ such that
		\begin{equation}
			\label{eq:applythm}
			\mathrm{tr}_H\circ{\mathcal{Y}}\circ{\mathcal{X}}=(\overline{\mathcal{X}}\otimes \overline{\mathcal{Y}})\circ\mathcal{D},
		\end{equation}
	where $\overline{\mathcal{X}}\circ\tr_J=\tr_H\circ\mathcal{X}$, $\overline{\mathcal{Y}}\circ\tr_I=\tr_H\circ\mathcal{Y}$.

	Next, we need to find the isometries that map the operators $X_{i,\alpha},Y_{j,\beta}$ on $K$ to the product Hilbert space $K\otimes K$ and the corresponding isometric operators. For this purpose, we first define CPTP maps $\overline{\mathcal{X}}_i,\overline{\mathcal{Y}}_j$ via
		\begin{align}
			\label{eq:defAi}
			\overline{\mathcal{X}}_i:&W_{K}\mapsto\sum_{\alpha}\ketbra{\alpha}{\alpha}_A\ \overline{\mathcal{X}}\big(\ketbra{i}{i}_{I}\otimes W_{K}\big)\ketbra{\alpha}{\alpha}_A\\
			\label{eq:defBj}
			\overline{\mathcal{Y}}_j:&W_{K}\mapsto\sum_{\beta}\ketbra{\beta}{\beta}_B\ \overline{\mathcal{Y}}\big(\ketbra{j}{j}_{J}\otimes W_{K}\big)\ketbra{\beta}{\beta}_B
		\end{align}
	and show that 
		\begin{equation}
			\label{eq:intermediateiso}
			\mathrm{tr}\big(X_{i,\alpha}Y_{j,\beta}\rho_K\big)=\big(\overline{\mathcal{X}}_{i,\alpha}\otimes\overline{\mathcal{Y}}_{j,\beta}\big)\circ\mathcal{D}(\rho_K),
		\end{equation}
	where $\overline{\mathcal{X}}_{i,\alpha}\coloneqq\bra{\alpha}\overline{\mathcal{X}}_i\ket{\alpha}$, $\overline{\mathcal{Y}}_{j,\beta}\coloneqq\bra{\beta}\overline{\mathcal{Y}}_j\ket{\beta}$. The proof goes as follows: 
	For any state $\rho_K$ on $K$ and all $i,j$, it follows from \cref{eq:defextAB} and commutativity that
	\begin{align}
		   \sum_{\alpha,\beta}\mathrm{tr}_K\big(X_{i,\alpha}Y_{j,\beta}\rho_K\big)\otimes\ketbra{\alpha}{\alpha}_A\otimes\ketbra{\beta}{\beta}_B
         	 & =
	  \mathrm{tr}_H \circ{\mathcal{Y}}\circ{\mathcal{X}}\big(\ketbra{i}{i}_{I}\otimes\rho_K\otimes\ketbra{j}{j}_{J}\big) \\
	  	& = 
	 (\overline{\mathcal{X}}\otimes \overline{\mathcal{Y}})\circ\mathcal{D} \big(\ketbra{i}{i}_{I}\otimes\rho_K\otimes\ketbra{j}{j}_{J}\big) \\
	& = \big(\overline{\mathcal{X}}\otimes\overline{\mathcal{Y}}\big)\, \big(\ketbra{i}{i}_{I}\otimes\mathcal{D}(\rho_K)\otimes\ketbra{j}{j}_{J}\big).
	\end{align}
	We may now apply the map $W_{A B} \mapsto \sum_{\tilde{\alpha}, \tilde{\beta}} \ketbra{\tilde{\alpha}}{\tilde{\alpha}} \otimes \ketbra{\tilde{\beta}}{\tilde{\beta}} W_{A B}  \ketbra{\tilde{\alpha}}{\tilde{\alpha}} \otimes \ketbra{\tilde{\beta}}{\tilde{\beta}}$ to the first and the last expression in this equality. Since this map acts like an identity on the first, we obtain 
		\begin{align}
			\sum_{\tilde{\alpha},\tilde{\beta}} & \mathrm{tr}_K\big(X_{i,\tilde{\alpha}}Y_{j,\tilde{\beta}}\rho_K\big)\otimes\ketbra{\tilde{\alpha}}{\tilde{\alpha}}_A\otimes\ketbra{\tilde{\beta}}{\tilde{\beta}}_B\\
			&=\sum_{\tilde{\alpha},\tilde{\beta}}\ketbra{\tilde{\alpha}}{\tilde{\alpha}}_A\otimes\ketbra{\tilde{\beta}}{\tilde{\beta}}_B\Big[\big(\overline{\mathcal{X}}\otimes\overline{\mathcal{Y}}\big)\, \big(\ketbra{i}{i}_{I}\otimes\mathcal{D}(\rho_K)\otimes\ketbra{j}{j}_{J}\big)\Big]\ketbra{\tilde{\alpha}}{\tilde{\alpha}}_A\otimes\ketbra{\tilde{\beta}}{\tilde{\beta}}_B\\
			&=(\overline{\mathcal{X}}_i\otimes \overline{\mathcal{Y}}_j)\circ\mathcal{D}(\rho_K),
		\end{align}
where we have used the definitions \cref{eq:defAi,eq:defBj}. Sandwiching this equality with $\ket{\alpha}_A\otimes \ket{\beta}_B$ yields \cref{eq:intermediateiso}, which we wanted to show.
	
	Next, note that $\overline{\mathcal{X}}_{i,\alpha}$ and $\overline{\mathcal{Y}}_{j,\beta}$ are CP maps from $K$ to a one-dimensional system. According to \cref{lem:1dmaps} there exist Hermitian operators $\overline{X}_{i,\alpha}$ and $\overline{Y}_{j,\beta}$ such that 
		\begin{align}
			\begin{split}
				\overline{\mathcal{X}}_{i,\alpha}(W_{K})&=\mathrm{tr}\left(\overline{X}_{i,\alpha} W_{K}\right)\\
				\overline{\mathcal{Y}}_{j,\beta}(W_{K})&=\mathrm{tr}\left(\overline{Y}_{j,\beta}W_{K}\right),\label{eq:HermitianOps}
			\end{split}
		\end{align}
	hence
		\begin{equation}
			\big(\overline{\mathcal{X}}_{i,\alpha}\otimes\overline{\mathcal{Y}}_{j,\beta}\big)(W_{KK})=\mathrm{tr}\big((\overline{X}_{i,\alpha}\otimes \overline{Y}_{j,\beta})W_{KK}\big).
		\end{equation}
	Thus, \cref{eq:intermediateiso} can be rewritten as
		\begin{align}
			\label{eq:rewrite}
			\mathrm{tr}\big(X_{i,\alpha}Y_{j,\beta}\rho_K\big)&=\mathrm{tr}\big((\overline{X}_{i,\alpha}\otimes\overline{Y}_{j,\beta})\circ\mathcal{D}(\rho_K)\big).
		\end{align}
	For the next step, we use that according to the Stinespring dilation, there exists an isometric map $\overline{\mathcal{D}}:K\to K\otimes K\otimes R$ such that $\tr_R\circ\overline{\mathcal{D}}=\mathcal{D}$, i.e.,  $\overline{\mathcal{D}}(\rho_K)={V}\rho_K{V}^*$ for some isometry ${V}:K\to K\otimes K\otimes R$. 
 Hence, \cref{eq:rewrite} can be written as
		\begin{align}
			\mathrm{tr}\big(X_{i,\alpha}Y_{j,\beta}\rho_K\big)&=\mathrm{tr}\big((\overline{X}_{i,\alpha}\otimes\overline{Y}_{j,\beta}\otimes\mathrm{id}_R)\circ\overline{\mathcal{D}}(\rho_K)\big)\\
			&=\mathrm{tr}\big((\overline{X}_{i,\alpha}\otimes\overline{Y}_{j,\beta}\otimes\mathrm{id}_R){V}\rho_K{V}^*\big)\\
			&=\mathrm{tr}\big({V}^*(\overline{X}_{i,\alpha}\otimes\overline{Y}_{j,\beta}\otimes\mathrm{id}_R){V}\rho_K\big).
		\end{align}
	This is true for any $\rho_K\in K$, hence
		\begin{equation}
			\label{eq:embedXY}
			X_{i,\alpha}Y_{j,\beta}={V}^*(\overline{X}_{i,\alpha}\otimes\overline{Y}_{j,\beta}\otimes\mathrm{id}_R){V}.
		\end{equation}
	
	Since $\overline{\mathcal{Y}}_j$ is trace-preserving, $\sum_{\beta}\overline{\mathcal{Y}}_{j,\beta}$ is also trace-preserving:
		\begin{equation}
			\mathrm{tr}\Big(\sum_{\beta}\overline{\mathcal{Y}}_{j,\beta}(W_K)\Big)=\sum_{\beta}\mathrm{tr}\big(\bra{\beta}\overline{\mathcal{Y}}_{j}(W_K)\ket{\beta}\big)=\mathrm{tr}\big(\overline{\mathcal{Y}}_{j}(W_K)\big)=\mathrm{tr}(W_K).
		\end{equation}
	Because this holds for all $W_K$ and all $j$, combining it with \cref{eq:HermitianOps} yields that the operators $\overline{Y}_{j,\beta}$ fulfil $\sum_{\beta}\overline{Y}_{j,\beta}=\mathrm{id}_{K}$ for all $j$. Similarly, we find that $\sum_{\alpha}\overline{X}_{i,\alpha}=\mathrm{id}_{K}$ for all $i$. Summing over $\beta$ in \cref{eq:embedXY} then yields
		\begin{equation}
			\label{eq:embedX}
			X_{i,\alpha}={V}^*\big(\overline{X}_{i,\alpha}\otimes\mathrm{id}_{K}\otimes\mathrm{id}_R\big){V}
		\end{equation}
 	and, similarly, summing over $\alpha$ yields
		\begin{equation}
			\label{eq:embedY}
			Y_{j,\beta}={V}^*\big(\mathrm{id}_{K}\otimes \overline{Y}_{j,\beta}\otimes\mathrm{id}_R\big){V}.
		\end{equation}
	
	With the identification $K_A\equiv K$, $K_B\equiv K\otimes R$, \cref{eq:embedX,eq:embedY} say that $X_{i,\alpha}$ and $Y_{j,\beta}$ are isometrically represented as operators on $\overline{K}=K_A\otimes K_B$ that act non-trivially only on $K_A$ and $K_B$, respectively.
\end{proof}

\begin{rem}
	It is actually not necessary in \cref{cor:vNalgebras} to assume that for all $i,j$, $\sum_\alpha X_{i,\alpha}=\mathrm{id}_K$ and $\sum_\beta Y_{j,\beta}=\mathrm{id}_K$. If this does not hold, we can scale the operators with a constant $\gamma>0$ such that $\sum_{\alpha} X_{i, \alpha} \le\frac{1}{\gamma}\mathrm{id}_K$ and add the operator $X_{i,0}\coloneqq \mathrm{id}_K-\gamma\sum_\alpha X_{i,\alpha}$ (and analogously for $Y$). This operator is also positive and commutes with all operators in $\{Y_{j,\beta}\}$.
\end{rem}

\begin{rem} \label{rem:TsirelsonGeneralised}
  As described above, \cref{cor:vNalgebras} is obtained from \cref{cor:commute} by treating $I$, $J$, $A$, and $B$ as classical systems. We could apply the same procedure directly to \cref{thm:tensorP}. This allows us to derive a stronger version of \cref{cor:vNalgebras}, where~\cref{eq:commutingoperators} is replaced by the weaker condition that the two families of operators $\{X_{i,\alpha}\}$ and $\{Y_{j,\beta}\}$ satisfy
    \begin{equation} \label{eq:weakcommutation}
    \sum_\alpha \sqrt{X_{i, \alpha}} Y_{j, \beta} \sqrt{X_{i, \alpha}}  =  Y_{j, \beta}   \quad  \text{ and } \forall \, i, j, \beta \ .
  \end{equation}
  Although the condition merely involves a sum over $\alpha$ rather than a commutation relation for each~$\alpha$, it suffices to imply that the operators factorise as in \cref{eq:operatorfactorisation}.
    
  To prove this, we define, analogously to \cref{eq:defextAB},
  \begin{align}
   			\begin{split}
			    \mathcal{M}&: W_{H}\mapsto\sum_{i,\alpha}\left(\ketbra{i}{i}_{I}\otimes\sqrt{X_{i,\alpha}}\otimes\mathrm{id}_{J}\right) W_{H}\left(\ketbra{i}{i}_{I}\otimes\sqrt{X_{i,\alpha}}\otimes\mathrm{id}_{J}\right)\otimes\ketbra{\alpha}{\alpha}_A,\\
			    \mathcal{N} &: W_{H}\mapsto\sum_{j,\beta} \mathrm{tr}_H \left(\mathrm{id}_{I}\otimes\sqrt{Y_{j,\beta}}\otimes\ketbra{j}{j}_{J}\right) W_{H}\left(\mathrm{id}_{I}\otimes\sqrt{Y_{j,\beta}}\otimes\ketbra{j}{j}_{J}\right)\otimes\ketbra{\beta}{\beta}_B.
			\end{split}
  \end{align}
  These CP maps manifestly satisfy Condition~\ref{cond:3} of \cref{thm:tensorP}. Furthermore, since $\mathcal{M}$ is identical to ${\mathcal{X}}$ as defined in the proof of \cref{cor:vNalgebras}, we already know that $\tr_A \circ \mathcal{M}$ is unital and trace-preserving, so Condition~\ref{cond:2} holds. Finally, Condition~\ref{cond:1} is equivalent to the requirement that, for all $\rho_K$ and all $i$, $j$,
    \begin{align}
       \sum_{\beta} \tr_A \circ \mathcal{N} \circ \mathcal{M}(\ketbra{i}{i} \otimes \rho_K \otimes \ket{j} \bra{j}) = \sum_{\beta} \mathcal{N}(\ketbra{i}{i} \otimes \rho_K \otimes \ket{j} \bra{j}).
    \end{align}
   Inserting the explicit expressions for the maps, the requirement can be rewritten as
   \begin{align} 
      \sum_{\alpha,\beta} \tr\big(Y_{j, \beta} \sqrt{X_{i, \alpha}} \rho_K \sqrt{X_{i, \alpha}}\big)\otimes\ketbra{\beta}{\beta}_B = \sum_{\beta}\tr\big(Y_{j, \beta} \rho_K\big)\otimes\ketbra{\beta}{\beta}_B \quad \forall \rho_K, i, j.
   \end{align}
  In particular, the equality must hold individually for each term of the sum over~$\beta$. It is thus equivalent to 
  \begin{align} 
  	\sum_{\alpha} \tr\big(Y_{j, \beta} \sqrt{X_{i, \alpha}} \rho_K \sqrt{X_{i, \alpha}}\big) = \tr\big(Y_{j, \beta} \rho_K\big) \quad \forall \rho_K, i, j,\beta,
  \end{align}
  which in turn is equivalent to \cref{eq:weakcommutation}.
  
\end{rem}

\begin{cor}
	\label{cor:multimap}
	Let $\mathcal{M}_1,\dots,\mathcal{M}_s$ be CPTP maps such that $\mathcal{M}_i:I_i \otimes K\to A_i \otimes K$, and $\tr_{A_i}\circ\mixed{I_i}\circ\mathcal{M}_i$ is unital.\footnote{Here, $\mixed{I_i}$ denotes the map that creates the corresponding state, see \cref{not:stategeneratingmap}.} If, for all $t\in\{1,\dots,s-1\}$,
		\begin{equation}
			\label{eq:multimapcond1}
			\tr_{A_1,\dots, A_t}\circ\tr_K\circ\mathcal{M}_s\circ\dots\circ\mathcal{M}_1=\tr_K\circ\mathcal{M}_s\circ\dots\circ\mathcal{M}_{t+1}\circ\tr_{I_t}\circ\dots\circ\tr_{I_1},
		\end{equation}
	then there exists a CPTP map $\mathcal{D}:K\to K\otimes K\otimes\dots\otimes K$ such that 
		\begin{equation}\label{eq:multimapresult}
			\tr_K\circ\mathcal{M}_s\circ \dots\circ\mathcal{M}_1=\left(\overline{\mathcal{M}}_1\otimes\dots\otimes\overline{\mathcal{M}}_s\right)\circ\mathcal{D},
		\end{equation}
	where $\overline{\mathcal{M}}_i=\tr_K\circ\mathcal{M}_i$.
\end{cor}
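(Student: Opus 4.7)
The plan is to proceed by induction on $s$, with the inductive step peeling off the leftmost map $\mathcal{M}_1$ via \cref{thm:tensorP} and handling the residual composition of $\mathcal{M}_2,\dots,\mathcal{M}_s$ through the induction hypothesis. The base case $s=1$ is immediate: \cref{eq:multimapresult} reads $\tr_K\circ\mathcal{M}_1=\overline{\mathcal{M}}_1\circ\mathcal{D}$, which holds with $\mathcal{D}\coloneqq\mathcal{I}_K$, and the hypothesis~\cref{eq:multimapcond1} is vacuous.

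For the inductive step with $s\ge 2$, set $J\coloneqq I_2\otimes\cdots\otimes I_s$, $H\coloneqq I_1\otimes K\otimes J$, and $B\coloneqq A_2\otimes\cdots\otimes A_s$. The two maps fed into \cref{thm:tensorP} are
\begin{align}
\mathcal{M}(W_H) & \coloneqq (\mathcal{M}_1\otimes\mathcal{I}_J)(W_H)\otimes\mixed{I_1}, \\
\mathcal{N} & \coloneqq\tr_K\circ\mathcal{M}_s\circ\cdots\circ\mathcal{M}_2\circ\tr_{I_1},
\end{align}
where the $\mixed{I_1}$ tensor factor is a freshly prepared maximally mixed state restoring the $I_1$-slot on the output, so that $\mathcal{M}:H\to A_1\otimes H$. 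Condition~\ref{cond:3} of \cref{thm:tensorP} holds by construction: $\tr_H\circ\mathcal{M}$ is independent of $J$ and $\mathcal{N}$ is independent of $I_1$. Condition~\ref{cond:1} reduces to the $t=1$ instance of \cref{eq:multimapcond1}, because the auxiliary $\mixed{I_1}$ is immediately annihilated by the $\tr_{I_1}$ at the start of $\mathcal{N}$, so that $\mathcal{N}\circ\mathcal{M}=\tr_K\circ\mathcal{M}_s\circ\cdots\circ\mathcal{M}_1$. For Condition~\ref{cond:2}, $\mathcal{M}$ is manifestly CPTP, and unitality of $\tr_{A_1}\circ\mathcal{M}$ (in the weakened sense of \cref{rem:unitalrequirement} with $\rho_{I_1}=\mixed{I_1}$) translates exactly into the assumed unitality of $\tr_{A_1}\circ\mathcal{M}_1\circ\mixed{I_1}$ on $K$.

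\cref{thm:tensorP} then yields a CPTP map $\mathcal{D}_1:K\to K\otimes K$ with $\mathcal{N}\circ\mathcal{M}=(\overline{\mathcal{M}}\otimes\overline{\mathcal{N}})\circ\mathcal{D}_1$, and \cref{rem:mapunique} identifies $\overline{\mathcal{M}}=\overline{\mathcal{M}}_1$ and $\overline{\mathcal{N}}=\tr_K\circ\mathcal{M}_s\circ\cdots\circ\mathcal{M}_2$. To apply the induction hypothesis to the family $\mathcal{M}_2,\dots,\mathcal{M}_s$, I verify the reduced instances of \cref{eq:multimapcond1} for $t\in\{1,\dots,s-2\}$: each follows by combining the $t=1$ instance (already used above) with the $(t+1)$-th instance of the original condition and stripping the common trailing $\tr_{I_1}$ through \cref{rem:mapunique}. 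The hypothesis then supplies a CPTP map $\mathcal{D}':K\to K^{\otimes(s-1)}$ with $\overline{\mathcal{N}}=(\overline{\mathcal{M}}_2\otimes\cdots\otimes\overline{\mathcal{M}}_s)\circ\mathcal{D}'$, and setting $\mathcal{D}\coloneqq(\mathcal{I}_K\otimes\mathcal{D}')\circ\mathcal{D}_1$ produces~\cref{eq:multimapresult}.

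The principal obstacle will be Condition~\ref{cond:2}: the corollary's hypothesis only guarantees unitality after averaging the $I_i$-slot, whereas \cref{thm:tensorP} originally demands unitality on all of $H$, so the argument must lean on the weakened criterion of \cref{rem:unitalrequirement} and track carefully how the auxiliary $\mixed{I_1}$ is threaded through the construction. A secondary, purely combinatorial subtlety is the propagation of \cref{eq:multimapcond1} from the full family to the reduced one, which relies on the uniqueness clause of \cref{rem:mapunique} to cancel the common $\tr_{I_1}$.
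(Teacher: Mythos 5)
Your proposal is correct and follows essentially the same route as the paper: the paper also peels off $\mathcal{M}_1$ by applying \cref{thm:tensorP} to $\mathcal{M}^{(1)}=\mixed{I_1}\circ\mathcal{M}_1$ and $\mathcal{N}^{(1)}=\tr_K\circ\mathcal{M}_s\circ\dots\circ\mathcal{M}_2\circ\tr_{I_1}$, derives the intermediate conditions by combining two instances of \cref{eq:multimapcond1} and cancelling the trailing partial traces, and then repeats; your induction is just the recursive phrasing of that iteration, with $\mathcal{D}=(\mathcal{I}_K\otimes\mathcal{D}')\circ\mathcal{D}_1$ matching the paper's $\mathcal{D}=\mathcal{D}^{(s-1)}\circ\dots\circ\mathcal{D}^{(1)}$. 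The only cosmetic difference is that, since your $\mathcal{M}$ already replaces the $I_1$-slot by a fresh $\mixed{I_1}$, the hypothesis gives full unitality of $\tr_{A_1}\circ\mathcal{M}$ on $H$ directly, so the detour through \cref{rem:unitalrequirement} is not actually needed.
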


\begin{proof}
	
	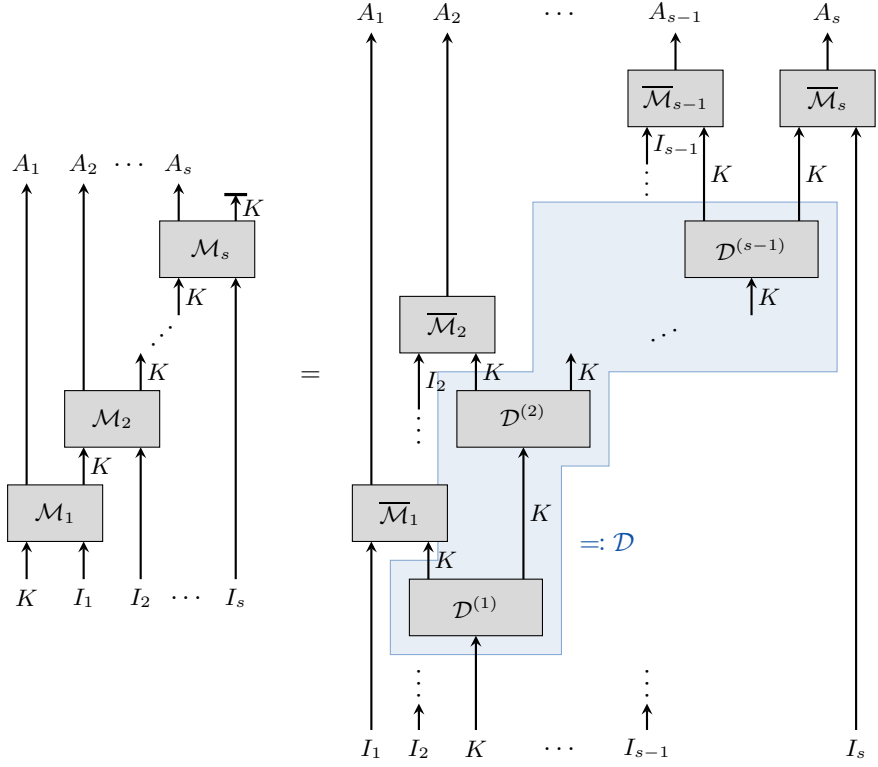
\begin{figure}[t]
		\centering
		\begin{tikzpicture}[baseline=(current bounding box.center)]
			\draw[fill=gray!30] (0,0) rectangle (1.25,0.75);
			\node at (0.625,0.375) {\small $\mathcal{M}_1$};
			\draw[fill=gray!30] (0.75,1.25) rectangle (2,2);
			\node at (1.375,1.625) {\small $\mathcal{M}_2$};
			\draw[fill=gray!30] (2,3.5) rectangle (3.25,4.25);
			\node at (2.675,3.875) {\small $\mathcal{M}_s$};
			\draw[->,>=stealth,thick] (0.25,-0.5) -- (0.25,0);
			\draw[->,>=stealth,thick] (1,-0.5) -- (1,0);
			\draw[->,>=stealth,thick] (0.25,0.75) -- (0.25,4.75);
			\draw[->,>=stealth,thick] (1,0.75) to node[right=-0.05cm] {\small $K$} (1,1.25);
			\draw[->,>=stealth,thick] (1,2) -- (1,4.75);
			\draw[->,>=stealth,thick] (1.75,2) to node[right=-0.05cm] {\small $K$} (1.75,2.5);
			\node[rotate=45] at (2.05,2.75) {$\dots$};
			\draw[->,>=stealth,thick] (2.25,3) to node[right=-0.05cm] {\small $K$} (2.25,3.5);
			\draw[->,>=stealth,thick] (3,-0.5) -- (3,3.5);
			\draw[->,>=stealth,thick] (2.25,4.25) -- (2.25,4.75);
			\draw[->,>=stealth,thick] (3,4.25) to node[right=-0.05cm] {\small $K$} (3,4.6);
			\draw[very thick] (2.85,4.6) -- (3.15,4.6);
			\draw[->,>=stealth,thick] (1.75,-0.5) -- (1.75,1.25);
			\node at (0.25,-0.75) {\small $K$};
			\node at (1,-0.75) {\small $I_1$};
			\node at (1.75,-0.75) {\small $I_2$};
			\node at (2.375,-0.75) {$\dots$};
			\node at (3,-0.75) {\small $I_s$};
			\node at (0.25,5) {\small $A_1$};
			\node at (1,5) {\small $A_2$};
			\node at (1.625,5) {$\dots$};
			\node at (2.25,5) {\small $A_s$};
		\end{tikzpicture}\ \ \ =\ \ \ 
		\begin{tikzpicture}[baseline=(current bounding box.center)]
			\draw[fill=myblue!10,draw=myblue!50] (0.5,-1.5) -- (2.75,-1.5) -- (2.75,1) -- (3.375,1) -- (3.375,2.25) -- (6.375,2.25) -- (6.375,4.5) -- (2.375,4.5) -- (2.375,2.25) -- (1.125,2.25) -- (1.125,-0.25) -- (0.5,-0.25) -- cycle;
			\node[color=myblue] at (3.35,0) {$\eqqcolon\mathcal{D}$};
			\draw[fill=gray!30] (0,0) rectangle (1.25,0.75);
			\node at (0.625,0.375) {\small $\overline{\mathcal{M}}_1$};
			\draw[fill=gray!30] (0.625,2.5) rectangle (1.875,3.25);
			\node at (1.25,2.875) {\small $\overline{\mathcal{M}}_2$};
			\draw[fill=gray!30] (0.75,-1.25) rectangle (2.5,-0.5);
			\node at (1.625,-0.875) {\small $\mathcal{D}^{(1)}$};
			\draw[fill=gray!30] (1.375,1.25) rectangle (3.125,2);
			\node at (2.25,1.675) {\small $\mathcal{D}^{(2)}$};
			\draw[->,>=stealth,thick] (0.25,-2.5) -- (0.25,0);
			\draw[->,>=stealth,thick] (1.625,-2.5) -- (1.625,-1.25);
			\draw[->,>=stealth,thick] (1,-0.5) to node[right=-0.05cm] {\small $K$} (1,0);
			\draw[->,>=stealth,thick] (2.25,-0.5) to node[right=-0.05cm] {\small $K$} (2.25,1.25);
			\draw[->,>=stealth,thick] (0.25,0.75) -- (0.25,6.75);
			\draw[->,>=stealth,thick] (1.625,2) to node[right=-0.05cm] {\small $K$} (1.625,2.5);
			\draw[->,>=stealth,thick] (2.875,2) to node[right=-0.05cm] {\small $K$} (2.875,2.5);
			\draw[->,>=stealth,thick] (0.875,1.75) to node[right=-0.05cm] {\small $I_2$} (0.875,2.5);
			\node[rotate=90] at (0.875,1.5) {$\dots$};
			\draw[->,>=stealth,thick] (0.875,-2.5) -- (0.875,-2.15);
			\node[rotate=90] at (0.875,-1.85) {$\dots$};
			\node[rotate=25] at (4.125,2.75) {$\dots$};
			\draw[->,>=stealth,thick] (1.25,3.25) -- (1.25,6.75);
			\draw[->,>=stealth,thick] (3.875,-2.5) -- (3.875,-2.15);
			\node[rotate=90] at (3.875,-1.85) {$\dots$};
			\begin{scope}[xshift=1.75cm,yshift=0.75cm]
			\draw[->,>=stealth,thick] (2.125,4.25) to node[right=-0.05cm,pos=0.5] {\small $I_{s-1}$} (2.125,4.75);
			\node[rotate=90] at (2.125,4.025) {\small $\dots$};
			\begin{scope}[yshift=-0.75cm]
			\draw[fill=gray!30] (2.625,3.5) rectangle (4.375,4.25);
			\node at (3.5,3.875) {\small $\mathcal{D}^{(s-1)}$};
			\draw[->,>=stealth,thick] (3.5,3) to node[right=-0.05cm] {\small $K$} (3.5,3.5);
			\end{scope}
			\draw[fill=gray!30] (1.875,4.75) rectangle (3.125,5.5);
			\node at (2.5,5.125) {\small $\overline{\mathcal{M}}_{s-1}$};
			\draw[fill=gray!30] (3.875,4.75) rectangle (5.125,5.5);
			\node at (4.5,5.125) {\small $\overline{\mathcal{M}}_{s}$};
			\node at (2.5,6.25) {\small $A_{s-1}$};
			\node at (4.5,6.25) {\small $A_s$};
			\node at (4.875,-3.5) {\small $I_s$};
			\draw[->,>=stealth,thick] (4.125,3.5) to node[right=-0.05cm] {\small $K$} (4.125,4.75);
			\draw[->,>=stealth,thick] (2.875,3.5) to node[right=-0.05cm] {\small $K$} (2.875,4.75);
			\draw[->,>=stealth,thick] (2.5,5.5) -- (2.5,6);
			\draw[->,>=stealth,thick] (4.5,5.5) -- (4.5,6);
			\draw[->,>=stealth,thick] (4.875,-3.25) -- (4.875,4.75);
			\end{scope}
			\node at (1.625,-2.75) {\small $K$};
			\node at (0.25,-2.75) {\small $I_1$};
			\node at (0.875,-2.75) {\small $I_2$};
			\node at (2.75,-2.75) {$\dots$};
			\node at (3.875,-2.75) {\small $I_{s-1}$};
			\node at (0.25,7) {\small $A_1$};
			\node at (1.25,7) {\small $A_2$};
			\node at (2.75,7) {$\dots$};
		\end{tikzpicture}
		\caption{\label{fig:multimapcorollary} \textbf{Visualisation of \cref{cor:multimap}.} The diagram shows~\cref{eq:multimapresult}, which generalises the statement of~\cref{thm:tensorP} to a sequence of $s$ maps. Here, $\overline{\mathcal{M}}_i=\tr_K\circ\mathcal{M}_i$ and $\mathcal{D}\coloneqq\mathcal{D}^{(s-1)}\circ\dots\circ\mathcal{D}^{(1)}$ (depicted in blue).}
	\end{figure}

	We will prove the statement via iteratively applying \cref{thm:tensorP}.
	First, note that we can always insert a map $\tr_{I_i}\circ\mixed{I_i}$ without changing the left-hand side of \cref{eq:multimapresult}, for example,
		\begin{equation}
			\tr_K\circ\mathcal{M}_s\dots\circ\mathcal{M}_1=\tr_K\circ\mathcal{M}_s\dots\circ\mathcal{M}_2\circ\tr_{I_1}\circ\mixed{I_1}\circ\mathcal{M}_1.
		\end{equation}
	Thus, for the first iteration, we choose $\mathcal{M}^{(1)}\coloneqq\mixed{I_1}\circ\mathcal{M}_1$ and $\mathcal{N}^{(1)}\coloneqq\tr_K\circ\mathcal{M}_s\circ\dots\circ\mathcal{M}_2\circ\tr_{I_1}$, as well as $I^{(1)}\coloneqq I_1$ and $J^{(1)}\coloneqq I_2\dots I_s$. With these definitions, all conditions in \cref{thm:tensorP} are fulfilled:
		\begin{enumerate}[label=(\roman*)]
			\item From \cref{eq:multimapcond1}, it directly follows that
			\begin{align}
				\tr_{A_1}\circ\mathcal{N}^{(1)}\circ\mathcal{M}^{(1)}&=\tr_{A_1}\circ\left(\tr_K\circ\mathcal{M}_s\circ\dots\circ\mathcal{M}_2\circ\tr_{I_1}\right)\circ\left(\mixed{I_1}\circ\mathcal{M}_1\right)\\
				&=\tr_{A_1}\circ\tr_K\circ\mathcal{M}_s\circ\dots\circ\mathcal{M}_1\\
				&=\tr_K\circ\mathcal{M}_s\circ\dots\circ\mathcal{M}_2\circ\tr_{I_1}\\
				&=\mathcal{N}^{(1)}.
			\end{align}
			\item $\tr_{A_1}\circ\mathcal{M}^{(1)}=\tr_{A_1}\circ\mixed{I_1}\circ\mathcal{M}_1$ is unital by assumption.
			\item $\tr_H\circ\mathcal{M}^{(1)}=\tr_H\circ\mixed{I_1}\circ\mathcal{M}_1=\tr_K\circ\mathcal{M}_1\circ\tr_{I_2\dots I_s}$ and $\mathcal{N}^{(1)}=\tr_K\circ\mathcal{M}_s\circ\dots\circ\mathcal{M}_2\circ\tr_{I_1}$ are obviously independent of $J^{(1)}=I_2\dots I_s$ and $I^{(1)}=I_1$, respectively.
		\end{enumerate}
	Thus, we can apply \cref{thm:tensorP}, which implies the existence of a CPTP map $\mathcal{D}^{(1)}:K\to K\otimes K$ such that
		\begin{equation}
			\label{eq:multimap1stiteration}
			\mathcal{N}^{(1)}\circ\mathcal{M}^{(1)}=\left(\overline{\mathcal{M}^{(1)}}\otimes\overline{\mathcal{N}^{(1)}}\right)\circ\mathcal{D}^{(1)},
		\end{equation}
	where $\overline{\mathcal{M}^{(1)}}\coloneqq\tr_K\circ\mathcal{M}_1$ and $\overline{\mathcal{N}^{(1)}}\coloneqq\tr_K\circ\mathcal{M}_s\circ\dots\mathcal{M}_2$. Thus, \cref{eq:multimap1stiteration} translates to
		\begin{equation} \label{eq:firstinductionresult}
			\tr_K\circ\mathcal{M}_s\circ\dots\circ\mathcal{M}_1=\Bigl(\bigl(\underbrace{\tr_K\circ\mathcal{M}_1}_{=\overline{\mathcal{M}^{(1)}}}\bigr)\otimes\bigl(\underbrace{\tr_K\circ\mathcal{M}_s\circ\dots\circ\mathcal{M}_2}_{=\overline{\mathcal{N}^{(1)}}}\bigr)\Bigr) \circ\mathcal{D}^{(1)}.
		\end{equation}
	
	Iteration steps $2$ to $s-1$ then work similarly: First, we show that \cref{eq:multimapcond1} implies that for all $t\in\{1,\dots,s-1\}$,
		\begin{equation}\label{eq:multimapcondt}
			\tr_{A_{t}}\circ\tr_K\circ\mathcal{M}_s\circ\dots\circ\mathcal{M}_{t}=\tr_K\circ\mathcal{M}_s\circ\dots\circ\mathcal{M}_{t+1}\circ\tr_{I_t}.
		\end{equation} 
	This can be derived by applying \cref{eq:multimapcond1} twice:
		\begin{align}
			\tr_{A_t}\circ\tr_K\circ\mathcal{M}_s\circ\dots\circ\mathcal{M}_{t}\circ\tr_{I_{t-1}\dots I_1}&=\tr_{A_t} \circ\tr_{A_{t-1}\dots A_1}\circ\tr_K\circ\mathcal{M}_s\circ\dots\circ\mathcal{M}_1\\
			&=\tr_K\circ\mathcal{M}_s\circ\dots\circ\mathcal{M}_{t+1}\circ\tr_{I_t}\circ\tr_{I_{t-1}\dots I_1}.
		\end{align}
	We now set $\mathcal{M}^{(t)}\coloneqq\mixed{I_t}\circ\mathcal{M}_t$ and $\mathcal{N}^{(t)}\coloneqq\tr_K\circ\mathcal{M}_s\circ\dots\circ\mathcal{M}_{t+1}\circ\tr_{I_t}$, and $I^{(t)}\coloneqq I_t$, $J^{(t)}\coloneqq I_{t+1}\dots I_s$. For this choice, the assumptions of \cref{thm:tensorP} are fulfilled: $\tr_{A_t}\circ\mixed{I_t}\circ\mathcal{M}_t$ is unital by assumption, and it is clear that $\mathcal{M}^{(t)}$ and $\mathcal{N}^{(t)}$ are independent of $I_{t+1}\dots I_s$ and $I_t$, respectively. Furthermore, \cref{eq:multimapcondt} implies that Condition~\ref{cond:1} is fulfilled:
		\begin{align}
			\tr_{A_t}\circ\mathcal{N}^{(t)}\circ\mathcal{M}^{(t)}&=\tr_{A_t}\circ\left(\tr_K\circ\mathcal{M}_s\circ\dots\circ\mathcal{M}_{t+1}\circ\tr_{I_t}\right)\circ\left(\mixed{I_t}\circ\mathcal{M}_t\right)\\
			&=\tr_{A_t}\circ\tr_K\circ\mathcal{M}_s\circ\dots\circ\mathcal{M}_t\\
			&=\tr_K\circ\mathcal{M}_s\circ\dots\circ\mathcal{M}_{t+1}\circ\tr_{I_t}\\
			&=\mathcal{N}^{(t)}.
		\end{align}
	Hence, \cref{thm:tensorP} yields that there exists a CPTP map $\mathcal{D}^{(t)}:K\to K\otimes K$ such that 
		\begin{equation}
		        \overline{\mathcal{N}^{(t-1)}}
		        = \mathcal{N}^{(t)}\circ\mathcal{M}^{(t)}=\left(\overline{\mathcal{M}^{(t)}}\otimes\overline{\mathcal{N}^{(t)}}\right)\otimes\mathcal{D}^{(t)},
		\end{equation}
	where $\overline{\mathcal{M}^{(t)}}\coloneqq\tr_K\circ\mathcal{M}_t$ and $\overline{\mathcal{N}^{(t)}}\coloneqq\tr_K\circ\mathcal{M}_s\circ\dots\circ\mathcal{M}_{t+1}$. Starting from \cref{eq:firstinductionresult} and using this induction step repeatedly, we obtain
		\begin{equation}
			\tr_K\circ\mathcal{M}_s \circ \dots\circ\mathcal{M}_1=\left(\overline{\mathcal{M}}_1\otimes\dots\otimes\overline{\mathcal{M}}_s\right)\circ\mathcal{D},
		\end{equation}
	where $\overline{\mathcal{M}}_i=\tr_K\circ\mathcal{M}_i$ and $\mathcal{D}=\mathcal{D}^{(s-1)}\circ\dots\circ\mathcal{D}^{(1)}$ (see \cref{fig:multimapcorollary}).
	
\end{proof}

\begin{rem}
  In the same way as \cref{cor:commute} replaces Condition~\ref{cond:1} of \cref{thm:tensorP}  by a commutation condition, one may replace assumption \cref{eq:multimapcond1} of \cref{cor:multimap} on the maps $\mathcal{M}_1, \ldots, \mathcal{M}_s$ by a commutation assumption, namely that changing the order of the maps in the concatenation $\mathrm{tr}_K \circ \mathcal{M}_s \circ \cdots \circ \mathcal{M}_1$ does not have an effect. This results in a statement similar to \cref{cor:commute}: If the $s \geq 2$ maps on the left-hand side of \cref{eq:multimapresult} commute and satisfy the unitality condition, then they factorise as on the right-hand side of \cref{eq:multimapresult}.
\end{rem}

\section*{Acknowledgements}
This work was supported by the Air Force Office of Scientific Research (AFOSR), grant No.\ FA9550-19-1-0202, the QuantERA project eDICT, the SNSF grant No.\  200021\_188541, the National Centre of Competence in Research SwissMAP, and the Quantum Center at ETH Zurich. We acknowledge the hospitality of the Centro de Ciencias de Benasque Pedro Pascual, Spain.


\appendix
\section{On the Choi-Jamio\l kowski isomorphism}
\label{app:CJ}

Let $\psi_{H\tilde{H}}=\smash{\ketbra{\psi}{\psi}_{H\tilde{H}}}$ be a maximally entangled state between a finite-dimensional space $H$ and an isomorphic space $\tilde{H}$, which we keep fixed for the following discussion. According to \cref{rem:CJ}, we may choose an orthonormal basis $\smash{\{\ket{i}_H\}_{i \in \{1, \ldots, d\}}}$ of $H$, which then induces an orthonormal basis $\{\ket{i}_{\tilde{H}}\}_{i \in \{1, \ldots d\}}$ on $\tilde{H}$ such that
	\begin{equation}
		\label{eq:Schmidt}
		\ket{\psi}_{H\tilde{H}}=\frac{1}{\sqrt{d}}\sum_i\ket{i}_H \otimes \ket{i}_{\tilde{H}},
	\end{equation}
where $d\coloneqq\dim(H) = \dim(\tilde{H})$. 

\begin{defn}
	\label{def:CJstate}
	 The Choi-Jamio\l kowski (C.-J.) operator of a CP map $\mathcal{M}:H\to K$ is defined as 
		\begin{equation}
			\rho_{K\tilde{H}}\coloneqq\mathcal{M}(\psi_{H\tilde{H}}).
		\end{equation}
\end{defn}

\begin{lem} \label{lem:CJrepresentation}
	Let $\rho_{K\tilde{H}}$ be the C.-J.~operator of a CP map $\mathcal{M}:H\to K$. Then for all operators $W_H$ on $H$,
		\begin{equation}
			\mathcal{M}(W_H)=d^2\,\mathrm{tr}_{\tilde{H}}\Big(\mathrm{tr}_H\big(W_H\psi_{H\tilde{H}}\big)\rho_{K\tilde{H}}\Big).
		\end{equation} 
\end{lem}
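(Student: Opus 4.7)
The plan is to exploit linearity of both sides in $W_H$, reducing the claim to verifying it on the basis $\{\ket{k}\bra{l}_H\}_{k,l}$ of operators on $H$, where $\{\ket{i}_H\}$ is the orthonormal basis appearing in the Schmidt decomposition \cref{eq:Schmidt}.

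First I would compute $\tr_H(\ket{k}\bra{l}_H \psi_{H\tilde{H}})$. Expanding $\psi_{H\tilde{H}} = \tfrac{1}{d} \sum_{i,j} \ket{i}\bra{j}_H \otimes \ket{i}\bra{j}_{\tilde{H}}$ from \cref{eq:Schmidt} and multiplying on the left by $\ket{k}\bra{l}_H$ (with an implicit $\mathrm{id}_{\tilde{H}}$) gives $\tfrac{1}{d}\sum_{j} \ket{k}\bra{j}_H \otimes \ket{l}\bra{j}_{\tilde{H}}$; taking the partial trace over $H$ then reduces this to $\tfrac{1}{d}\ket{l}\bra{k}_{\tilde{H}}$. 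This is the standard transpose (or \emph{ricochet}) property of the maximally entangled state.

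Next I would substitute this, together with a Kraus representation $\mathcal{M}(X) = \sum_z E_z X E_z^*$ which yields $\rho_{K\tilde{H}} = \tfrac{1}{d}\sum_{z,i,j} E_z \ket{i}\bra{j} E_z^* \otimes \ket{i}\bra{j}_{\tilde{H}}$, into the right-hand side. The outer $\tr_{\tilde{H}}$ applied to $\ket{l}\bra{k}_{\tilde{H}}\ket{i}\bra{j}_{\tilde{H}}$ produces $\delta_{ki}\delta_{jl}$, so the expression collapses to $\sum_z E_z \ket{k}\bra{l} E_z^* = \mathcal{M}(\ket{k}\bra{l}_H)$ after cancelling the prefactor $d^2 \cdot \tfrac{1}{d} \cdot \tfrac{1}{d} = 1$. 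By linearity in $W_H$, the identity then extends from the basis to all operators.

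The argument is essentially bookkeeping; the only point requiring a moment of care is the index swap (the transpose) occurring in the ricochet step, but this is a standard consequence of \cref{eq:Schmidt} and does not constitute a real obstacle. A more abstract alternative would be to invoke directly the transpose trick $(\mathrm{id}_H \otimes A_{\tilde{H}})\ket{\psi}_{H\tilde{H}} = (A^T \otimes \mathrm{id}_{\tilde{H}})\ket{\psi}_{H\tilde{H}}$ to manipulate the right-hand side without ever passing to Kraus operators, but the basis-level computation is equally short.
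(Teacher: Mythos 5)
Your proposal is correct and follows essentially the same route as the paper: expand $\psi_{H\tilde H}$ in the Schmidt basis of \cref{eq:Schmidt}, collect the resulting delta functions, and reassemble $\mathcal{M}(W_H)$ by linearity. The only cosmetic differences are that you reduce to basis operators $\ket{k}\bra{l}_H$ first and insert a Kraus representation (which is not actually needed — linearity of $\mathcal{M}$ suffices, as in the paper's version), while the paper keeps $W_H$ general throughout.
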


\begin{proof}
	Using \cref{def:CJstate} and \cref{eq:Schmidt}, we can verify the claim by a direct calculation (states with the same colour yield a $\delta$-function of the corresponding labels):
		\begin{align}
			d^2\,\mathrm{tr}_{\tilde{H}}&\Big(\mathrm{tr}_H\big(W_H\psi_{H\tilde{H}}\big)\rho_{K\tilde{H}})\Big)\\
			&=d^2\,\mathrm{tr}_{\tilde{H}}\Big(\mathrm{tr}_H\big(W_H\psi_{H\tilde{H}}\big)\mathcal{M}(\psi_{H\tilde{H}})\Big)\\
			&=\sum_{i,i',j,j'}\mathrm{tr}_{\tilde{H}}\Big(\mathrm{tr}_H\big(W_H(\ket{i}_H\ket{i}_{\tilde{H}}\bra{i'}_H\bra{i'}_{\tilde{H}})\big)\mathcal{M}(\ket{j}_H\ket{j}_{\tilde{H}}\bra{j'}_H\bra{j'}_{\tilde{H}})\Big)\\
			&=\sum_{k,\tilde{k}}\sum_{i,i',j,j'}{\color{mygreen}\bra{\tilde{k}}_{\tilde{H}}}\Big(\bra{k}_H\big(W_H(\ket{i}_H{\color{mygreen}\ket{i}_{\tilde{H}}}{\color{myorange}\bra{i'}_H}{\color{myblue}\bra{i'}_{\tilde{H}}}){\color{orange}\ket{k}_H}\big)\mathcal{M}(\ket{j}_H{\color{myblue}\ket{j}_{\tilde{H}}}\bra{j'}_H{\color{myred}\bra{j'}_{\tilde{H}}})\Big){\color{myred}\ket{\tilde{k}}_{\tilde{H}}}\\
			&=\sum_{i,j}\bra{j}_HW_H\ket{i}_H\mathcal{M}(\ket{j}_H\bra{i}_H)\\
			&=\mathcal{M}\Big(\sum_{i,j} \ket{j}_H \bra{j}_H W_H\ket{i}_H\bra{i}_H\Big)\\
			&=\mathcal{M}\big(W_H\big).
		\end{align}
\end{proof}

\begin{lem} \label{lem:CJtracepreserving}
	A CP map $\mathcal{M}:H\to K$ is trace non-increasing if and only if its corresponding C.-J.~operator $\rho_{K\tilde{H}}$ fulfils $\rho_{\tilde{H}}\le\mixed{\tilde{H}}$. It is trace-preserving if and only if $\rho_{\tilde{H}}=\mixed{\tilde{H}}$.
\end{lem}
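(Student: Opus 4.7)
The plan is to leverage the inversion formula from \cref{lem:CJrepresentation}, which recovers $\mathcal{M}$ from $\rho_{K\tilde H}$, together with an explicit formula for $\mathrm{tr}(\mathcal{M}(W_H))$ obtained by tracing out $K$. The TP case follows from a clean uniqueness argument: $\mathcal{M}$ is TP precisely when the scalar CP map $\tr_K \circ \mathcal{M} : H \to \mathbb{C}$ equals $\tr_H$; by \cref{lem:CJrepresentation} two CP maps coincide iff their C.-J.~operators coincide; and these operators are $\tr_K(\rho_{K\tilde H}) = \rho_{\tilde H}$ for the former and $\tr_H(\psi_{H\tilde H}) = \mixed{\tilde H}$ for the latter. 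This immediately yields the TP equivalence in both directions.

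For the trace non-increasing case, the uniqueness trick does not apply directly, because the difference $\tr_H - \tr_K \circ \mathcal{M}$ need not be CP. Instead, I would trace out $K$ in the inversion formula to obtain the explicit identity
\begin{equation}
  \tr(\mathcal{M}(W_H)) = d^2\, \tr_{\tilde H}\bigl(\tr_H(W_H \psi_{H\tilde H})\, \rho_{\tilde H}\bigr),
\end{equation}
and define the auxiliary map $\sigma_{\tilde H}(W_H) := d\, \tr_H(W_H \psi_{H\tilde H})$. A direct calculation in the Schmidt basis \cref{eq:Schmidt} shows that $\sigma_{\tilde H}$ is the transpose map in that basis, sending $\ket{\phi}\bra{\phi}_H$ to $\ket{\bar\phi}\bra{\bar\phi}_{\tilde H}$ where $\ket{\bar\phi}_{\tilde H} = \sum_i \langle \phi | i \rangle_H \ket{i}_{\tilde H}$. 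In particular, $\sigma_{\tilde H}$ restricts to a bijection between the positive cones of $H$ and $\tilde H$, and one easily checks $\tr_{\tilde H}(\sigma_{\tilde H}(W_H)) = \tr(W_H)$.

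Putting these pieces together, the trace non-increasing condition $\tr(\mathcal{M}(W_H)) \le \tr(W_H)$ for all $W_H \ge 0$ becomes
\begin{equation}
  \tr_{\tilde H}\bigl(V_{\tilde H}(\mixed{\tilde H} - \rho_{\tilde H})\bigr) \ge 0 \qquad \forall\, V_{\tilde H} \ge 0,
\end{equation}
which by the standard duality characterisation of the positive semidefinite cone is equivalent to $\rho_{\tilde H} \le \mixed{\tilde H}$. The step that needs the most care is verifying the transpose identification for $\sigma_{\tilde H}$ relative to the specific basis underlying \cref{eq:Schmidt} and confirming that it maps the positive cone of $H$ onto all of the positive cone of $\tilde H$; beyond this bookkeeping, the argument is routine and no new ideas are required.
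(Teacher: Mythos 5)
Your proof is correct, but it takes a genuinely different route from the paper's. The paper proves the forward direction (trace non-increasing $\Rightarrow \rho_{\tilde H}\le\mixed{\tilde H}$) by an explicit Kraus-operator computation, showing $\rho_{\tilde H}=\tr_H\bigl(\psi_{H\tilde H}\,\sum_z E_z^*E_z\otimes\mathrm{id}_{\tilde H}\bigr)$ and invoking $\sum_z E_z^*E_z\le\mathrm{id}_H$; it then proves the converse separately via the inversion formula of \cref{lem:CJrepresentation}, with the TP statement obtained as the equality case of both inequalities. You avoid Kraus representations entirely: for the TP case you observe that $\mathcal{M}$ is TP iff the scalar maps $\tr_K\circ\mathcal{M}$ and $\tr_H$ coincide, which by the injectivity of the C.-J.~correspondence (a consequence of \cref{lem:CJrepresentation}) reduces to $\rho_{\tilde H}=\mixed{\tilde H}$; and for the trace non-increasing case you convert the condition into $\tr_{\tilde H}\bigl(V_{\tilde H}(\mixed{\tilde H}-\rho_{\tilde H})\bigr)\ge 0$ for all $V_{\tilde H}\ge 0$ via the transpose map $W_H\mapsto d\,\tr_H(W_H\psi_{H\tilde H})$, and conclude by self-duality of the positive semidefinite cone. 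Your approach buys a unified, single-stroke treatment of both directions of each equivalence (the paper needs two separate arguments), at the cost of the bookkeeping you flag: verifying that the transpose map is a bijection of the positive cones, which is needed for the cone-duality step to be an equivalence rather than just one implication. The paper's Kraus computation is more elementary and makes the connection to the condition $\sum_z E_z^*E_z\le\mathrm{id}_H$ from \cref{rem:TP} explicit, which is reused elsewhere; both proofs ultimately rest on \cref{lem:CJrepresentation} for at least one direction. I see no gap in your argument.
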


\begin{proof}
	Let 
	\begin{align}
	  \mathcal{M}(W_H) = \sum_z E_z W_H E_z^* 
	\end{align} 
	be the Kraus representation of $\mathcal{M}$ with Kraus operators $E_z$. We then have
	\begin{align}
	    \rho_{\tilde{H}}= \mathrm{tr}_K\bigl(\mathcal{M}(\psi_{H\tilde{H}})\bigr)
	    & = \sum_z \mathrm{tr}_K\bigl((E_z \otimes \mathrm{id}_{\tilde{H}}) \psi_{H\tilde{H}} (E_z^* \otimes \mathrm{id}_{\tilde{H}})\bigr) \\
	    & =  \sum_z \mathrm{tr}_H\bigl( \psi_{H\tilde{H}} (E_z^* E_z \otimes \mathrm{id}_{\tilde{H}}) \bigr) \\
	    & = \mathrm{tr}_H\bigl(\psi_{H\tilde{H}}  \sum_z E_z^* E_z \otimes \mathrm{id}_{\tilde{H}}  \bigr),
	\end{align}
	where we used the cyclicity of the trace. Because, for trace non-increasing maps, $\sum_x E_x^* E_x \leq \id_{H}$, we find
	\begin{align}
	  \rho_{\tilde{H}} \leq  \mathrm{tr}_H\bigl(\psi_{H\tilde{H}} ( \mathrm{id}_H \otimes \mathrm{id}_{\tilde{H}})  \bigr) 
	  = \mathrm{tr}_H\bigl(\psi_{H\tilde{H}}\bigr) = \mixed{\tilde{H}}.
	\end{align}		
	The inequality becomes an equality if $\mathcal{M}$ is trace-preserving.
	
	To show the other direction, suppose $\rho_{\tilde{H}}\le\mixed{\tilde{H}}$. Then, using \cref{lem:CJrepresentation},
		\begin{align}
			\mathrm{tr}_{K}\big(\mathcal{M}(W_H)\big)
			&=d^2\, \mathrm{tr}_K\Big[\mathrm{tr}_{\tilde{H}}\Big(\mathrm{tr}_H(W_H\psi_{H\tilde{H}})\rho_{K\tilde{H}}\Big)\Big]\\
			&=d^2\, \mathrm{tr}_{\tilde{H}}\Big(\mathrm{tr}_H(W_H\psi_{H\tilde{H}})\underbrace{\mathrm{tr}_K(\rho_{K\tilde{H}})}_{\le \mixed{\tilde{H}}}\Big)\\
			&\le d\, \mathrm{tr}_{\tilde{H}}\big(\mathrm{tr}_H(W_H\psi_{H\tilde{H}})\big) \label{eq:le_id2}\\
			&= d\, \mathrm{tr}_H\big(W_H\underbrace{\mathrm{tr}_{\tilde{H}}(\psi_{H\tilde{H}}}_{=\mixed{H}})\big)\\
			&=\mathrm{tr}_H(W_H).
		\end{align}
	The inequality in \cref{eq:le_id2} becomes an equality if $\rho_{\tilde{H}}=\mixed{\tilde{H}}$.
\end{proof}

\begin{lem}
	\label{lem:productform}
	A CP map $\mathcal{M}:H_A \otimes H_B \to K_A \otimes K_B$, with $H_A \otimes H_B$ finite-dimensional, has product form $\mathcal{M}^{(A)}\otimes\mathcal{M}^{(B)}$, where $\mathcal{M}^{(A)}:H_A\to K_A$ and $\mathcal{M}^{(B)}:H_B\to K_B$ if and only if its C.-J.~operator $\rho_{K_A K_B \tilde{H}_A \tilde{H}_B}$ has product form $\rho_{K_A \tilde{H}_A} \otimes \rho_{K_B \tilde{H}_B}$.
\end{lem}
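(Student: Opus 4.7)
The plan is to use the Choi-Jamio\l{}kowski isomorphism as summarised in \cref{rem:CJ} and \cref{lem:CJrepresentation}. The crucial structural input is the observation recorded at the end of the first paragraph of \cref{rem:CJ}: because $H = H_A \otimes H_B$ factorises, the fixed maximally entangled state $\psi_{H\tilde{H}}$ factorises correspondingly as $\psi_{H_A\tilde{H}_A} \otimes \psi_{H_B\tilde{H}_B}$ on a matching factorisation $\tilde{H} = \tilde{H}_A \otimes \tilde{H}_B$.

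For the forward direction, I would simply suppose $\mathcal{M} = \mathcal{M}^{(A)} \otimes \mathcal{M}^{(B)}$ and compute
\begin{align}
  \rho_{K_A K_B \tilde{H}_A \tilde{H}_B}
  &= \bigl(\mathcal{M}^{(A)} \otimes \mathcal{M}^{(B)}\bigr)\bigl(\psi_{H_A \tilde{H}_A} \otimes \psi_{H_B \tilde{H}_B}\bigr) \\
  &= \mathcal{M}^{(A)}(\psi_{H_A \tilde{H}_A}) \otimes \mathcal{M}^{(B)}(\psi_{H_B \tilde{H}_B}) \\
  &= \rho_{K_A \tilde{H}_A} \otimes \rho_{K_B \tilde{H}_B},
\end{align}
using only that $\mathcal{M}^{(A)}$ and $\mathcal{M}^{(B)}$ act on disjoint tensor factors.

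For the converse, I would let $\rho_{K_A K_B \tilde{H}_A \tilde{H}_B} = \rho_{K_A \tilde{H}_A} \otimes \rho_{K_B \tilde{H}_B}$ and define $\mathcal{M}^{(A)} : H_A \to K_A$ and $\mathcal{M}^{(B)}: H_B \to K_B$ via \cref{lem:CJrepresentation} applied to $\rho_{K_A \tilde{H}_A}$ and $\rho_{K_B \tilde{H}_B}$, respectively; these maps are CP because their C.-J.\ operators are positive. Then the forward direction, applied to $\mathcal{M}^{(A)} \otimes \mathcal{M}^{(B)}$, shows that this tensor product has C.-J.\ operator $\rho_{K_A \tilde{H}_A} \otimes \rho_{K_B \tilde{H}_B}$, which by assumption coincides with the C.-J.\ operator of $\mathcal{M}$. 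The explicit inversion formula in \cref{lem:CJrepresentation} then implies $\mathcal{M} = \mathcal{M}^{(A)} \otimes \mathcal{M}^{(B)}$, since that formula recovers a CP map uniquely from its C.-J.\ operator.

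There is no real obstacle here: the lemma is essentially a direct corollary of the C.-J.\ isomorphism once one has the factorised form \cref{eq:psidecomposed} of $\ket{\psi}_{H\tilde{H}}$ for the factorised space $H = H_A \otimes H_B$. The only point that requires mild care is ensuring that the factorisation of $\tilde{H}$ is the one induced by the factorisation of $H$ through the fixed choice of $\psi_{H\tilde{H}}$, which is exactly the content of the special case highlighted at the end of \cref{rem:CJ}.
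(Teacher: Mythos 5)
Your proposal is correct and follows essentially the same route as the paper: the forward direction is the identical direct computation using the factorisation $\psi_{H\tilde H}=\psi_{H_A\tilde H_A}\otimes\psi_{H_B\tilde H_B}$ from \cref{rem:CJ}, and the converse rests on the same inversion formula of \cref{lem:CJrepresentation}. The only (immaterial) difference is that the paper proves the converse by expanding the inversion formula and showing the trace splits into the two tensor factors, whereas you define the candidate maps $\mathcal{M}^{(A)},\mathcal{M}^{(B)}$ first and invoke the forward direction together with the injectivity of the C.-J.\ correspondence.
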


\begin{proof}
 According to \cref{rem:CJ}, the entangled state $\ket{\psi}_{H \tilde{H}}$ used for the C.-J.~isomorphism can be decomposed as
\begin{align} \label{eq:maximalentangledproduct}
  \ket{\psi}_{H_AH_B\tilde{H}_A\tilde{H}_B}=\ket{\psi}_{H_A\tilde{H}_A}\otimes\ket{\psi}_{H_B\tilde{H}_B},
\end{align}
where $\ket{\psi}_{H_A\tilde{H}_A}$ and $\ket{\psi}_{H_B\tilde{H}_B}$ are maximally entangled states on the respective subsystems.  Suppose now that $\mathcal{M}=\mathcal{M}^{(A)}\otimes\mathcal{M}^{(B)}$. The corresponding C.-J.~operator is then given by 
	\begin{align}
		\rho_{K_AK_B\tilde{H}_A\tilde{H}_B}&=\mathcal{M}\big(\psi_{H_AH_B\tilde{H}_A\tilde{H}_B}\big)\\
		&=\mathcal{M}^{(A)}\otimes\mathcal{M}^{(B)}\big(\psi_{H_A\tilde{H}_A}\otimes\psi_{H_B\tilde{H}_B}\big)\\
		&=\mathcal{M}^{(A)}\big(\psi_{H_A\tilde{H}_A}\big)\otimes\mathcal{M}^{(B)}\big(\psi_{H_B\tilde{H}_B}\big)\\
		&\eqqcolon\rho_{K_A\tilde{H}_A}\otimes\rho_{K_B\tilde{H}_B},
	\end{align} 
	hence it has product form. 
		
	For the other direction, suppose that the C.-J.~operator is of the form $\rho_{K_AK_B\tilde{H}_A\tilde{H}_B}=\rho_{K_A\tilde{H}_A}\otimes\rho_{K_B\tilde{H}_B}$. The corresponding map is then given by 
	\begin{align}
		&\mathcal{M}(W_{H_AH_B})\\
		&=d^2\,\mathrm{tr}_{\tilde{H}_A\tilde{H}_B}\Big(\mathrm{tr}_{H_AH_B}\big(W_{H_AH_B}\psi_{H_AH_B\tilde{H}_A\tilde{H}_B}\big)\rho_{K_AK_B\tilde{H}_A\tilde{H}_B}\Big)\\
		&=(\dim(H_A)\dim(H_B))^2\,\mathrm{tr}_{\tilde{H}_A\tilde{H}_B}\Big(\mathrm{tr}_{H_AH_B}\big(W_{H_AH_B}\psi_{H_A\tilde{H}_A}\otimes\psi_{H_B\tilde{H}_B}\big)\rho_{K_A\tilde{H}_A}\otimes\rho_{K_B\tilde{H}_B}\Big)\\
		&=\Big(\dim(H_A)^2\tr_{\tilde{H}_AH_A}\big(\psi_{H_A\tilde{H}_A}\rho_{K_A\tilde{H}_A}\big)\otimes\dim(H_B)^2\tr_{\tilde{H}_BH_B}\big(\psi_{H_B\tilde{H}_B}\rho_{K_B\tilde{H}_B}\big)\Big)(W_{H_AH_B})\\
		&\eqqcolon\big(\mathcal{M}^{(A)}\otimes\mathcal{M}^{(B)}\big)(W_{H_AH_B}),
	\end{align}
	hence it has product form.
\end{proof}

\section{Necessity of the unitality condition}
\label{app:unitality}

In \cref{thm:tensorP}, the unitality condition (Condition~\ref{cond:2}; see also the slightly weaker requirement mentioned in \cref{rem:unitalrequirement}), is necessary (see \cref{thm:converse}). The same condition also occurs in \cref{cor:commute}. Its necessity  can be illustrated with the following example of maps for \cref{cor:commute}.

Let $A$, $B$, and $H = (I, J, K)$ be classical random variables, where $I$, $J$, and $K$ take values $i,j\in\{0,1\}$ and $k\in\{0,1\}^{2}\cup\{\perp\}$. Define the maps ${\mathcal{X}}:H\to A\otimes H$ and ${\mathcal{Y}}:H\to B\otimes H$ as follows.
	\begin{align}
		{\mathcal{X}}:\hspace{10pt}&\text{if }k=\perp\text{ then }a\in_R\{0,1\}; (k_1,k_2)\coloneqq(a,i)\\
		&\text{else }a\coloneqq k_1\oplus(i\cdot k_2) \\
		{\mathcal{Y}}:\hspace{10pt}&\text{if }k=\perp\text{ then }b\in_R\{0,1\}; (k_1,k_2)\coloneqq(b,j)\\
		&\text{else }b\coloneqq k_1\oplus(j\cdot k_2) 
	\end{align}
where $a\in_R\{0,1\}$ means that $a$ is chosen uniformly at random.

These maps fulfil all conditions in \cref{cor:commute} except for the unitality condition: Firstly, they are CP and TP since they are defined in terms of functions of variables. Furthermore, it is clear from their definitions that $\mathrm{tr}_{H} \circ {\mathcal{X}}$ and $\mathrm{tr}_H \circ {\mathcal{Y}}$ are independent of $J$ and $I$, respectively, i.e., Condition~\ref{cond:c3} holds. Finally, they commute: This is obvious for the input $(i,j,k\neq\perp)$ since in this case, the maps do not change the value of $k$. For $(i,j,k=\perp)$, the output on $A$ and $B$ of ${\mathcal{Y}}\circ{\mathcal{X}}$ is $a=r$ and $b=r\oplus (j\cdot i)$, where $r$ is a uniform random bit. Hence, $a$ and $b$ are both uniformly random bits with the correlation $a\oplus b=i\cdot j$. On the other hand, the output of ${\mathcal{X}}\circ{\mathcal{Y}}$ is $b=r$ and $a=r\oplus (i\cdot j)$. Again, this means that $a$ and $b$ are both uniform random bits with the correlation $a\oplus b=i\cdot j$. Thus, the probability distribution $\mathrm{Pr}_{AB|IJK}(\cdot,\cdot|i,j,\perp)$ is identical for ${\mathcal{Y}}\circ{\mathcal{X}}$ and ${\mathcal{X}}\circ{\mathcal{Y}}$, hence the maps satisfy Condition~\ref{cond:c1} of \cref{cor:commute}. 

However, $\mathrm{tr}_{A}\circ{\mathcal{X}}$ is not unital, i.e., uniform distributions are not mapped to uniform distributions. This can be seen from the fact that the output on $K$ is never $k=\perp$, hence the probability of $k=\perp$ is zero. In particular, the probability distribution of the output $K$ is never uniform. By symmetry, $\mathrm{tr}_{B}\circ{\mathcal{Y}}$ is not unital, either. Hence, Condition~\ref{cond:c2} is violated.

The question is now whether it is still possible to find a CPTP map $\mathcal{D}$ such that $\mathrm{tr}_H\circ{\mathcal{Y}}\circ{\mathcal{X}}$ can be written as $(\overline{\mathcal{X}}\otimes\overline{\mathcal{Y}})\circ\mathcal{D}$ as in \cref{cor:commute}, even though the unitality condition is not fulfilled. This is equivalent to asking whether the input-output behaviour described above can be generated with a setup as depicted on the right-hand side of \cref{fig:corollary}. Note that this setup corresponds to that of a CHSH game (i.e., a Bell test), where Alice and Bob each have local inputs $I$ and $J$, respectively, and outputs $A$ and $B$. Furthermore, each of them has access to one part of a bipartite quantum system $K \otimes K$, which may, for example, be prepared in a maximally entangled state. Here, Alice's output $A$ is independent of Bob's input $J$, and Bob's output $B$ is independent of Alice's input $I$. 

Crucially, the input-output behaviour defined by $\mathrm{Pr}_{AB|IJK}$ for $k=\perp$ corresponds to a PR box \cite{PopescuRohrlich94}, whose characteristics is that it always fulfils the winning condition, $a\oplus b=i\cdot j$, of the CHSH game. However, from \cite{Tsirelson1980} we know that this condition can only be fulfilled with a probability $\approx 85\%$ while \cref{cor:commute} would imply that it is fulfilled with certainty if it were applicable. 

We conclude that the unitality condition is necessary for \cref{cor:commute}, i.e., it cannot be dropped without replacement. Since the corollary is an implication of \cref{thm:tensorP}, this also shows that the unitality condition is necessary for \cref{thm:tensorP}, i.e., the claim of the theorem would be wrong if Condition~\ref{cond:2} were dropped---a fact that also follows from our converse statement, \cref{thm:converse}.

\bibliographystyle{halpha}
\bibliography{Theorembib}

\end{document}